%% file: main.tex
\documentclass[a4paper,UKenglish,cleveref,autoref,thm-restate]{lipics-v2021}

\allowdisplaybreaks

\hideLIPIcs  

\title{Strong Normalisation for Asynchronous Effects}

\author{Danel Ahman}{Institute of Computer Science, University of Tartu, Estonia}{danel.ahman@ut.ee}{https://orcid.org/0000-0001-6595-2756}{}

\author{Ilja Sobolev}{Institute of Mathematics and Statistics, University of Tartu, Estonia}{ilja.sobolev@ut.ee}{https://orcid.org/0009-0000-0939-6388}{} 

\authorrunning{D. Ahman and I. Sobolev}

\Copyright{Danel Ahman and Ilja Sobolev}

\ccsdesc[100]{Theory of computation~Operational semantics} 

\keywords{Strong normalisation, Girard-Tait method, reducibility, asynchronous effects} 

\category{} 

\supplementdetails[subcategory={Formalisation}]{Software}{https://doi.org/10.5281/zenodo.19695414}

\funding{This work was supported by the Estonian Research Council grant PRG2764.}

\nolinenumbers 

\EventEditors{Frank Pfenning}
\EventNoEds{1}
\EventLongTitle{11th International Conference on Formal Structures for Computation and Deduction (FSCD 2026)}
\EventShortTitle{FSCD 2026}
\EventAcronym{FSCD}
\EventYear{2026}
\EventDate{July 20--23, 2026}
\EventLocation{Lisbon, Portugal}
\EventLogo{}
\SeriesVolume{378}
\ArticleNo{11}

\input{macros.tex}

\begin{document}

\maketitle

\begin{abstract}
  Asynchronous effects of Ahman and Pretnar complement the conventional
  synchronous treatment of algebraic effects with asynchrony based on decoupling
  the execution of algebraic operation calls into signalling that an operation's
  implementation needs to be executed, and into interrupting a running
  computation with the operation's result, to which the computation can react by
  installing matching interrupt handlers. 
  Beyond providing asynchrony for algebraic effects, the resulting core calculus
  also naturally models examples such as pre-emptive multi-threading,
  (cancellable) remote function calls, and multi-party applications.
  In this paper, we study the normalisation properties of this calculus.
  We prove that if one removes general recursion from it, then the remaining
  calculus is strongly normalising, including both its sequential and parallel
  parts.
  To cover more interesting programs, we also prove that the sequential part of the
  calculus remains strongly normalising when a controlled amount of
  interrupt-driven recursive behaviour is reintroduced. Our normalisation proofs
  are structured compositionally as an extension of Lindley and Stark's
  $\top\top$-lifting-based approach for proving strong normalisation of
  effectful languages. All our results are also formalised in Agda.
\end{abstract}

\input{introduction}
\input{asynchronous-effects}
\input{normalisation-sequential}
\input{normalisation-reinstallable}
\input{normalisation-parallel}

\input{conclusion}

\bibliography{references}

\appendix
\input{appendix}

\end{document}

%% file: macros.tex
\usepackage{bbold} 
\usepackage{graphicx}
\usepackage{lineno}
\usepackage{listings}
\usepackage{stmaryrd}
\usepackage{thmtools}
\usepackage{upquote}
\usepackage{xcolor}
\usepackage[all]{xy}
\usepackage[most]{tcolorbox}

\usepackage{mathpartir} 

\usepackage{amssymb} 
\usepackage{framed} 
\usepackage{mathtools} 
\usepackage{semantic} 

\usepackage{todonotes} 

\usepackage{float} 

\usepackage{amsmath}

\definecolor{codegreen}{rgb}{0,0.6,0}
\definecolor{codegray}{rgb}{0.5,0.5,0.5}
\definecolor{codepurple}{rgb}{0.58,0,0.82}
\definecolor{backcolour}{rgb}{0.95,0.95,0.92}

\definecolor{keywordColor}{rgb}{0.0,0.0,0.5} 
\definecolor{rulenameColor}{rgb}{0.5,0.5,0.5} 

\newcommand{\defeq}{\mathrel{\overset{\text{\tiny def}}{=}}} 

\newcommand{\lambdaAEff}{$\lambda_{\text{\ae}}$} 

\newcommand{\bnfis}{\mathrel{\;{:}{:}{=}\ }}
\newcommand{\bnfor}{\mathrel{\,\;\big|\ \ \!}}

\newcommand{\opsym}[1]{\mathsf{#1}} 
\newcommand{\op}{\opsym{op}} 

\newcommand{\sig}{\Sigma} 

\renewcommand{\o}{o} 
\renewcommand{\i}{\iota} 

\newcommand{\opincomp}[2]{{\mathsf{#1}}\,{\tmkw{\downarrow}}\,#2} 

\newcommand{\at}{\mathbin{!}} 
\newcommand{\att}{\mathbin{!!}} 

\newcommand{\tysym}[1]{\mathsf{#1}}
\newcommand{\tybase}{\tysym{b}} 
\newcommand{\tyunit}{\mathsf{unit}} 
\newcommand{\typrod}[2]{#1 \times #2} 
\newcommand{\tysum}[2]{#1 + #2} 
\newcommand{\tyfun}[2]{#1 \to #2} 
\newcommand{\typromise}[1]{\langle #1 \rangle} 

\newcommand{\tycomp}[2]{#1 \at #2} 

\newcommand{\tyrun}[3]{#1 \att (#2,#3)} 
\newcommand{\typar}[2]{#1 \mathbin{\tmkw{\vert\vert}}  #2} 
\newcommand{\tyC}{C} 
\newcommand{\tyD}{D} 

\newcommand{\tm}[1]{\mathsf{#1}} 
\newcommand{\tmkw}[1]{\tm{\color{keywordColor}#1}} 

\newcommand{\tmpromise}[1]{\langle #1 \rangle} 

\newcommand{\tmunit}{()} 
\newcommand{\tmpair}[2]{( #1 , #2 )} 
\newcommand{\tminl}[2][]{\tmkw{inl}_{#1}\,#2} 
\newcommand{\tminr}[2][]{\tmkw{inr}_{#1}\,#2} 
\newcommand{\tmfun}[2]{{\mathop{\tmkw{fun}}}\; (#1) \mapsto #2} 
\newcommand{\tmapp}[2]{#1\,#2} 

\newcommand{\tmreturn}[2][]{\tmkw{return}_{#1}\, #2} 
\newcommand{\tmlet}[3]{\tmkw{let}\; #1 = #2 \;\tmkw{in}\; #3} 

\newcommand{\tmop}[4]{\tm{#1}\;(#2, #3. #4)} 
\newcommand{\tmopin}[3]{\tmkw{\downarrow}\, \tm{#1}\,(#2, #3)} 
\newcommand{\tmopout}[3]{\tmkw{\uparrow}\,\tm{#1}\, (#2, #3)} 
\newcommand{\tmopoutbig}[3]{\tmkw{\uparrow}\,\tm{#1}\, \big(#2, #3\big)} 

\newcommand{\tmmatch}[3][]{\tmkw{match}\;#2\;\tmkw{with}\;\{#3\}_{#1}} 

\newcommand{\tmawait}[3]{\tmkw{await}\;#1\;\tmkw{until}\;\tmpromise{#2}\;\tmkw{in}\;#3} 

\newcommand{\tmwith}[5]{\tmkw{promise}\; (\tm{#1}\; #2 \mapsto #3)\; \tmkw{as}\; #4\; \tmkw{in}\; #5} 
\newcommand{\tmwithbig}[5]{\tmkw{promise}\; \big(\tm{#1}\; #2 \mapsto #3\big)\; \tmkw{as}\; #4\; \tmkw{in}\; #5} 
\newcommand{\tmwithre}[6]{\tmkw{promise}\; (\tm{#1}\; #2 \, #3 \mapsto #4)\; \tmkw{as}\; #5\; \tmkw{in}\; #6} 

\newcommand{\tmrun}[1]{\tmkw{run}\; #1} 
\newcommand{\tmpar}[2]{#1 \mathbin{\tmkw{\vert\vert}} #2} 

\newcommand{\reduces}{\leadsto} 

\newcommand{\E}{\mathcal{E}} 
\newcommand{\F}{\mathcal{F}} 

\newcommand{\types}{\vdash} 
\newcommand{\of}{\mathinner{:}} 

\newcommand{\coopinfer}[3]{\inferrule*[Lab={\color{rulenameColor}#1}]{#2}{#3}}

\newcommand{\order}[1]{\sqsubseteq_{#1}} 

\makeatletter
\providecommand*{\cupdot}{
  \mathbin{%
    \mathpalette\@cupdot{}%
  }%
}
\newcommand*{\@cupdot}[2]{%
  \ooalign{%
    $\m@th#1\cup$\cr
    \hidewidth$\m@th#1\cdot$\hidewidth
  }%
}
\makeatother

\definecolor{redexColor}{rgb}{0.83, 0.83, 0.83} 

\newcommand{\VRed}{\mathbf{VRed}}
\newcommand{\CRed}{\mathbf{CRed}}
\newcommand{\KRed}{\mathbf{KRed}}

\newcommand{\SRed}{\mathbf{SRed}}
\newcommand{\ARed}{\mathbf{ARed}}

\newcommand{\capp}{\mathbin{@}}
\newcommand{\Id}{\textit{Id}}

\newcommand{\copin}[2]{\tmkw{\downarrow}\, \tm{#1}\,(#2)}

\newcommand{\Ren}[2]{\mathsf{Ren}\;#1\;#2}

\newcommand{\SN}{\mathcal{SN}}

\newcommand{\copout}[2]{\tmkw{\uparrow}\, \tm{#1}\,(#2)}

\newcommand{\xleadsto}[2][]{\ext@arrow 0359\leadstofill@{#1}{#2}}

\newcommand{\sizei}[1]{\lvert #1 \rvert_i}
\newcommand{\tmin}{\tmkw{\downarrow}\,}
\newcommand{\tmout}{\tmkw{\uparrow}\,}

%% file: introduction.tex
\section{Introduction}
\label{sec:intro}

It is interesting, and often even crucial, to know when certain classes of
programs are guaranteed to terminate and not enter an infinite loop.
Even in inherently non-terminating programming languages, it is
useful to know that certain fragments are strongly
normalising, e.g., to guarantee that user queries or sensor readings are
processed in finite time.
In this paper, we study the normalisation properties of a core
$\lambda$-calculus proposed by Ahman and Pretnar~\cite{Ahman:POPL,Ahman:LMCS},
called~\lambdaAEff, which models asynchronous programming with algebraic
effects. This calculus was originally designed to address the
synchrony in the operational treatment of algebraic
effects~\cite{Bauer:AlgebraicEffects,Kammar:Handlers,Plotkin:HandlingEffects},
in which a program whose execution encounters an operation call $\tmop {op} V y
M$ is blocked until some implementation of $\op$ finishes executing and unblocks
$M$.
Importantly, however, Ahman and Pretnar also showed that \lambdaAEff~naturally models a number of
other examples, such as pre-emptive multi-threading, (cancellable) remote
function calls, multi-party applications, and even a parallel
variant of runners of algebraic effects~\cite{Ahman:Runners}.

The \lambdaAEff-calculus of Ahman and Pretnar addresses the forced synchrony by
decoupling the different phases involved in executing operation calls into
separate programming abstractions, giving the programmer fine-grained control over
when the execution should block, and when blocking is not needed. First,
programmers can issue \emph{outgoing signals} $\tmopout{op}{V}{M_{signal}}$,
with which to indicate that some implementation of $\op$ needs to be executed.
Next, from the perspective of $\op$'s implementations, these signals become
\emph{incoming interrupts} $\tmopin{op}{V}{M_{interrupt}}$, to which these
implementations can react by installing \emph{interrupt handlers}
$\tmwith{op}{x}{M_\op}{p}{N_{handler}}$, where the handler code $M_\op$ is
executed only when a corresponding interrupt is intercepted by the interrupt
handler. Finally, the variable $p$ bound in the continuation $N_{handler}$ of an
interrupt handler carries a distinguished \emph{promise type} $\typromise{X}$,
which allows its value to be explicitly \emph{awaited} with
$\tmawait{p}{x}{N_{await}}$, blocking the execution of the rest of the program
$N_{await}$ until the promise is fulfilled with a value. Asynchrony is achieved
because the continuations $M_{signal}$, $M_{interrupt}$, and $N_{handler}$ of
signals, interrupts, and interrupt handlers have non-blocking semantics,
while the continuation $N_{await}$ of awaiting has blocking semantics. 
In order to model both an operation's call site and its
implementation, the \lambdaAEff-calculus organises individual sequential
computations into parallel processes that communicate by issuing signals and
reacting to received interrupts. 

The advantage of the \lambdaAEff-calculus is that the same programming
abstractions can be used both (i) for expressing algebraic operation calls
$\tmop{op}{V}{y}{M}$, by first issuing a signal to request the execution of
$\op$'s implementation and then waiting for an interrupt with $\op$'s result to
arrive and be handled, and (ii) for defining the implementation(s) of $\op$, by
first waiting for an interrupt with the request to execute $\op$'s
implementation and then issuing a signal with the implementation's result back
to the operation's call site. Importantly, however, the call site does not have
to block immediately to await the operation's result, but it can meanwhile
asynchronously perform other tasks.
For instance, the following program
\[
  \begin{array}{@{}l}
    \tmopout{encrypt\text{-}request}{\tmpair{plaintext}{key}}{\tmreturn{\tmunit}};
    \\[1pt]
    \tmwith{encrypt\text{-}response}{ciphertext}{\tmreturn[]{\tmpromise{ciphertext}}}{promised\text{-}ciphertext}{
    \\[1pt]
    \ldots
    \\[1pt]
    \tmawait{promised\text{-}ciphertext}{ciphertext}{M}}
  \end{array}
\]
illustrates how the call site of some (costly) encryption operation $\tm{encrypt} :
\typrod{\tm{string}}{\tm{string}} \to \tm{string}$ can be modelled by splitting
it into issuing an $\tm{encrypt\text{-}request} :
\typrod{\tm{string}}{\tm{string}}$ signal and installing an interrupt handler to
await an $\tm{encrypt\text{-}response} : \tm{string}$ interrupt, after which
the computation can perform other tasks (here denoted by $\ldots$) without
blocking. The computation only blocks when it reaches the $\tmkw{await}$
whose continuation depends on the ciphertext promised by $\tm{encrypt}$. We
can then implement $\tm{encrypt}$ in another process, e.g., as
\[
  \begin{array}{@{}l}
    \tmwithbig{encrypt\text{-}request}{\tmpair{plaintext}{key}}{
    \\[1pt]
    \quad\tmlet{ciphertext}{perform\text{-}encryption\tmpair{plaintext}{key}}{
    \\[1pt]
    \quad\tmopout{encrypt\text{-}response}{ciphertext}{\tmreturn[]{\tmpromise{\tmunit}}}}
    \\[1pt]
    }{p}{\tmreturn[]{p}}
  \end{array}
\]
by dually first awaiting an $\tm{encrypt\text{-}request} :
\typrod{\tm{string}}{\tm{string}}$ interrupt to arrive, and once it arrives and
the ciphertext is computed, eventually issuing the $\tm{encrypt\text{-}response}
: \tm{string}$ signal to communicate the ciphertext back to the $\tm{encrypt}$
operation's call site. We note that generalisations of interrupt handlers
discussed in \cref{sec:sn-reinstallable} also allow the interrupt handler for
$\tm{encrypt\text{-}request}$ to be reinstalled, so as to be able to process
further encryption requests.

The calculus is, however, flexible enough that not all signals have to have a
corresponding interrupt, and not all interrupts have to be responses to some
signals, allowing programmers to naturally model not only asynchronous algebraic
operation calls, but also server-like and remote code execution examples,
and moreover spontaneous behaviour, such as a user clicking a
button, a sensor sending data, or the environment pre-empting a thread.

The goal of this paper is to study the normalisation properties of the
\lambdaAEff-calculus (reviewed in \cref{sec:aeff}), to give termination
guarantees for asynchronous programs written in or modelled using it.
\emph{First}, we prove that when one removes general recursion from the original
formulation of \lambdaAEff, its sequential fragment, which models the code being
executed by individual asynchronous computations, becomes strongly normalising
(\cref{sec:sn-seq}). \emph{Second}, we show that if one reintroduces a
controlled amount of recursive behaviour into the programs via (a variant of)
the concept of reinstallable interrupt handlers (originally proposed by Ahman
and Pretnar to be able to model interesting asynchronous examples in the absence
of general recursion), then the sequential part of \lambdaAEff~remains strongly
normalising (\cref{sec:sn-reinstallable}). \emph{Third}, we establish that
without adding reinstallable interrupt handlers, the parallel part of the
general-recursion-free \lambdaAEff-calculus is also strongly normalising,
confirming the intuition that reinstallable interrupt handlers indeed add
expressive and computational power (\cref{sec:sn-parallel}).

As \lambdaAEff~is higher-order and effectful, our proofs are naturally
structured as an extension of Lindley and Stark's $\top\top$-lifting-based
approach~\cite{Lindley:TopTopLifting}, which itself is an extension of the
Girard-Tait-style type-directed reducibility approach to proving strong
normalisation of higher-order languages~\cite{proofs,tait}. We demonstrate that
the reducibility and $\top\top$-lifting-based approaches naturally extend to the
\emph{challenging features} of \lambdaAEff: it is effectful, it contains
sequential and parallel parts, it is highly non-deterministic and even
non-confluent~\cite{Ahman:POPL,Ahman:LMCS}, it includes programming abstractions
based both on algebraic
effects~\cite{Plotkin:NotionsOfComputation,Plotkin:GenericEffects} (signals,
interrupt handlers, awaiting) and effect
handling~\cite{Bauer:AlgebraicEffects,Kammar:Handlers,Plotkin:HandlingEffects}
(interrupts), it includes scoped algebraic
operations~\cite{Pirog:ScopedOperations} (interrupt handlers), it includes
reduction rules corresponding to the commutativity of algebraic operations
(signals and interrupt handlers), and it includes computations with
non-blocking continuations that bind variables (interrupt handlers).
All the work presented in this paper is also formalised in
Agda. The formalisation is available via Zenodo \cite{sobolev:formalisation}.

This paper is based on and extends the Bachelor's thesis~\cite{sobolev:bsc} of
one of the authors.

%% file: asynchronous-effects.tex
\section{\lambdaAEff: A Core Calculus for Asynchronous Algebraic Effects}
\label{sec:aeff}

We now recall the syntax and small-step operational semantics of the
\lambdaAEff-calculus without general recursion, as presented in the journal
version of Ahman and Pretnar's work~\cite{Ahman:LMCS}.

\subsection{Sequential Part}
\label{sec:aeff:seq}

The sequential part of the \lambdaAEff-calculus is based on Levy et al.'s
fine-grain call-by-value $\lambda$-calculus (FGCBV)~\cite{Levy:FGCBV}, in which \emph{terms}
are stratified into two classes, values and computations, respectively ranged over by
$V,W,\ldots$ and $M,N,\ldots$. \emph{Values} are defined by the grammar
\begin{align*}
  V, W
  \bnfis& x                                       & &\text{variable} \\
  \bnfor& \tmfun{x : X}{M}                        & &\text{function abstraction} \\
  \bnfor& \tmpromise V                            & &\text{fulfilled promise}
  \\[1ex]
  \intertext{and (sequential) \emph{computations} are defined by the grammar}
  M, N
  \bnfis& \tmreturn{V}                            & &\text{returning a value} \\
  \bnfor& \tmlet{x}{M}{N}          & &\text{sequencing} \\
  \bnfor& V\,W                                    & &\text{function application} \\
  \bnfor& \tmopout{op}{V}{M}       & &\text{outgoing signal} \\
  \bnfor& \tmopin{op}{V}{M}          & &\text{incoming interrupt} \\
  \bnfor& \tmwith{op}{x}{M}{p}{N}      & &\text{interrupt handler} \\
  \bnfor& \tmawait{V}{x}{M}             & &\text{awaiting a promise to be fulfilled}
\end{align*}
We omit terms corresponding to unit, product, and sum types as they can be
accommodated in a standard way, both in syntax and semantics, and in the
normalisation proofs; see~\cref{sec:sn-seq:sums-products}.

As is typical in FGCBV-based languages, \emph{values} include variables (drawn
from a countable set, and ranged over by $x, y, z, \ldots$) and function
abstractions, whose bodies are possibly effectful computations.
The bound variable $x$ denoting the function's parameter is annotated with its
type $X$ (defined later in this section). 
The only new value term compared to FGCBV is $\tmpromise V$, which is used to
fulfil a promise with the value $V$ to unblock blocked computations.

\emph{Computations} similarly contain standard FGCBV term-formers,
which are extended with \lambdaAEff-specific constructs. The standard
computations include returning values with $\tmreturn[]{V}$ and sequencing
computations with $\tmlet{x}{M}{N}$, and applying a function $V$ to an argument
$W$, written $V\,W$. The \lambdaAEff-specific computations include all the
programming abstractions discussed in \cref{sec:intro}: outward-propagating
signals $\tmopout{op}{V}{M}$, inward-propagating interrupts
$\tmopin{op}{V}{M}$, installed interrupt handlers $\tmwith{op}{x}{M}{p}{N}$, and
the explicit blocking until a given promise is fulfilled with
$\tmawait{V}{x}{M}$. The names $\op$ used in signals and interrupts are drawn
from an assumed finite set $\Sigma$ of operation names.

Next, \emph{well-typed values and computations} are classified by two typing
judgements: ${\Gamma \types V : X}$ and ${\Gamma \types M :
\tycomp{X}{(\o,\i)}}$. In these typing judgements, $\Gamma$ is a context of
variables of the form $x_1 \of X_1, \ldots, x_n \of X_n$, $X$ is a type of
values (returned by computations), and $(\o,\i)$ is an effect annotation
specifying which signals a computation might issue and which interrupt handlers
it might have installed (we discuss $\o$ and $\i$ in more detail later in this
section). 

The grammar of \emph{value types}
(ranged over by $X, Y, \ldots$) is given by
\[
X,Y \bnfis \tybase \bnfor \tyfun{X}{\tycomp{Y}{(\o,\i)}} \bnfor \typromise{X}
\] 
and \emph{computation types} are written as $\tycomp{X}{(\o,\i)}$. Here $\tybase$ ranges over base
types. We additionally assume that every operation $\op \in \Sigma$ is assigned
a \emph{typing signature}, written $\op \of A_\op$, where for the time being we simply let $A_\op
::= \tybase$, but note that more generally this set of \emph{ground types}
can also contain finite sums and products of base types. 
Ahman and Pretnar also showed how modal types can be used to further allow
function types among such ground types~\cite{Ahman:LMCS}.

The effect annotations $\o$ and $\i$ are drawn from sets $O$ and $I$, where $O
\defeq \mathcal{P}(\Sigma)$ and $I$ is defined as a fixed point of the (set)
functor $\Phi : \mathsf{Set} \to \mathsf{Set}$, which is given on objects by
$\Phi (Z) \defeq \Sigma \Rightarrow (O \times Z)_\bot$, where $\Rightarrow$
denotes a set of functions and $(-)_\bot$ is the lifting operation given by
disjoint union $(-) \cupdot \{\bot\}$. 
Intuitively, with $I$ defined as a fixed point of $\Phi$, each $\i \in I$ is a
nesting of partial mappings of names in $\Sigma$ to pairs of $O$- and
$I$-annotations---these classify the possible effects of the interrupt handler
code associated with a given interrupt name by some interrupt handler.
For defining the \lambdaAEff-calculus, both the least fixed point (an inductive
definition) and the greatest fixed point (a coinductive definition) of $\Phi$
work.
The least fixed point of $\Phi$ gives us annotations $\i$ with finite depth, and
the greatest fixed point gives us annotations $\i$ with possibly infinite depth.
While the latter is needed to assign types to recursive examples (see \cref{sec:sn-reinstallable}), we demonstrate
in \cref{sec:sn-parallel} that the former is useful for proving strong
normalisation of parallel processes in the absence of recursive features.

We write $\i\, (\op_i) = (\o_i,\i_i)$ to mean that the annotation $\i$ maps
$\op_i$ to $(\o_i,\i_i)$. These sets of effect annotations naturally carry
partial orders, with $\order{O}$ given by subset inclusion and $\order{I}$
defined (co)recursively pointwise. These partial orders in turn induce
component-wise a partial order on the product set $O \times I$, written $\order{O
\times I}$ and used in the sub-effecting rule. 

The effect annotations also support an action of operations
\[
\opincomp {op} {(\o , \i)}
~\defeq~
  \begin{cases}
   \left(\o \sqcup \o' , \i[\op \mapsto \bot] \sqcup \i' \right) & \mbox{if } \i\, (\op) = (\o',\i')\\
   \left(\o,\i\right) & \mbox{otherwise}
  \end{cases}
\]
that mimics the triggering of interrupt handlers by matching interrupts at the
effect-typing level, and is used for typing interrupts below. For more details
on $O$ and $I$, see~\cite{Ahman:POPL,Ahman:LMCS}.

For the \emph{typing judgements} $\Gamma \types V : X$ and $\Gamma \types M :
\tycomp{X}{(\o,\i)}$, we only present the rules for \lambdaAEff-specific
computations to illustrate them, and refer the reader
to~\cite{Ahman:POPL,Ahman:LMCS} for other rules: 
\begin{mathpar}
  \coopinfer{}{
    \op \in \o \\
    \Gamma \types V : A_\op \\
    \Gamma \types M : \tycomp{X}{(\o,\i)}
  }{
    \Gamma \types \tmopout{op}{V}{M} : \tycomp{X}{(\o,\i)}
  }
  \qquad
  \coopinfer{}{
    \Gamma \types V : A_\op \\
    \Gamma \types M : \tycomp{X}{(\o,\i)}
  }{
    \Gamma \types \tmopin{op}{V}{M} : \tycomp{X}{\opincomp {op} (\o,\i)}
  }
  \\
  \coopinfer{}{
    ({\o'} , {\i'}) = \i\, (\op)  \\
    \Gamma, x \of A_\op \types M_\op : \tycomp{\typromise X}{(\o',\i')} \\
    \Gamma, p \of \typromise X \types N : \tycomp{Y}{(\o,\i)}
  }{
    \Gamma \types \tmwith{op}{x}{M_\op}{p}{N} : \tycomp{Y}{(\o,\i)}
  }
  \\
  \coopinfer{}{
    \Gamma \types V : \typromise X \\
    \Gamma, x \of X \types M : \tycomp{Y}{(\o,\i)}
  }{
    \Gamma \types \tmawait{V}{x}{M} : \tycomp{Y}{(\o,\i)}
  }
  \qquad
   \coopinfer{}{
      \Gamma \types M : \tycomp{X}{(\o, \i)} \\
      (\o,\i) \order {O \times I} (\o',\i')
    }{
      \Gamma \types M : \tycomp{X}{(\o', \i')}
    }
\end{mathpar}
Note how the interrupt handler rule requires the handler code $M_\op$ to be
typed at an effect annotation determined by $\i$ at $\op$. Also observe how
$\opincomp {op} {(\o , \i)}$ is used on the effect annotations of interrupts
$\tmopin{op}{V}{M}$ to express that it triggers interrupt handlers for $\op$ in
$M$.

We conclude the sequential part of \lambdaAEff~by recalling its small-step
operational semantics. It is given by a \emph{reduction relation} $M \reduces N$
defined in \cref{fig:seq-reduction-rules}. The relation consists of standard
FGCBV computation rules \eqref{r1}--\eqref{r2}, expressing how function
applications and sequencing work. The \lambdaAEff-specific rules
\eqref{r3}--\eqref{r5} express that signals, interrupts, and awaiting behave
semantically like algebraic operations, by structurally propagating outwards in
computations, which expresses that if they occur in the first sequenced
computation, then they also occur first in the overall computation.
The rule \eqref{r6} expresses that signals propagate outwards through interrupt
handlers, so as to reach any other parallel processes expecting them. 
In this and other similar rules, which do not explicitly require that $\tm{op}
\neq \tm{op'}$, the names $\tm{op}$ and $\tm{op'}$ may coincide---this is because we
assume a single global set $\Sigma$ of signal and interrupt names. 
The rules \eqref{r7}--\eqref{r11} express the effect-handling-style behaviour of
interrupts, in that they recursively traverse the given computation, where they
trigger interrupt handlers for matching interrupts (rule \eqref{r9}) and move
past interrupt handlers for non-matching interrupts (rule \eqref{r10}).
Interrupts get discarded when they reach a return value and there are no more
interrupt handlers to be triggered (rule \eqref{r7}), and they propagate past
outward-propagating signals (rule \eqref{r8}) and into the continuations of
blocking computations (rule \eqref{r11}). The latter is not necessary
\textit{per se}, but it results in a simpler meta-theory~\cite{Ahman:LMCS}. The
rule \eqref{r12} specifies how fulfilled promises unblock awaiting computations.
Finally, the rule \eqref{r13} uses evaluation contexts to allow execution to
take place in subcomputations. Observe that $\tmkw{await}$ is not included in
evaluation contexts, modelling the blocking behaviour of this construct.

\begin{figure}[h]
\small
\newcommand{\rspace}{0cm}
\begin{align}
\shortintertext{\centering \textbf{Standard computation rules}}
\tmapp{(\tmfun{x:X}{M})}{V} &\reduces M[V/x]\label{r1}\tag{r1}\\[\rspace]
\tmlet{x}{(\tmreturn{V}{})}{M} &\reduces M[V/x]\label{r2}\tag{r2}\\[1ex]
\shortintertext{\centering \textbf{Algebraicity of signals, interrupt handlers, and awaiting}}
\tmlet{x}{(\tmopout{op}{V}{M})}{N} &\reduces \tmopout{op}{V}{\tmlet{x}{M}{N}}\label{r3}\tag{r3}\\[\rspace]
\tmlet{x}{(\tmwith{op}{y}{M_\op}{p}{N_1})}{N_2}&\reduces\notag\\[-0.5ex]
&\hspace{-1cm}\tmwith{op}{y}{M_\op}{p}{(\tmlet{x}{N_1}{N_2})} \label{r4}\tag{r4}\\[\rspace]
\tmlet{x}{(\tmawait{V}{y}{M})}{N} &\reduces \tmawait{V}{y}{(\tmlet{x}{M}{N})}\label{r5}\tag{r5}\\[1ex]
\shortintertext{\centering \textbf{Commutativity of signals with interrupt handlers}}
\tmwith{op}{x}{M_\op}{p}{\tmopout{op'}{V}{N}} &\reduces \tmopout{op'}{V}{\tmwith{op}{x}{M_\op}{p}{N}}\label{r6}\tag{r6}\\[1ex]
\shortintertext{\centering \textbf{Interrupt propagation}}
\tmopin{op}{V}{\tmreturn{W}}&\reduces \tmreturn{W}\label{r7}\tag{r7}\\[\rspace]
\tmopin{op}{V}{\tmopout{op'}{W}{M}}&\reduces \tmopout{op'}{W}{\tmopin{op}{V}{M}}\label{r8}\tag{r8}\\[\rspace]
\tmopin{op}{V}{\tmwith{op}{x}{M_\op}{p}{N}}&\reduces\tmlet{p}{M_\op[V/x]}{\tmopin{op}{V}{N}}\label{r9}\tag{r9}\\[\rspace]
\tmopin{op'}{V}{\tmwith{op}{x}{M_\op}{p}{N}} &\reduces\notag\\[-0.5ex]
&\hspace{-2.5cm}\tmwith{op}{x}{M_\op}{p}{\tmopin{op'}{V}{N}}, \text{when}~ \tm{op} \neq \tm{op'} \label{r10}\tag{r10}\\[\rspace]
\tmopin{op}{V}{\tmawait{W}{x}{M}}&\reduces\tmawait{W}{x}{\tmopin{op}{V}{M}} \label{r11}\tag{r11}\\[1ex]
\shortintertext{\centering \textbf{Unblocking awaiting with a fulfilled promise}}
\tmawait{\tmpromise{V}}{x}{M}&\reduces M[V/x]\label{r12}\tag{r12}\\[1ex]
\shortintertext{\centering \textbf{Evaluation context rule}}\notag
M\reduces N \quad \text{implies}&\quad \E[M] \reduces \E[N], \text{ where } \label{r13}\tag{r13}
\end{align}
\vspace{-4ex}
$$\E \bnfis [\;] \bnfor \tmlet{x}{\E}{N} \bnfor \tmopout{op}{V}{\E} \bnfor \tmopin{op}{V}{\E} \bnfor \tmwith{op}{x}{M_\op}{p}{\E}$$
\caption{Evaluation-contexts-based small-step operational semantics of \lambdaAEff's sequential part.}
\label{fig:seq-reduction-rules}
\end{figure}

We also note that this semantics is type-safe in the standard
sense~\cite{Wright:SynAppTypeSoundness}, as shown by Ahman and
Pretnar~\cite{Ahman:LMCS}, who proved corresponding progress and type
preservation theorems.

\subsection{Parallel Part}
\label{sec:aeff:par}

As noted in \cref{sec:intro}, to model examples such as client-server
interactions, remote code execution, and environments managing and pre-empting
threads, \lambdaAEff~extends FGCBV-style values and computations with a third
class of terms, \emph{parallel processes}, ranged over by $P, Q, \ldots$.
Processes allow computations to be run asynchronously in parallel, with the
communication between individual computations facilitated by signals and
interrupts. They are given by
\[
  P, Q
  \bnfis \tmrun M
  \,\bnfor\! \tmpar P Q
  \,\bnfor\! \tmopout{op}{V}{P}
  \,\bnfor\! \tmopin{op}{V}{P}
\]
Here, $\tmrun M$ denotes a computation being executed as a process, $\tmpar P Q$
denotes the parallel composition of two processes, $\tmopout{op}{V}{P}$ denotes
a process that has issued a signal $\op$ and proceeds as $P$, and
$\tmopin{op}{V}{P}$ denotes a process $P$ that has received an interrupt $\op$.

\emph{Well-typed processes} are classified by the judgement $\Gamma \types P :
\tyC$, where $\tyC$ is a \emph{process type} defined by the grammar  
$
  \text{$\tyC$, $\tyD$}
  \bnfis \tyrun X \o \i
  \bnfor \typar \tyC \tyD
$. Processes are then typed as
\begin{mathpar}
  \coopinfer{}{
    \Gamma \types M : \tycomp{X}{(\o,\i)}
  }{
    \Gamma \types \tmrun{M} : \tyrun{X}{\o}{\i}
  }
  \and
  \coopinfer{}{
    \Gamma \types P : \tyC \\
    \Gamma \types Q : \tyD
  }{
    \Gamma \types \tmpar{P}{Q} : \typar{\tyC}{\tyD}
  }
\end{mathpar}
with signals and interrupts typed analogously to their computation
counterparts.

The processes also come equipped with a small-step operational semantics, given
by a \emph{reduction relation} $P \reduces Q$ defined in
\cref{fig:par-reduction-rules}. The rule \eqref{r14} expresses that if a
computation can reduce on its own, then it can also do so as a process. The rule
\eqref{r15} allows signals to be propagated out of computations, after which
they keep propagating outwards, at the same time generating corresponding new
interrupts for any processes parallel to them---this is specified by the
broadcast rules \eqref{r16}--\eqref{r17}. In turn, these generated interrupts
start structurally propagating inwards into the processes they encompass, as
specified in the rules \eqref{r19}--\eqref{r20}. Once an interrupt reaches one
of the computations in a $\tmkw{run}$-leaf, the rule \eqref{r18} propagates the
interrupt into the computation, where it will continue propagating inwards using
the interrupt propagation rules for computations from
\cref{fig:seq-reduction-rules}. Finally, similarly to computations, there is an
evaluation context rule \eqref{r21} that allows execution in subprocesses. 

\begin{figure}[h]
\small
\newcommand{\rspace}{0cm}
\begin{align}
\shortintertext{\centering \textbf{Individual computations}}
M \reduces N \quad \text{implies}&\quad \tmrun{M} \reduces \tmrun{N} \label{r14}\tag{r14}\\[1ex]
\shortintertext{\centering \textbf{Signal hoisting}}
\tmrun {(\tmopout{op}{V}{M})}  &\reduces \tmopout{op}{V}{\tmrun M}\label{r15}\tag{r15}\\[1ex]
\shortintertext{\centering \textbf{Broadcasting}}
\tmpar{\tmopout{op}{V}{P}}{Q} &\reduces \tmopout{op}{V}{\tmpar{P}{\tmopin{op}{V}{Q}}}\label{r16}\tag{r16}\\[\rspace]
\tmpar{P}{\tmopout{op}{V}{Q}} &\reduces \tmopout{op}{V}{\tmpar{\tmopin{op}{V}{P}}{Q}}\label{r17}\tag{r17}\\[1ex]
\shortintertext{\centering \textbf{Interrupt propagation}}
\tmopin{op}{V}{\tmrun M} &\reduces \tmrun {(\tmopin{op}{V}{M})}\label{r18}\tag{r18}\\[\rspace]
\tmopin{op}{V}{\tmpar P Q} &\reduces \tmpar {\tmopin{op}{V}{P}} {\tmopin{op}{V}{Q}}\label{r19}\tag{r19}\\[\rspace]
\tmopin{op}{V}{\tmopout{op'}{W}{P}} &\reduces \tmopout{op'}{W}{\tmopin{op}{V}{P}}\label{r20}\tag{r20}\\[1ex]
\shortintertext{\centering \textbf{Evaluation context rule}}\notag
\hspace{3cm}
P \reduces Q \quad \text{implies} \quad \F&[P] \reduces \F[Q], \text{ where } \label{r21}\tag{r21}
\end{align}
\vspace{-4ex}
$$\F
  \bnfis [\;]
  \bnfor \tmpar \F Q \bnfor\! \tmpar P \F
  \bnfor \tmopout{op}{V}{\F}
  \bnfor \tmopin{op}{V}{\F}$$
\caption{Evaluation-contexts-based small-step operational semantics of \lambdaAEff's parallel part.}
\label{fig:par-reduction-rules}
\end{figure}

The parallel part of \lambdaAEff~also enjoys type-safety, again proved via
progress and type preservation~\cite{Ahman:LMCS}. The interesting caveat of the
type-safety for processes is that the preservation result has to account
for the generation of new interrupts in types by the broadcast rules. 

%% file: normalisation-sequential.tex
\section{Strong Normalisation for Sequential Computations}
\label{sec:sn-seq}

In this section, we prove that the sequential part of the \lambdaAEff-calculus
that we reviewed in \cref{sec:aeff:seq} is strongly normalising, i.e., we 
show that there are no infinite reduction sequences starting from well-typed
computations. As noted in \cref{sec:intro}, our normalisation proof is based on
the compositional Girard-Tait-style type-directed reducibility approach to
proving strong normalisation of higher-order languages~\cite{proofs,tait}, and
its extension to effectful programming languages via the $\top\top$-lifting
technique proposed by Lindley and Stark~\cite{Lindley:TopTopLifting}.

\subsection{Reducibility and $\top\top$-Lifting-Based Strong Normalisation Proofs}

We recall how the Girard-Tait-style type-directed reducibility approach to
proving strong normalisation is structured at a high level. The approach is
originally due to Tait~\cite{tait}, and later improved and extended by
Girard~\cite{proofs}. We follow the style of the latter presentation.

First, we remind the reader that a term $M$ is \emph{strongly normalising} when
there does not exist an infinite reduction sequence $M \reduces M' \reduces
\ldots$ starting from $M$. A calculus is strongly normalising when all its
well-typed terms are. We denote the set of all strongly normalising terms of
type $X$ with $\SN_{\!X}$, often omitting the $X$.
In proofs, it is convenient to give $\SN_{\!X}$ an
equivalent \emph{inductive characterisation} as the least set 
closed under the following rule: 
\[
  \inferrule*[]{\forall N .\, M \reduces N \implies N \in \SN_{\!X}}{M \in \SN_{\!X}}
\]

Proving strong normalisation of higher-order languages directly by induction on
the syntax (or typing derivations) does not work well. Not only can the number
of redexes grow after some steps, 
but also knowing that a term of some type is strongly normalising usually does not
immediately imply that its subterms are also strongly normalising (e.g., knowing
that a function abstraction is strongly normalising does not imply that
its body is as well).

The Girard-Tait-style approach addresses this situation by defining
a notion of \emph{reducibility} at every type, a form of unary logical
relation~\cite{plotkin:logical-relations}. Strong normalisation is then proved
in two steps: one proves that (i) every reducible term is strongly normalising,
and that (ii) every well-typed term is reducible at its type. The
earlier problems are solved by the reducibility relation being
amenable to induction on the structure of types, allowing one to abstractly
specify that subterms of reducible (strongly normalising) terms have the same
property.

The $\top\top$-lifting technique proposed by Lindley and
Stark~\cite{Lindley:TopTopLifting} extends this approach to effectful languages.
As the types of computations typically do not characterise the structure of
computations, only that of return values, one cannot define reducibility for
computations simply over the structure of types as one does for values. Lindley
and Stark's proposal is to define the notion of reducibility for computations in
two steps: (i) a computation is reducible when it is strongly normalising in any
reducible term context (called continuations in~\cite{Lindley:TopTopLifting}),
and (ii) such a continuation is reducible if it yields a strongly normalising
computation when filled with any reducible value at the corresponding type. The
continuations model the environment the computation is executing in, and this
definition of reducibility then naturally extends reducibility of (return)
values to computations and their environments.

In this section, we show how these ideas apply to the \lambdaAEff-calculus. We
note that the work of Lindley and Stark is presented in the context of Moggi's
computational metalanguage~\cite{moggi:notions}, while the \lambdaAEff-calculus
is based on FGCBV~\cite{Levy:FGCBV}. The difference in these calculi is, however,
superficial, and the $\top\top$-lifting-based approach straightforwardly
extends to the latter~\cite[§2]{sobolev:bsc}.

For proving strong normalisation for the sequential part of \lambdaAEff, we
consider a \emph{skeletal version} of it, in which we omit effect annotations.
This is to emphasise that this normalisation result does not depend on
effect-typing. The \emph{skeletal types} are given by the grammar
\[
  X,Y \bnfis \tybase \bnfor \tyfun{X}{Y} \bnfor \typromise{X}
\]
Typing rules are similar to \cref{sec:aeff:seq}.
While this normalisation proof does not rely on effect annotations, they
still have an important role in the \lambdaAEff-calculus as static
specifications (for users) about which signals a program can issue and which
interrupts it might handle.
Furthermore, they are key to proving strong normalisation for \lambdaAEff's parallel part in \cref{sec:sn-parallel}.

\subsection{Continuations}
\label{sec:sn-continuations}

Following Lindley and Stark~\cite{Lindley:TopTopLifting}, we first define a
notion of \emph{continuations} $K$, i.e., term contexts that model the
environments in which computations execute, given by the grammar:
\[
  K \bnfis \Id \bnfor K \circ (x)N \bnfor K \circ \copin{op}{V}
\]
Their intuitive meaning is best understood by how they are \emph{applied to
computations}:
\[
  \begin{array}{r c l}
    \Id \capp M & \defeq & M,
    \\
    (K \circ (x)N) \capp M & \defeq & K \capp (\tmlet{x}{M}{N}),
    \\
    (K \circ \copin{op}{V}) \capp M & \defeq & K \capp \tmopin{op}{V}{M}.
  \end{array}
\]
The first two cases are already present in Lindley and Stark's work, and they
express that after executing $M$ there is either nothing more to be done, or
that $N$ is sequentially executed after $M$ returns. The third case is new for
\lambdaAEff: it expresses that the environment has an interrupt ready to
propagate into $M$. This allows us to prove that computations are strongly
normalising for any finite number of interrupts propagated to them by the
environment.

Observe that the continuations $K$ are just an inside-out reformulation of a
subset of \lambdaAEff's evaluation contexts $\E$, following a
tradition in $\top\top$-closed-relations-based techniques to aid
reasoning~\cite{pitts:parapoly}.
They do not include the remaining two cases of \lambdaAEff's evaluation
contexts: signals and interrupt handlers. This is because these evaluation
contexts express actions taken by computations, and not by the environment.
Semantically speaking, the difference is that signals and interrupt handlers
behave like algebraic operations (i.e., constructors of computations), while
sequencing and interrupts behave like effect handling (i.e., eliminators for
computations, modelling the environments in which computations execute).

While the evaluation contexts $\E$ naturally do not include $\tmkw{await}$
because it has a blocking semantics, we find it useful for defining the notion
of reducibility in \cref{sec:sn-seq:reducibility} to introduce an auxiliary
notion of \emph{await continuations} $K^{\langle\rangle}$, which are defined by
a single case as ${K^{\langle\rangle} \bnfis K \circ \tmpromise{x}N}$ and which
are \emph{applied to values}, instead of computations, as ${(K \circ
\tmpromise{x}N) \capp V} \defeq K \capp (\tmawait{V}{x}{N})$. 
We use them to simplify reasoning about how computations $N$ awaiting a
promise to be fulfilled behave in environments $K$, similarly to how Lindley and
Stark structure reasoning about the sum type~\cite{Lindley:TopTopLifting} (see
also \cref{sec:sn-seq:sums-products}).

We characterise \emph{well-typed continuations} by the judgement $\Gamma \types
K : X \multimap Y$, defined by
\begin{mathpar}
  \coopinfer{}{
    \phantom{...}
  }{
    \Gamma \types \Id : X \multimap X
  }
  \and
  \coopinfer{}{
    \Gamma, x \of X \types N : Y \\
    \Gamma \types K : Y \multimap Z
  }{
    \Gamma \types K \circ (x)N : X \multimap Z
  }
  \and
  \coopinfer{}{
    \op \in \Sigma \\
    \Gamma \types V : A_\op \\
    \Gamma \types K : X \multimap Y 
  }{
    \Gamma \types K \circ \copin{op}{V} : X \multimap Y
  }
\end{mathpar}
Well-typed await continuations are characterised by an analogous rule, which we
omit here.

It is straightforward to see that the application $K \capp M$ of a continuation
${\Gamma \types K : X \multimap Y}$ to a computation $\Gamma \types M : X$
preserves typing.
We also define the \emph{length $| K |$ of a continuation}, as the number of
compositions $\circ$ in $K$. This is a measure of the size of the environment $K$. 

\subsection{Reducibility}
\label{sec:sn-seq:reducibility}

We now turn to defining the notion of \emph{reducibility}, as type- and
context-indexed predicates $\VRed_X^\Gamma$, $\CRed_X^\Gamma$, and
$\KRed_X^\Gamma$ over well-typed values $\Gamma \types V : X$, computations
$\Gamma \types M : X$, and continuations $\Gamma \types K : X \multimap Y$,
respectively. 
When a well-typed value, computation, or continuation satisfies the respective reducibility
predicate, we say that it is \emph{reducible}.

Compared to Lindley and Stark, we do not define these predicates
pointwise for contexts $\Gamma$, but instead in the style of Kripke's possible
world semantics~\cite{kripke:semantics}. 
This is needed to reason about interrupt handlers $\tmwith{op}{x}{M}{p}{N}$ in
\cref{prop:sn-seq:interrupt-handlers-sn,prop:sn-seq:interrupt-handlers-reducible},
because we need to weaken a reducible continuation $K \in \KRed_Y^\Gamma$ with $p$ to apply it to
$N$.\footnote{The Bachelor's thesis~\cite[§3]{sobolev:bsc} of one of the authors
demonstrates how a non-Kripke-style definition of reducibility can be used
instead, with the subtle caveat that then one has to extend \lambdaAEff-values
with new constants $\star : \typromise{X}$ that intuitively model promises that
have not yet been fulfilled at a given time.} To give such a Kripke-style
definition of reducibility, we first define the \emph{set of renamings}
$\Ren{\Gamma}{\Gamma'}$ as
\[
  \Ren{\Gamma}{\Gamma'} \defeq \{ \rho : Vars(\Gamma) \to Vars(\Gamma') \mid \forall\, x \of X \in \Gamma .\, \rho(x) \of X \in \Gamma' \}.
\]
We write $\Gamma' \types V[\rho] : X$ for the application of the \emph{action of
a renaming} $\rho \in \Ren{\Gamma}{\Gamma'}$ on a value $\Gamma \types V : X$, and
similarly for computations and continuations. For instance, we write $\mathit{wk} \in
\Ren{\Gamma}{(\Gamma, x \of X)}$ and $\mathit{lift}(\rho) \in \Ren{(\Gamma, x \of
X)}{(\Gamma', x \of X)}$ for the standard \emph{weakening} and
\emph{lifting renamings}, where $\rho \in \Ren{\Gamma}{\Gamma'}$.
We sometimes leave the action of renamings implicit.

The \emph{reducibility predicates} are then defined mutually inductively as
follows:
\[
  \begin{array}{@{}l c l}
    V \in \VRed_\tybase^\Gamma & \iff & \top,
    \\[0.5ex]
    V \in \VRed_{\tyfun{X}{Y}}^\Gamma & \iff & \forall\, \Gamma' .\, \forall\, \rho \in \Ren{\Gamma}{\Gamma'} .\, \forall\, W \in \VRed_X^{\Gamma'} .\, (V[\rho])\, W \in \CRed_Y^{\Gamma'},
    \\[0.5ex]
    V \in \VRed_{\typromise{X}}^{\Gamma} & \iff & \forall\, \Gamma' .\, \forall\, \rho \in \Ren{\Gamma}{\Gamma'} .\, \forall\, K^{\langle\rangle} \in \ARed_X^{\Gamma'} .\, K^{\langle\rangle} \capp (V[\rho]) \in \SN,
    \\[1ex]
    M \in \CRed_X^\Gamma & \iff & \forall\, \Gamma' .\, \forall\, \rho \in \Ren{\Gamma}{\Gamma'} .\, \forall\, K \in \KRed_X^{\Gamma'} .\, K \capp (M[\rho]) \in \SN,
    \\[1ex]
    K \in \KRed_X^\Gamma & \iff & \forall\, \Gamma' .\, \forall\, \rho \in \Ren{\Gamma}{\Gamma'} .\, \forall\, V \in \VRed_X^{\Gamma'} .\, (K[\rho]) \capp (\tmreturn{V}) \in \SN.
  \end{array}
\]
As noted in \cref{sec:sn-continuations}, for the promise type $\typromise{X}$,
the predicate $\VRed_{\typromise{X}}^{\Gamma}$ uses an auxiliary notion of await
continuations $K^{\langle\rangle}$. The corresponding reducibility predicate is
defined as 
\[
  \begin{array}{@{}l c l}
    K^{\langle\rangle} \in \ARed_X^\Gamma & \iff & \forall\, \Gamma' .\, \forall\, \rho \in \Ren{\Gamma}{\Gamma'} .\, \forall\, V \in \VRed_X^{\Gamma'} .\, (K^{\langle\rangle}[\rho]) \capp \tmpromise{V} \in \SN.
  \end{array}
\]
In other words, await continuations model environment-computation pairs which
are awaiting a promise to be fulfilled, and the reducible ones are those that
are strongly normalising when supplied with a fulfilled promise containing a
reducible value of the appropriate type.

Interestingly, despite the additional features of \lambdaAEff, the overall structure of
these predicates is identical to Lindley and Stark's work, apart from the
Kripke-style indexing.
In particular, while in \lambdaAEff~the interrupt handlers
$\tmwith{op}{x}{M_\op}{p}{N}$ behave similarly to scoped algebraic
effects~\cite{Pirog:ScopedOperations}, in that they carry a computation $M_\op$
which is executed only when a corresponding interrupt arrives, we do not have to
mention these $M_\op$s in reducibility, and thus we do not have to use
techniques such as step-indexing~\cite{appel-mcallester:step-indexing} to ensure
that the reducibility predicates are well-founded. The Kripke-style indexing is
not because of handler code $M_\op$, but instead because of additional
variables being bound in the non-blocking continuation $N$.

Note that this use of Kripke-style indexing due to the variable binding and
non-blocking semantics of interrupt handlers is somewhat reminiscent of how
call-by-need $\lambda$-calculi are modelled
logically~\cite{herbelin:calculus-of-expandable-stores,miquey:realizability-of-call-by-need},
where computations also have to execute in open contexts (that can only grow
over time) because of lazy let-bindings. However, one difference is that when
the execution of a call-by-need $\lambda$-calculus reaches a variable from such
a context, it forces the corresponding let-binding to execute. Meanwhile, when
the execution of \lambdaAEff~ends up awaiting a promise-typed variable bound
by an interrupt handler, the whole computation blocks until some external
interrupt triggers that interrupt handler and the promise gets fulfilled.

The Kripke-style definition of these reducibility predicates gives us the 
next result.

\begin{proposition}
  \label{prop:sn-seq:reducibility-renaming}
  If $V \in \VRed_X^\Gamma$ and $\rho \in \Ren{\Gamma}{\Gamma'}$, then we
  have $V[\rho] \in \VRed_X^{\Gamma'}$, and analogously for the other
  reducibility predicates for computations and continuations.
\end{proposition}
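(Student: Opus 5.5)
The plan is to exploit the Kripke-style shape of the definitions rather than to induct on types. Every clause except $\VRed_\tybase^\Gamma$ has the form ``for all $\Gamma'$ and all $\rho' \in \Ren{\Gamma}{\Gamma'}$, something holds of the term acted on by $\rho'$.'' So monotonicity under renaming should follow directly from composing renamings, together with the functoriality of the renaming action:
\[
  (V[\rho])[\rho'] = V[\rho' \circ \rho],
\]
and analogously for computations, continuations, and await continuations. No induction on $X$ is needed, and the mutual definition causes no circularity, because we never unfold the inner predicates.

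\textbf{Step 1.} I first establish the auxiliary syntactic lemma that renaming actions compose. For $\rho \in \Ren{\Gamma}{\Gamma'}$ and $\rho' \in \Ren{\Gamma'}{\Gamma''}$, we have $\rho' \circ \rho \in \Ren{\Gamma}{\Gamma''}$ and $(T[\rho])[\rho'] = T[\rho'\circ\rho]$ for every value, computation, continuation, and await continuation $T$. This is a routine mutual induction on syntax. The binder cases ($\tmkw{fun}$, $\tmkw{let}$, interrupt handlers, $\tmkw{await}$, and the $(x)N$ and $\tmpromise{x}N$ components of continuations) use the identity
\[
  \mathit{lift}(\rho')\circ\mathit{lift}(\rho) = \mathit{lift}(\rho'\circ\rho).
\]
In the Agda formalisation this is where most of the bookkeeping lives, and I expect it to be the only real obstacle. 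It is purely technical, however: with intrinsically-typed de Bruijn syntax it is the standard ``rename–rename fusion'' lemma. I also note that renaming preserves typing, so $V[\rho]$ is well typed in $\Gamma'$ and the predicates make sense.

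\textbf{Step 2.} I then handle the cases by type, with one uniform argument. For base types the claim is trivial, since $\VRed_\tybase^{\Gamma'}$ is always true.

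For $V \in \VRed_{\tyfun{X}{Y}}^\Gamma$, take $\Gamma''$, $\rho' \in \Ren{\Gamma'}{\Gamma''}$ and $W \in \VRed_X^{\Gamma''}$. We must show $(V[\rho][\rho'])\,W \in \CRed_Y^{\Gamma''}$. By Step 1 this term equals $(V[\rho'\circ\rho])\,W$, and instantiating the hypothesis at $\rho'\circ\rho$ gives the required membership.

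The promise, computation, continuation, and await-continuation cases are identical in form. We rewrite with Step 1 and instantiate the hypothesis at the composite renaming. In the continuation cases the conclusion is $(K[\rho][\rho'])\capp(\tmreturn{V}) \in \SN$ or $(K^{\langle\rangle}[\rho][\rho'])\capp\tmpromise{V}\in\SN$, and the rewriting happens inside $\capp$.

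Thus the whole proof reduces to Step 1, and each reducibility case is a single instantiation.
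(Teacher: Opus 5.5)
Your proof is correct and is essentially the argument the paper intends. The paper gives no separate proof, only noting that the result follows from the Kripke-style definition, which amounts to instantiating each hypothesis at the composite renaming $\rho'\circ\rho$ via the rename--rename fusion property, exactly as you do, and your observation that no induction on types is needed (since the inner predicates are never unfolded) is accurate.
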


Furthermore, we can prove that application of continuations preserves
reducibility.

\begin{proposition}
  If ${\Gamma \types\! K : X \!\multimap\! Y}$, ${K \!\in\! \KRed_X^\Gamma}$, and
  ${M \!\in\! \CRed_X^\Gamma}$, then ${K \!\capp\! M \!\in\! \CRed_Y^\Gamma}$.
\end{proposition}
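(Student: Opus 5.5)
Unfolding the definition of $\CRed_Y^\Gamma$ reduces the statement to a composition of continuations, which we then discharge using the assumption $M \in \CRed_X^\Gamma$. So we fix $\Gamma'$, a renaming $\rho \in \Ren{\Gamma}{\Gamma'}$, and a continuation $L \in \KRed_Y^{\Gamma'}$. We must show $L \capp ((K \capp M)[\rho]) \in \SN$.

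\textbf{Step 1: commuting renaming with application.} First we note that the action of renamings commutes with continuation application, i.e.\ $(K \capp M)[\rho] = (K[\rho]) \capp (M[\rho])$. This is a routine induction on $K$.

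\textbf{Step 2: composing continuations.} We define the composition $L \cdot K'$ of continuations by recursion on $K'$:
\[
L \cdot \Id \defeq L, \qquad L \cdot (K'' \circ (x)N) \defeq (L \cdot K'') \circ (x)N, \qquad L \cdot (K'' \circ \copin{op}{V}) \defeq (L \cdot K'') \circ \copin{op}{V}.
\]
By induction on $K'$ we prove $L \capp (K' \capp N) = (L \cdot K') \capp N$ for all computations $N$, and also $(L \cdot K')[\sigma] = L[\sigma] \cdot K'[\sigma]$. Combining with Step 1, it then suffices to show $L \cdot K[\rho] \in \KRed_X^{\Gamma'}$. Indeed, instantiating $M \in \CRed_X^\Gamma$ at $\rho$ with this continuation gives $(L \cdot K[\rho]) \capp (M[\rho]) \in \SN$. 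This is exactly $L \capp ((K\capp M)[\rho])$.

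\textbf{Step 3: reducibility of the composite.} To show $L \cdot K[\rho] \in \KRed_X^{\Gamma'}$, we take a further renaming $\sigma \in \Ren{\Gamma'}{\Gamma''}$ and a value $V \in \VRed_X^{\Gamma''}$, and must show $(L \cdot K[\rho])[\sigma] \capp \tmreturn{V} \in \SN$. Rewriting with the equations from Step 2, this term equals $L[\sigma] \capp (K[\rho][\sigma] \capp \tmreturn{V})$. We then use the following two facts:
\begin{itemize}
\item $L[\sigma] \in \KRed_Y^{\Gamma''}$, by \cref{prop:sn-seq:reducibility-renaming}.
\item $K[\rho][\sigma] \capp \tmreturn{V} \in \CRed_Y^{\Gamma''}$.
\end{itemize}
Unfolding the second fact at the identity renaming with $L[\sigma]$ gives the claim, using that the identity renaming acts trivially.

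The main obstacle is the second fact: that a reducible continuation applied to a reducible return is a reducible computation. One might expect a lemma saying $\tmreturn{V} \in \CRed$ for reducible $V$, but here we need the stronger statement with $K$ already plugged in. This is where the argument risks circularity. We would avoid it by not proving $K \capp \tmreturn{V}\in\CRed$ directly. Instead we use a generalisation proved up front: for all $K \in \KRed_X$, $L \in \KRed_Y$ and $V\in\VRed_X$, we have $L\capp(K\capp\tmreturn{V}) \in \SN$.

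If this does not follow directly from $K \in \KRed_X$, we would prove it by induction on $|K|$ together with strong normalisation of the relevant subterms. The case analysis goes as follows:
\begin{itemize}
\item For $K = K'\circ(x)N$: the redex $\tmlet{x}{\tmreturn V}{N} \reduces N[V/x]$.
\item For $K = K' \circ \copin{op}{V'}$: the redex $\tmopin{op}{V'}{\tmreturn V}\reduces \tmreturn V$, by \eqref{r7}, which shrinks $K$.
\end{itemize}
The other reductions happen inside $V$, $N$, $V'$, or inside $L$'s frames. We handle those by an outer induction on the $\SN$ witnesses, in the style of Lindley and Stark's lemma for $\tmkw{return}$.
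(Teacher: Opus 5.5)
\textbf{There is a genuine gap in Step~3.} Your Steps~1 and~2 are fine. The ``generalisation'' you propose to prove up front is $L \capp (K \capp \tmreturn{V}) \in \SN$ for all $K \in \KRed_X$, $L \in \KRed_Y$, $V \in \VRed_X$. That is exactly the proposition instantiated at $M = \tmreturn{V}$, and your Steps~1--2 show it conversely implies the general case. So the whole content has been moved into the sketch, and the sketch does not go through.

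In the case $K = K' \circ (x)N$, after the \eqref{r2} step you are left with $L \capp (K' \capp N[V/x])$. Here $N[V/x]$ is an arbitrary computation, not a $\tmkw{return}$, so your induction hypothesis on $|K|$ does not apply, and $K \in \KRed_X$ gives you no reducibility of $N[V/x]$. Your description of the remaining steps is also off for this calculus. Reduction never happens inside values or let-bodies, only in evaluation contexts. The steps that matter are interactions between $L$'s frames and the signals and interrupt handlers that $K \capp \tmreturn{V}$ exposes, via rules \eqref{r3}, \eqref{r4}, and \eqref{r8}--\eqref{r10}.

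\textbf{No induction of this kind can succeed, because the lemma does not follow from the hypotheses.} Take $K = \Id \circ (x)(\tmwith{op}{y}{M'}{p}{\tmreturn{p}})$ for an arbitrary well-typed $M'$.
\begin{itemize}
\item $K[\rho] \capp \tmreturn{V}$ reaches a normal form after one \eqref{r2} step, since handler code is not in an evaluation position. Hence $K \in \KRed_X$ whatever $M'$ is.
\item $L = \Id \circ \copin{op}{W} \in \KRed$ by \cref{prop:sn-seq:interrupts-reducibility}.
\item $L \capp (K \capp \tmreturn{V})$ reduces by \eqref{r2} and then \eqref{r9} to $\tmlet{p}{M'[V/x][W/y]}{\tmopin{op}{W}{\tmreturn{p}}}$. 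Strong normalisation of this term requires strong normalisation of the arbitrary handler body $M'$.
\end{itemize}
So your lemma already contains the strong normalisation theorem. Adding a diverging computation (e.g.\ general recursion) leaves all the reducibility definitions and local lemmas unchanged, yet falsifies the statement.

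\textbf{What is missing is a global fact.} Every well-typed computation is reducible (\cref{thm:sn-seq:fundamental-theorem}, \cref{cor:sn-seq:reducibility}). Applied to the well-typed computation $K \capp M$, this makes the proposition immediate. You must then check that the fundamental theorem is proved without using this proposition. It can be: the let case goes through $L \circ (x)N \in \KRed$ using \cref{prop:let-return} and the induction hypothesis for $N$.
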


\subsection{Strong Normalisation}

As is standard in Girard-Tait-style normalisation proofs, as a first step we
prove that strong normalisation follows (straightforwardly) from a computation
being reducible.

\begin{theorem}
  \label{thm:sn-seq:reducibility-sn}
  If $\Gamma \types M : X$, such that $M \in \CRed_X^\Gamma$, then $M \in \SN_{\!X}$.
\end{theorem}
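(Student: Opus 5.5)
The plan is to instantiate the definition of $\CRed_X^\Gamma$ with the identity renaming $\rho = \mathit{id} \in \Ren{\Gamma}{\Gamma}$ and the identity continuation $\Id$. Since $\Id \capp M = M$ and $M[\mathit{id}] = M$, it then suffices to show that $\Id \in \KRed_X^\Gamma$. Once this is established, the definition of $M \in \CRed_X^\Gamma$ immediately yields $\Id \capp M \in \SN$, i.e.\ $M \in \SN_{\!X}$.

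To show $\Id \in \KRed_X^\Gamma$, unfold the definition. For any $\Gamma'$, any $\rho \in \Ren{\Gamma}{\Gamma'}$ and any $V \in \VRed_X^{\Gamma'}$, we need $\Id[\rho] \capp (\tmreturn{V}) = \tmreturn{V} \in \SN$. This is where the argument departs slightly from Girard-Tait proofs for pure calculi: in the fine-grain call-by-value setting, $\tmreturn{V}$ has no reductions at all. None of the rules \eqref{r1}--\eqref{r12} has a left-hand side of the form $\tmreturn{V}$, and the only evaluation context that can match a bare return is the hole, which does not help. Reduction never happens inside values, since $\E$ contains no value positions. Hence $\tmreturn{V}$ is a normal form, and by the inductive characterisation of $\SN$ it lies in $\SN$ vacuously. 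Notably, we need no information about $V$, so we do not have to prove the usual auxiliary lemma that reducible values are strongly normalising or that variables are reducible.

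The only potential obstacle is bookkeeping rather than mathematics. We must check that the action of the identity renaming on $M$ and on $\Id$ is literally the identity; in the Agda formalisation this is a standard renaming lemma. We must also check that $\Gamma \types \Id : X \multimap X$ holds, which is exactly the first typing rule for continuations. If the intended reading instead requires well-typedness of $M$ to appear in $\SN_{\!X}$, the hypothesis $\Gamma \types M : X$ supplies it directly. I therefore expect the whole proof to be a few lines long, with the single substantive point being the observation that returned values are irreducible.
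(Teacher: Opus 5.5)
Your proposal is correct and matches the paper's proof: instantiate $\CRed_X^\Gamma$ at the identity renaming and the continuation $\Id$, then discharge $\Id \in \KRed_X^\Gamma$ by noting that $\tmreturn{V}$ is strongly normalising for every $V$. Your explicit justification that $\tmreturn{V}$ is a normal form (no rule or evaluation context applies to it, and there is no reduction inside values) is exactly what the paper compresses into ``trivially true''.
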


\begin{proof}
  The proof is identical to that of Lindley and Stark. By the definition of
  $\CRed_X^\Gamma$, we have $\Id \capp M = M \in \SN_{\!X}$ if we can show
  $\Id \in \KRed_X^\Gamma$. But by the definition of $\KRed_X^\Gamma$, it
  suffices to show that $\Id \capp (\tmreturn{V}) \in \SN_{\!X}$, for all $V \in
  \VRed_X^\Gamma$, which is trivially true.
\end{proof}

The rest of the normalisation proof amounts to showing that all well-typed
computations are reducible. It is customary to split the proof of this result
(\cref{thm:sn-seq:sn}) into propositions based on the typing rules of
computations. We only discuss the cases involving \lambdaAEff-specific
computations---details of other cases can be found in our Agda formalisation, or in
\cite{Lindley:TopTopLifting,sobolev:bsc}.

As is standard, we first prove that \emph{variables} are reducible, by induction
on their types.

\begin{proposition}
  \label{prop:sn-seq:variables-reducible}
  If $\Gamma = x_1 \of X_1, \ldots, x_n \of X_n$, then $x_i \in
  \VRed_{X_i}^\Gamma$, for all $x_i \of X_i \in \Gamma$.
\end{proposition}

Next, we prove below in \cref{prop:sn-seq:signals-reducible} that signals
$\tmopout{op}{V}{M}$ are reducible. For this, however, we first need to establish
an auxiliary result about signals and strong normalisation, from which
\cref{prop:sn-seq:signals-reducible} follows by straightforward application of
the definition of $\CRed_X^\Gamma$.

For $M \in \SN_{\!X}$, we write $\mathit{max}(M)$ for the length of the longest reduction
sequence starting from $M$, which exists since a \lambdaAEff-computation has
only finitely many one-step reducts.

\begin{proposition}
  \label{prop:sn-seq:signals-sn}
  Consider a continuation $\Gamma \types K : X \multimap Y$, an operation $\op
  \in \Sigma$, a value $\Gamma \types V : A_\op$, and a computation $\Gamma
  \types M : X$. If $K \capp M \in \SN_{\!Y}$, then $K \capp
  (\tmopout{op}{V}{M}) \in \SN_{\!Y}$.
\end{proposition}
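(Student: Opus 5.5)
We argue by well-founded induction on the lexicographic triple $(\mathit{max}(K \capp M), |K|, \text{size of } M)$, or more simply by induction on $\mathit{max}(K \capp M) + |K|$, with an inner induction on the length of $K$ where needed. Since $K \capp (\tmopout{op}{V}{M})$ is in $\SN_{\!Y}$ as soon as every one-step reduct is, we show that every reduct $L$ of $K \capp (\tmopout{op}{V}{M})$ is in $\SN_{\!Y}$. We also use that $V$, being a value, is never reduced: reduction only happens in computations, and values appear only in non-evaluation positions. Strictly speaking we need $V$ strongly normalising, but in \lambdaAEff\ values do not reduce, so this is automatic.

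The case analysis is on where the redex of $K \capp (\tmopout{op}{V}{M})$ lies.
\begin{itemize}
\item \emph{Redex inside $M$.} If $M \reduces M'$, then $K \capp M \reduces K \capp M'$, so $\mathit{max}(K \capp M')$ is strictly smaller and the induction hypothesis gives the result.
\item \emph{Redex inside $K$.} A redex inside some $N$ of a frame $(x)N$ does not lie in an evaluation position, since evaluation contexts only extend through the first component of $\tmkw{let}$. So the only remaining possibility is a redex involving the interface between the signal and the innermost frame of $K$.
\item \emph{Innermost frame $(x)N$, i.e.\ $K = K' \circ (x)N$.} Rule \eqref{r3} gives
\[
K' \capp (\tmopout{op}{V}{(\tmlet{x}{M}{N})}),
\]
which is the original shape with the shorter continuation $K'$ and computation $\tmlet{x}{M}{N}$. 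Moreover $K' \capp (\tmlet{x}{M}{N}) = K \capp M$ is SN with the same $\mathit{max}$ but smaller $|K|$, so the induction hypothesis applies.
\item \emph{Innermost frame an interrupt, i.e.\ $K = K' \circ \copin{op'}{W}$.} Rule \eqref{r8} gives
\[
K' \capp (\tmopout{op}{V}{\tmopin{op'}{W}{M}}).
\]
Again $K' \capp (\tmopin{op'}{W}{M}) = K \capp M$, so the induction hypothesis applies with smaller $|K|$.
\end{itemize}

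The main obstacle is making this case analysis exhaustive and getting the measure right. First, one must check that the outermost parts of $K$ cannot interact with the signal before the innermost frame does. This requires a careful lemma characterising the one-step reducts of $K \capp L$: either $K \capp L'$ with $L \reduces L'$, or a reduction at the boundary between $L$ and the innermost frame of $K$, or a reduction where $L$ is already a returned value, interrupt handler, await, or signal interacting with that frame. Since $\tmopout{op}{V}{M}$ is not a return, the only boundary interactions are \eqref{r3} and \eqref{r8}. We also have to rule out a redex arising when $M$ itself is, say, a handler whose commutation \eqref{r6} does not apply at this position. Second, the measure $(\mathit{max}(K \capp M), |K|)$ ordered lexicographically handles every case: the first component decreases in the first case, and in the two boundary cases it stays equal while $|K|$ decreases. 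I would formalise the reduct-classification lemma for $K \capp -$ first, by induction on $K$, and then run the lexicographic induction.
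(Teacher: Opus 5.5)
Your proposal is correct and takes essentially the paper's approach: the paper inducts on $|K| + \mathit{max}(K \capp M)$ and shows every reduct of $K \capp (\tmopout{op}{V}{M})$ is strongly normalising. Your case analysis is the one that proof requires. A step inside $M$ strictly decreases $\mathit{max}(K \capp M)$, while \eqref{r3} and \eqref{r8} at the innermost frame shorten $K$ and leave $K \capp M$ unchanged. Your lexicographic pair $(\mathit{max}(K \capp M), |K|)$ works just as well as the sum, and the third component (the size of $M$) is unnecessary.
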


\begin{proof}
  The proof proceeds by induction on the sum $| K | + \mathit{max}(K \capp M)$, and we
  show that all reducts of the computation $K \capp (\tmopout{op}{V}{M})$ are strongly
  normalising.
\end{proof}

\begin{proposition}
  \label{prop:sn-seq:signals-reducible}
  If $\Gamma \types V : A_\op$ and $M \in \CRed_X^\Gamma$, then
  $\tmopout{op}{V}{M} \in \CRed_X^\Gamma$.
\end{proposition}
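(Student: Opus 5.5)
The plan is to unfold the definition of $\CRed_X^\Gamma$ and reduce the claim directly to \cref{prop:sn-seq:signals-sn}. Take an arbitrary context $\Gamma'$, a renaming $\rho \in \Ren{\Gamma}{\Gamma'}$, and a continuation $K \in \KRed_X^{\Gamma'}$. We must show that $K \capp ((\tmopout{op}{V}{M})[\rho]) \in \SN$.

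First, renaming commutes with the term formers. We therefore have $(\tmopout{op}{V}{M})[\rho] = \tmopout{op}{V[\rho]}{M[\rho]}$, and typing is preserved, so $\Gamma' \types V[\rho] : A_\op$ and $\Gamma' \types M[\rho] : X$.

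Second, since $M \in \CRed_X^\Gamma$, instantiating its definition with the same $\Gamma'$, $\rho$ and $K$ gives $K \capp (M[\rho]) \in \SN_{\!Y}$. Here $Y$ is the result type of $K$; well-typedness of $K$ as $\Gamma' \types K : X \multimap Y$ is part of it being in $\KRed_X^{\Gamma'}$.

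Third, we apply \cref{prop:sn-seq:signals-sn} in context $\Gamma'$ to $K$, $\op$, $V[\rho]$ and $M[\rho]$. This yields $K \capp (\tmopout{op}{V[\rho]}{M[\rho]}) \in \SN_{\!Y}$, which is exactly what is required.

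There is no real obstacle here. All the combinatorial work, namely the induction on $|K| + \mathit{max}(K \capp M)$ that handles signals propagating out through $\tmkw{let}$-frames via \eqref{r3} and past interrupts via \eqref{r8}, is already in \cref{prop:sn-seq:signals-sn}. The only points to check are bookkeeping: that the syntactic action of renamings distributes over the signal constructor, and that the Kripke quantification in $\CRed$ lines up, since the same $\Gamma'$, $\rho$ and $K$ are used for both $M$ and the signal.
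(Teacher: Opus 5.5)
Your proposal is correct and follows the paper's own route exactly. The paper derives this proposition by a straightforward application of the definition of $\CRed_X^\Gamma$ to \cref{prop:sn-seq:signals-sn}, which is precisely your unfolding with the same $\Gamma'$, $\rho$, $K$, together with renaming commuting with the signal constructor.
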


Next, we prove that interrupts $\tmopin{op}{V}{M}$ are reducible, 
again using auxiliary results. 

\begin{proposition}
  \label{prop:sn-seq:interrupts-sn}
  Consider a continuation $\Gamma \types K : X \multimap Y$, an operation $\op
  \in \Sigma$, and values $\Gamma \types V : A_\op$ and $\Gamma \types W :
  X$. If $K \capp (\tmreturn{W}) \in \SN_{\!Y}$, then $K \capp
  (\tmopin{op}{V}{\tmreturn{W}}) \in \SN_{\!Y}$.  
\end{proposition}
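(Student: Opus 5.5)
We prove the statement by well-founded induction on the pair $(| K |, \mathit{max}(K \capp (\tmreturn{W})))$, ordered lexicographically or by their sum, as in \cref{prop:sn-seq:signals-sn}. The goal is to show that every one-step reduct of $K \capp (\tmopin{op}{V}{\tmreturn{W}})$ is in $\SN_{\!Y}$. A reduct arises either inside the hole, at the interface between the hole and the innermost frame of $K$, or inside $K$ itself.

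\textbf{Reductions inside the hole.} The only such reduction is the step \eqref{r7} $\tmopin{op}{V}{\tmreturn{W}} \reduces \tmreturn{W}$. Values do not reduce, because the calculus only reduces computations. The resulting term is $K \capp (\tmreturn{W})$, which is in $\SN_{\!Y}$ by assumption.

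\textbf{Reductions at the interface.} Write $K = K' \circ (x)N$ or $K = K' \circ \copin{op'}{V'}$.

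In the first case, $\tmlet{x}{(\tmopin{op}{V}{\tmreturn{W}})}{N}$ is not a redex at the top, since none of \eqref{r2}--\eqref{r5} applies to an interrupt.

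In the second case, $\tmopin{op'}{V'}{\tmopin{op}{V}{\tmreturn{W}}}$ is also not an outer redex, since no rule fires on an interrupt under an interrupt. Any other reduction must then happen inside the hole, which is the previous case, or further out in $K$.

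\textbf{Reductions inside $K$.} The remaining reducts come from reducing inside some $N$ or $V'$ of a frame of $K$. This is delicate: after a step that is external to the hole, the residual context is no longer a plain continuation (for example, once the outer let-redex fires under other shapes of term). For our term, however, the hole never becomes a value, so no frame containing the hole fires. Every reduct therefore has the form $K_1 \capp (\tmopin{op}{V}{\tmreturn{W}})$ with $K \capp (\tmreturn{W}) \reduces K_1 \capp (\tmreturn{W})$.

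Stated precisely, the same step performed on $K \capp (\tmreturn{W})$ gives $K_1$ with $|K_1| = |K|$ and a strictly smaller $\mathit{max}$. The induction hypothesis then applies.

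\textbf{Main obstacle.} The real difficulty is the interaction that makes the hole reactive. After $K \capp (\tmreturn{W})$ itself reduces, for instance through \eqref{r2} on the innermost let, the inner term with the interrupt does not reduce in the same way: $\tmlet{x}{(\tmopin{op}{V}{\tmreturn{W}})}{N}$ is stuck until \eqref{r7} fires. This stuckness only removes reducts, so it is harmless.

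The other danger is interrupts from $K$ of the form $K' \circ \copin{op'}{V'}$ propagating inward via \eqref{r7}/\eqref{r8}. These cannot act on $\tmopin{op}{V}{-}$, which has no rule as the target of an interrupt. The case analysis is therefore exhaustive.

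I would first formalise a lemma: every reduct of $K \capp (\tmopin{op}{V}{\tmreturn{W}})$ is either $K \capp (\tmreturn{W})$ or $K_1 \capp (\tmopin{op}{V}{\tmreturn{W}})$ where $K \capp (\tmreturn{W}) \reduces K_1 \capp (\tmreturn{W})$. Establishing this lemma by induction on $K$, handling the evaluation context rule \eqref{r13} carefully, is the step I expect to require the most bookkeeping. The proposition then follows immediately by the induction on $\mathit{max}$ described above.
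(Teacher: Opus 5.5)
Your argument is sound, and its core is the paper's proof: the only possible step is \eqref{r7} inside the hole, giving $K \capp (\tmreturn{W})$, and neither kind of innermost frame forms a redex with $\tmopin{op}{V}{\tmreturn{W}}$. The rest of your proposal handles a case that cannot occur.

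Your third case, ``reductions inside $K$'', is empty.
\begin{itemize}
\item The evaluation contexts $\E$ go only into the first component of a $\tmkw{let}$, never into its body $N$.
\item No evaluation context enters a value, so nothing inside the $V'$ of a frame $\copin{op'}{V'}$ reduces. You say this yourself one paragraph earlier, which contradicts ``reducing inside some $N$ or $V'$''.
\end{itemize}

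So every redex of $K \capp M$ lies on the spine from the root down to $M$. At each spine node strictly above the innermost frame, the term is a $\tmkw{let}$ or an interrupt whose immediate subterm is again a $\tmkw{let}$ or an interrupt. No rule among \eqref{r1}--\eqref{r12} has such a left-hand side. This is the precise reason no frame fires, not ``the hole never becomes a value'': frames also fire on signals, interrupt handlers, and awaits.

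Consequently:
\begin{itemize}
\item Your auxiliary lemma holds with only its first disjunct.
\item The induction on $(|K|, \mathit{max}(K \capp (\tmreturn{W})))$ is unnecessary.
\item The ``delicate'' residual-context issue and the ``main obstacle'' you describe never arise.
\end{itemize}

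What your extra machinery buys is robustness. It would still work for a more liberal reduction relation that allows steps inside the bodies of $\tmkw{let}$ frames. The paper's one-line proof depends on the specific evaluation-context semantics of \lambdaAEff.
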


\begin{proof}
  The only possible reduction step is \eqref{r7} under $K$, resulting in $K
  \capp (\tmreturn{W})$.
\end{proof}

\begin{proposition}
  \label{prop:sn-seq:interrupts-reducibility}
  Consider a continuation $\Gamma \types K : X \multimap Y$, an operation $\op
  \in \Sigma$, and a value $\Gamma \types V : A_\op$. If $K \in \KRed_X^\Gamma$,
  then we also have that $K \circ \copin{op}{V} \in \KRed_X^\Gamma$.
\end{proposition}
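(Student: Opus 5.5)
Unfolding the definition of $\KRed_X^\Gamma$, we fix a context $\Gamma'$, a renaming $\rho \in \Ren{\Gamma}{\Gamma'}$ and a value $W \in \VRed_X^{\Gamma'}$. We must show that
\[
  ((K \circ \copin{op}{V})[\rho]) \capp (\tmreturn{W}) \in \SN.
\]
Renaming distributes over the composition, so the left-hand side is $(K[\rho] \circ \copin{op}{V[\rho]}) \capp (\tmreturn{W})$. By the definition of continuation application, this equals
\[
  K[\rho] \capp (\tmopin{op}{V[\rho]}{\tmreturn{W}}).
\]
Hence it suffices to invoke \cref{prop:sn-seq:interrupts-sn} in the context $\Gamma'$, with continuation $K[\rho]$ and values $V[\rho]$ and $W$.

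To apply \cref{prop:sn-seq:interrupts-sn} we must check its hypothesis, namely $K[\rho] \capp (\tmreturn{W}) \in \SN$. This follows from the assumption $K \in \KRed_X^\Gamma$ instantiated at $\Gamma'$, $\rho$ and $W$, since that definition quantifies over exactly such renamings and reducible values. We also need the typing side conditions, $\Gamma' \types V[\rho] : A_\op$ and $\Gamma' \types K[\rho] : X \multimap Y$. These follow from the standard fact that renamings preserve typing of values and continuations.

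The main obstacle lies in \cref{prop:sn-seq:interrupts-sn}, which we may assume, so this proposition is essentially bookkeeping. If one did not take that result for granted, its proof would be a case analysis on the reducts of $K \capp (\tmopin{op}{V}{\tmreturn{W}})$. Its proof in the paper claims that the only step is \eqref{r7} under $K$, but this needs care. The continuation $K$ might itself contain redexes in subterms that are not in evaluation position, such as the bodies $N$ of $(x)N$ frames, though these cannot reduce under the evaluation-context semantics. It might also have an outer $(x)N$ frame whose rule \eqref{r2} could fire only once the interrupt is gone. Only \eqref{r7} applies first, giving $K \capp (\tmreturn{W})$, which is in $\SN$ by assumption.

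The one point to verify in our step is the syntactic identity $(K \circ \copin{op}{V})[\rho] = K[\rho] \circ \copin{op}{V[\rho]}$. This holds by the definition of the action of renamings on continuations.
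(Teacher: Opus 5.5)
Your proof is correct and follows the paper's argument exactly. You unfold $\KRed_X^\Gamma$, rewrite $((K \circ \copin{op}{V})[\rho]) \capp (\tmreturn{W})$ as $K[\rho] \capp (\tmopin{op}{V[\rho]}{\tmreturn{W}})$, and apply \cref{prop:sn-seq:interrupts-sn}, whose hypothesis is just $K \in \KRed_X^\Gamma$ instantiated at $\Gamma'$, $\rho$, $W$. Your digression on the proof of \cref{prop:sn-seq:interrupts-sn} is not needed here, but its conclusion is right: no rule fires on a $\tmkw{let}$ or an interrupt whose immediate subterm is an interrupt, and frame bodies and payloads are not in evaluation position, so \eqref{r7} is indeed the only step.
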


\begin{proof}
  By direct application of \cref{prop:sn-seq:interrupts-sn} and the definition of
  $\KRed_X^\Gamma$.
\end{proof}

\begin{proposition}
  \label{prop:sn-seq:interrupts-reducible}
  If $\Gamma \types V : A_\op$ and $M \in \CRed_X^\Gamma$, then
  $\tmopin{op}{V}{M} \in \CRed_X^\Gamma$.
\end{proposition}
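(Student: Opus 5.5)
The plan is to unfold the definition of $\CRed_X^\Gamma$ and reduce the claim to the reducibility of $M$ in an extended environment, with \cref{prop:sn-seq:interrupts-reducibility} doing the real work.

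First fix an arbitrary context $\Gamma'$, a renaming $\rho \in \Ren{\Gamma}{\Gamma'}$, and a continuation $K \in \KRed_X^{\Gamma'}$. We must show $K \capp ((\tmopin{op}{V}{M})[\rho]) \in \SN$.

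Since renaming commutes with the term formers, $(\tmopin{op}{V}{M})[\rho] = \tmopin{op}{V[\rho]}{M[\rho]}$. By the definition of application of continuations, $K \capp (\tmopin{op}{V[\rho]}{M[\rho]}) = (K \circ \copin{op}{V[\rho]}) \capp (M[\rho])$. This is a purely syntactic identity, so the interrupt can be viewed as part of the environment rather than of the computation.

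Next, $\Gamma' \types V[\rho] : A_\op$ holds by the standard fact that renaming preserves typing. Then \cref{prop:sn-seq:interrupts-reducibility}, applied to $K \in \KRed_X^{\Gamma'}$, gives $K \circ \copin{op}{V[\rho]} \in \KRed_X^{\Gamma'}$.

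Finally, from $M \in \CRed_X^\Gamma$ we instantiate the definition with $\Gamma'$, the renaming $\rho$, and the reducible continuation $K \circ \copin{op}{V[\rho]}$. This yields $(K \circ \copin{op}{V[\rho]}) \capp (M[\rho]) \in \SN$, which is exactly the goal.

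I do not expect a genuine obstacle here. The substantive work is already contained in \cref{prop:sn-seq:interrupts-sn,prop:sn-seq:interrupts-reducibility}: continuations were designed to include interrupts, so the case for $\tmopin{op}{V}{M}$ comes for free. The only points needing care are bookkeeping. One is checking that the renaming action distributes over $\tmkw{\downarrow}$. The other is the continuation typing $\Gamma' \types K \circ \copin{op}{V[\rho]} : X \multimap Y$, which uses the third continuation typing rule.
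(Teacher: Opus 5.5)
Your proof is correct and is the argument the paper intends. You fix $\Gamma'$, $\rho$, and $K \in \KRed_X^{\Gamma'}$, rewrite $K \capp \tmopin{op}{V[\rho]}{M[\rho]}$ as $(K \circ \copin{op}{V[\rho]}) \capp M[\rho]$, get reducibility of the extended continuation from \cref{prop:sn-seq:interrupts-reducibility} at $\Gamma'$, and conclude from $M \in \CRed_X^\Gamma$; the real work, as you note, sits in \cref{prop:sn-seq:interrupts-sn,prop:sn-seq:interrupts-reducibility}.
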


Next, we prove that $\tmawait{V}{x}{N}$ is reducible. Observe that as the
(blocked) continuation $N$ can contain an additional variable $x$, the
assumptions of the propositions below encode a condition as if we were
considering $N$ as a function $\tmfun{x \of X}{N}$.

\begin{proposition}
  \label{prop:sn-seq:awaiting-sn}
  Consider a continuation $\Gamma \types K : Y \multimap Z$, a value $\Gamma
  \types V : X$, and $\Gamma, x \of X \types N : Y$. If $K \capp N[V/x] \in
  \SN_{\!Z}$, then $K \capp (\tmawait{\tmpromise{V}}{x}{N}) \in \SN_{\!Z}$.
\end{proposition}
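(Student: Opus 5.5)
We prove the statement by induction on $|K| + \mathit{max}(K \capp N[V/x])$, following the pattern of \cref{prop:sn-seq:signals-sn}. We show that every one-step reduct of $K \capp (\tmawait{\tmpromise{V}}{x}{N})$ is strongly normalising; the inductive characterisation of $\SN$ then gives the claim.

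The main preparatory step is a classification of these reducts. The subterm $\tmawait{\tmpromise{V}}{x}{N}$ sits at the hole of $K$, and $N$ lies under $\tmkw{await}$, which is not an evaluation context. Hence a reduction can only do one of three things.

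\emph{(a)} It fires \eqref{r12} at the hole. The result is exactly $K \capp N[V/x]$, which is in $\SN$ by assumption.

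\emph{(b)} It reduces strictly inside $K$. This can happen in a continuation body $N'$ of some $(y)N'$ frame, or in an interrupt argument value. Values do not reduce, so in practice this means some $N'$ in $K$. Suppose $K$ becomes $K'$ with $|K'| = |K|$. The same step applied to $K \capp N[V/x]$ gives $K' \capp N[V/x]$, so $\mathit{max}$ strictly decreases and the induction hypothesis applies. If $N'$ sits under a $\tmkw{let}$ whose bound term is the plugged computation, it is not in evaluation position anyway. Frames of $K$ above the hole are in evaluation position only through the $\tmkw{let}$/interrupt structure that wraps our redex. So such steps only occur where the same step is available with $N[V/x]$ plugged in.

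\emph{(c)} It is a step at the interface between the await and the innermost frame of $K$.

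Case (c) splits by the innermost frame.

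If $K = K' \circ (y)M'$, rule \eqref{r5} gives $K' \capp (\tmawait{\tmpromise{V}}{x}{(\tmlet{y}{N}{M'})})$. Here $|K'| < |K|$. Also $K' \capp (\tmlet{y}{N}{M'})[V/x]$ is syntactically $K \capp N[V/x]$, assuming $x$ is fresh for $M'$. So the induction hypothesis applies with the smaller length and $\mathit{max}$ unchanged.

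If $K = K' \circ \copin{op}{W}$, rule \eqref{r11} gives $K' \capp (\tmawait{\tmpromise{V}}{x}{\tmopin{op}{W}{N}})$. This case is symmetric: again $|K'| < |K|$, and the substituted computation $K' \capp \tmopin{op}{W}{N[V/x]}$ equals $K \capp N[V/x]$.

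The measure $|K| + \mathit{max}$ is designed so that these interface steps shrink $|K|$ while keeping $\mathit{max}$ fixed.

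The main obstacle I expect is making case (b) rigorous. This means showing that any reduction of $K \capp (\tmawait{\tmpromise{V}}{x}{N})$ not involving the await is mirrored by a reduction of $K \capp N[V/x]$ with the same resulting continuation. It also means checking that no other rule, such as \eqref{r3}, \eqref{r4}, or \eqref{r9}, can interact with the hole. Those rules require a signal, handler, or return at the hole, and an $\tmkw{await}$ is none of these, so they cannot fire there.

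I would first prove a small lemma stating this "step inside $K$" property generically for any plugged computation that is not itself a return, signal, or handler. I would then reuse it in this proof and its siblings. Substitution commuting with $K$ (freshness of $x$) is routine.
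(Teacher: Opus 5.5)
Your proof is correct and is essentially the paper's argument. Every reduct is shown to be strongly normalising: the \eqref{r12} step lands on the hypothesis, and the \eqref{r5}/\eqref{r11} interface steps are handled by the induction hypothesis for a shorter continuation whose substituted instance is literally $K \capp N[V/x]$. The only difference is that the $\mathit{max}(K \capp N[V/x])$ component of your measure is superfluous, because your case (b) is empty: bodies of $(y)N'$ frames are never in evaluation position, interrupt payloads are values, and no rule fires on two adjacent frames of $K$. This is why the paper simply inducts on $|K|$.
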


\begin{proof}
  The proof proceeds by induction on $| K |$, and we show that all reducts of the computation $K \capp (\tmawait{\tmpromise{V}}{x}{N})$ are strongly normalising.
\end{proof}

\begin{proposition}
  \label{prop:sn-seq:awaiting-reducibility}
  Consider a continuation $\Gamma \types K : Y \multimap Z$ and a computation
  term $\Gamma, x \of X \types N : Y$. If $K \in \KRed_Y^\Gamma$ and
  $N[\mathit{lift}(\rho)][W/x] \in \CRed_Y^{\Gamma'}$, for all $\Gamma'$, $\rho \in
  \Ren{\Gamma}{\Gamma'}$, and $W \in \VRed_X^{\Gamma'}$, then $K \circ
  \tmpromise{x}N \in \ARed_X^\Gamma$.
\end{proposition}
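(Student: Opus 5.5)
We unfold the definition of $\ARed_X^\Gamma$. Fix $\Gamma'$, a renaming $\rho \in \Ren{\Gamma}{\Gamma'}$, and a value $W \in \VRed_X^{\Gamma'}$. Renaming distributes over the await continuation:
\[
(K \circ \tmpromise{x}N)[\rho] = K[\rho] \circ \tmpromise{x}(N[\mathit{lift}(\rho)]).
\]
Hence we must show
\[
K[\rho] \capp (\tmawait{\tmpromise{W}}{x}{N[\mathit{lift}(\rho)]}) \in \SN.
\]
The plan is to reduce this to \cref{prop:sn-seq:awaiting-sn}. That result, instantiated in context $\Gamma'$ with continuation $K[\rho]$, value $W$ and body $N[\mathit{lift}(\rho)]$, leaves one goal: $K[\rho] \capp (N[\mathit{lift}(\rho)][W/x]) \in \SN$.

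We obtain this goal from the two hypotheses. First, by \cref{prop:sn-seq:reducibility-renaming} applied to $K \in \KRed_Y^\Gamma$, we get $K[\rho] \in \KRed_Y^{\Gamma'}$. Second, the hypothesis on $N$ instantiated at $\Gamma'$, $\rho$ and $W$ gives $N[\mathit{lift}(\rho)][W/x] \in \CRed_Y^{\Gamma'}$. Now unfold $\CRed_Y^{\Gamma'}$ using the identity renaming on $\Gamma'$ and the continuation $K[\rho]$. This yields $K[\rho] \capp (N[\mathit{lift}(\rho)][W/x]) \in \SN$, up to the fact that acting with the identity renaming is the identity on terms.

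Typing side conditions need a quick check: $\Gamma' \types K[\rho] : Y \multimap Z$, $\Gamma' \types W : X$, and $\Gamma', x \of X \types N[\mathit{lift}(\rho)] : Y$. All three follow because renaming preserves typing.

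The mathematical content is already in \cref{prop:sn-seq:awaiting-sn}, so no step here is a real obstacle. What needs care is bookkeeping:
\begin{itemize}
\item the equation $(K \circ \tmpromise{x}N)[\rho] = K[\rho] \circ \tmpromise{x}(N[\mathit{lift}(\rho)])$, which holds by the definition of the renaming action on await continuations;
\item the identity-renaming law $M[\mathit{id}] = M$.
\end{itemize}
In the Agda formalisation these would be standard renaming lemmas. If the intended reading of $N[\mathit{lift}(\rho)][W/x]$ differs (for example, a simultaneous substitution), we would insert the corresponding substitution–renaming commutation lemma at that point.
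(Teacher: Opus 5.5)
Your proof is correct and follows the intended route. You unfold $\ARed_X^\Gamma$, push the renaming through the await continuation, and apply \cref{prop:sn-seq:awaiting-sn} in $\Gamma'$; you then discharge its premise by instantiating the hypothesis on $N$ at the identity renaming with $K[\rho] \in \KRed_Y^{\Gamma'}$, which you get from \cref{prop:sn-seq:reducibility-renaming}. The paper leaves this step implicit, following the same pattern by which \cref{prop:sn-seq:signals-reducible} is derived from \cref{prop:sn-seq:signals-sn}.
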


\begin{proposition}
  \label{prop:sn-seq:awaiting-reducible}
  Consider a value $\Gamma \types V : \typromise{X}$ and a computation $\Gamma,
  x \of X \types N : Y$. If we have $V \in \VRed_{\typromise{X}}^\Gamma$ and
  $N[\mathit{lift}(\rho)][W/x] \in \CRed_Y^{\Gamma'}$, for all $\Gamma'$, $\rho \in
  \Ren{\Gamma}{\Gamma'}$, and $W \in \VRed_X^{\Gamma'}$, then we also have that
  $\tmawait{V}{x}{N} \in \CRed_Y^\Gamma$.
\end{proposition}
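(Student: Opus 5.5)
The plan is to unfold the definition of $\CRed_Y^\Gamma$ and reduce the claim to the reducibility of $V$ at the promise type, using \cref{prop:sn-seq:awaiting-reducibility} to produce a suitable reducible await continuation.

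Fix $\Gamma'$, $\rho \in \Ren{\Gamma}{\Gamma'}$ and $K \in \KRed_Y^{\Gamma'}$. We must show $K \capp ((\tmawait{V}{x}{N})[\rho]) \in \SN$. Applying the renaming gives
\[
(\tmawait{V}{x}{N})[\rho] = \tmawait{V[\rho]}{x}{N[\mathit{lift}(\rho)]}.
\]
By the definition of application of await continuations, this equals $(K \circ \tmpromise{x}N[\mathit{lift}(\rho)]) \capp (V[\rho])$.

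Because $V \in \VRed_{\typromise{X}}^\Gamma$, taking the renaming $\rho$ in its definition shows it suffices to prove
\[
K \circ \tmpromise{x}N[\mathit{lift}(\rho)] \in \ARed_X^{\Gamma'}.
\]
Here one uses the identity renaming on $\Gamma'$ in the definition of $\ARed$, or equivalently \cref{prop:sn-seq:reducibility-renaming}, to transport $V$ to $\Gamma'$.

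This membership is exactly what \cref{prop:sn-seq:awaiting-reducibility} provides, applied in context $\Gamma'$ to $K$ and the body $N' \defeq N[\mathit{lift}(\rho)]$. Its hypothesis requires that, for all $\Gamma''$, $\rho' \in \Ren{\Gamma'}{\Gamma''}$ and $W \in \VRed_X^{\Gamma''}$, we have $N'[\mathit{lift}(\rho')][W/x] \in \CRed_Y^{\Gamma''}$. Since renamings compose and lifting is functorial, $N[\mathit{lift}(\rho)][\mathit{lift}(\rho')] = N[\mathit{lift}(\rho' \circ \rho)]$. Therefore the hypothesis follows from our assumption on $N$, instantiated at $\Gamma''$, the composite renaming $\rho' \circ \rho \in \Ren{\Gamma}{\Gamma''}$, and $W$.

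The main obstacle is purely bookkeeping. We must check that the syntactic identities hold on the nose:
\begin{itemize}
\item renaming commutes with the $\tmkw{await}$ binder via $\mathit{lift}$;
\item lifting preserves composition of renamings.
\end{itemize}
These are standard renaming lemmas, proved by induction on terms. They are what makes the Kripke-style quantification over all $\Gamma'$ and $\rho$ in the assumption on $N$ strong enough. With them in hand, the argument is a direct chain of definitional unfoldings.
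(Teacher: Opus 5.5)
Your proof is correct and follows the route the paper sets up. You unfold $\CRed_Y^\Gamma$, rewrite $K \capp (\tmawait{V[\rho]}{x}{N[\mathit{lift}(\rho)]})$ as the await continuation $K \circ \tmpromise{x}N[\mathit{lift}(\rho)]$ applied to $V[\rho]$, and discharge it using $V \in \VRed_{\typromise{X}}^\Gamma$ at $\rho$ together with \cref{prop:sn-seq:awaiting-reducibility} in context $\Gamma'$, via $\mathit{lift}(\rho')\circ\mathit{lift}(\rho) = \mathit{lift}(\rho'\circ\rho)$. Your aside about the identity renaming is unnecessary, since instantiating the definition of $\VRed_{\typromise{X}}^\Gamma$ at $\rho$ already gives exactly what is needed.
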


Finally, we prove that $\tmwith{op}{x}{M_\op}{p}{N}$ is reducible. This is the
most involved case for the \lambdaAEff-specific computations because the
continuation $N$ is non-blocking and can contain the promise-typed variable $p$.
In particular, observe that to be able to reason about $K[\mathit{wk}] \capp N$ in
\cref{prop:sn-seq:interrupt-handlers-sn} below, it is crucial that we have
defined the reducibility predicates in Kripke style, because $\Gamma \types K
: Y \multimap Z$, but $\Gamma, p \of \typromise{X} \types K[\mathit{wk}] \capp N : Z$.

First, in \cref{prop:sn-seq:interrupt-handlers-auxiliary-results} we prove some
auxiliary results about the reducibility of continuations. Then, in
\cref{prop:sn-seq:interrupt-handlers-sn} we do most of the heavy lifting,
showing that if the handler code and the non-blocking continuation of the
interrupt handler are strongly normalising, then so is the interrupt handler. We
then use these results to prove reducibility in \cref{prop:sn-seq:interrupt-handlers-reducible}.

\begin{proposition}
  \label{prop:sn-seq:interrupt-handlers-auxiliary-results}
  For suitably typed continuations, values, and computations, we have:
  \begin{itemize}
    \item If $M \reduces N$ and $K \circ (x)M \in \KRed_X^\Gamma$, then $K \circ (x)N
    \in \KRed_X^\Gamma$.
    \item If $K \circ (y)N \circ (x)M \in \KRed_X^\Gamma$, then $K \circ
    (x)(\tmlet{y}{M}{N[\mathit{lift}(\mathit{wk})]}) \in \KRed_X^\Gamma$.
    \item If $K \circ (x)(\tmopout{op}{V}{M}) \in \KRed_X^\Gamma$, then $K \circ
    (x)M \in \KRed_X^\Gamma$.
    \item If $K \circ \copin{op}{V} \circ (x)N \in \KRed_X^\Gamma$, then $K
    \circ (x)(\tmopin{op}{V[\mathit{wk}]}{N}) \in \KRed_X^\Gamma$.
  \end{itemize}
\end{proposition}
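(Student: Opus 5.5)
All four items share one shape. Unfold $\KRed_X^\Gamma$: fix $\Gamma'$, $\rho \in \Ren{\Gamma}{\Gamma'}$ and $V' \in \VRed_X^{\Gamma'}$. Since renamings commute with the constructors, we may assume $\rho$ has been pushed inside. We must show that the target continuation applied to $\tmreturn{V'}$ is strongly normalising, knowing that the source continuation applied to $\tmreturn{V'}$ is. In each case we first take one (r2) step to expose the body with $V'$ substituted. Then we relate that term to the one from the hypothesis by a general lemma, proved by induction on the $\SN$ derivation.

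For the first item, the hypothesis gives $K \capp (\tmlet{x}{\tmreturn{V'}}{M}) \in \SN$. Since $K\capp(\ldots)$ can reduce at the head, we get $K \capp M[V'/x] \in \SN$. Reduction is closed under substitution, so $M[V'/x] \reduces N[V'/x]$, and therefore $K \capp N[V'/x] \in \SN$, as a reduct of an SN term. It remains to pass from $K \capp N[V'/x] \in \SN$ back to $K \capp (\tmlet{x}{\tmreturn{V'}}{N}) \in \SN$. For this we use a \emph{head-expansion} lemma: if $K \capp N[V/x] \in \SN$ then $K \capp (\tmlet{x}{\tmreturn{V}}{N}) \in \SN$.

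We prove head expansion by induction on $|K| + \mathit{max}(K\capp N[V/x])$ together with the $\SN$ derivation of $N$. The argument mirrors the proofs of \cref{prop:sn-seq:awaiting-sn} and \cref{prop:sn-seq:signals-sn}. We case on where a reduct of $K \capp (\tmlet{x}{\tmreturn{V}}{N})$ arises. It is either the head (r2) step, or a step inside $N$, which uses the IH since the step is preserved by substitution, or a step inside $K$. The last case needs care: the let could interact with an interrupt frame of $K$, but $\tmopin{op}{W}{(\tmlet{\ldots})}$ has no rule with a let as its argument. So the only interactions are within $K$'s own frames, or the r2 step of the enclosing frame after ours becomes a return, which is handled by the measure.

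The remaining items reduce to the first item, to head expansion, and to analogous commuting lemmas. For the second item, $K \capp (\tmlet{y}{M[V'/x]}{N}) = (K\circ(y)N)\capp M[V'/x]$, which is SN by the same head-contraction. We must show $K \capp (\tmlet{x}{\tmreturn{V'}}{\tmlet{y}{M}{N}})$ is SN, which is head expansion again, using that $N[\mathit{lift}(\mathit{wk})][V'/x] = N$.

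For the third item, the hypothesis yields $K \capp \tmopout{op}{V}{M[V'/x]} \in \SN$. We need a converse of \cref{prop:sn-seq:signals-sn}: if $K \capp \tmopout{op}{V}{M} \in \SN$ then $K\capp M \in \SN$. This holds because every reduction step of $K\capp M$ can be simulated by at least one step of $K \capp \tmopout{op}{V}{M}$ after hoisting the signal via (r3)/(r8). So an infinite sequence from $K\capp M$ would transfer. Head expansion then finishes the item.

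For the fourth item, we similarly get $K \capp \tmopin{op}{V}{N[V'/x]} \in \SN$ from the hypothesis. Head expansion, with $K\circ\copin{op}{V}$ recognised as the same expression, gives the result directly, since $(\tmopin{op}{V[\mathit{wk}]}{N})[V'/x] = \tmopin{op}{V}{N[V'/x]}$.

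The main obstacle is the simulation argument in the third item. We must show rigorously that steps from $K \capp M$ lift to $K\capp \tmopout{op}{V}{M}$, because the signal changes which redexes sit at the head. For example, $\tmlet{x}{\tmreturn{W}}{\ldots}$ redexes in $K$ only fire once the signal has propagated out. I would try first to define a relation $R$ that relates $K\capp M$ to any term obtained by inserting the signal somewhere along the evaluation spine. I would then show it is a simulation in which each step on the left is matched by one or more steps on the right, proved by case analysis on the reduction rules (r3), (r6), (r8).
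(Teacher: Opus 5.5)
Your argument is correct and is essentially the intended one (the paper defers this proof to Agda). You unfold $\KRed$, contract the (r2) redex in the hypothesis, and re-expand via head expansion, which is the paper's \cref{prop:let-return}; for the third item you also use the converse of \cref{prop:sn-seq:signals-sn}, which is the paper's \cref{prop:max-steps-infix-signal}.

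Both lemmas are easier than you make them. First, $K \capp (\tmlet{x}{\tmreturn{V}}{N})$ has $K \capp N[V/x]$ as its \emph{only} reduct, because let-bodies are not evaluation positions and no rule fires on a let or interrupt whose immediate subterm is a let or interrupt; so your induction and your cases for steps inside $N$ or $K$ are vacuous. Second, for the signal you simply hoist it in $|K|$ steps via (r3)/(r8) to $\tmopout{op}{V}{K \capp M}$, under which every reduction sequence of $K \capp M$ lifts, so no simulation relation is needed.
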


\begin{proposition}
  \label{prop:sn-seq:interrupt-handlers-sn}
  Consider a continuation $\Gamma \types K : Y \multimap Z$, an operation $\op
  \in \Sigma$, and computations $\Gamma, x \of A_\op \types M_\op :
  \typromise{X}$ and $\Gamma, p \of \typromise{X} \types N : Y$. If $K \circ
  (p)N \in \KRed_{\typromise{X}}^\Gamma$, $K[\mathit{wk}] \capp N \in \SN_{\!Z}$, and
  $M_\op[\mathit{lift}(\rho)][V/x] \in \CRed_{\typromise{X}}^{\Gamma'}$, for all
  $\Gamma'$, $\rho \in \Ren{\Gamma}{\Gamma'}$, and $V \in
  \VRed_{A_\op}^{\Gamma'}$, then $K \capp \tmwith{op}{x}{M_\op}{p}{N} \in
  \SN_{\!Z}$.
\end{proposition}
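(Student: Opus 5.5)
We argue by well-founded induction on a lexicographic measure. Its main component is $|K|$ together with the quantity $\mathit{max}(K[\mathit{wk}] \capp N)$, which is finite by the hypothesis $K[\mathit{wk}] \capp N \in \SN_{\!Z}$. The statement is generalised over all $N$ and all $K$ satisfying the three hypotheses. Concretely, we induct on $|K| + \mathit{max}(K[\mathit{wk}] \capp N)$, as in \cref{prop:sn-seq:signals-sn}. Using the inductive characterisation of $\SN$, it suffices to show that every one-step reduct of $K \capp \tmwith{op}{x}{M_\op}{p}{N}$ is strongly normalising.

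We classify the reducts by where the redex sits.

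(a) The step happens inside $N$, via \eqref{r13} under the handler, giving $N \reduces N'$. Then $K[\mathit{wk}] \capp N \reduces K[\mathit{wk}] \capp N'$, so $\mathit{max}$ strictly decreases. By the first item of \cref{prop:sn-seq:interrupt-handlers-auxiliary-results}, $K \circ (p)N' \in \KRed$, so the induction hypothesis applies.

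(b) The step happens inside $K$ only, away from the plugged term (e.g.\ in some continuation body of $K$). Then the corresponding step also happens in $K[\mathit{wk}] \capp N$, so $\mathit{max}$ decreases. Reducibility of the new $K \circ (p)N$ is inherited by the analogue of the first item for steps deeper in $K$; if that is not available, we would first prove it as a lemma by induction on $K$.

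(c) $N = \tmopout{op'}{V}{N_1}$ and \eqref{r6} fires. The result is $K \capp \tmopout{op'}{V}{\tmwith{op}{x}{M_\op}{p}{N_1}}$. We apply \cref{prop:sn-seq:signals-sn}, so it suffices to show $K \capp \tmwith{op}{x}{M_\op}{p}{N_1} \in \SN$. This follows from the induction hypothesis with $N_1$. The hypothesis $K \circ (p)N_1 \in \KRed$ comes from the third item of \cref{prop:sn-seq:interrupt-handlers-auxiliary-results}. Strong normalisation of $K[\mathit{wk}] \capp N_1$, with smaller $\mathit{max}$, comes from $K[\mathit{wk}] \capp N \reduces^* \ldots$; we need to be careful here. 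Instead, we observe that $K[\mathit{wk}] \capp N$ reduces (by propagating the signal through $K$'s let-frames and interrupts via \eqref{r3} and \eqref{r8}) to $\tmopout{op'}{V}{K[\mathit{wk}] \capp N_1}$ in at least one step. Hence $\mathit{max}(K[\mathit{wk}] \capp N_1) < \mathit{max}(K[\mathit{wk}] \capp N)$.

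(d) The top frame of $K$ is $(x')N_2$, so \eqref{r4} fires. We get $K' \capp \tmwith{op}{x}{M_\op}{p}{\tmlet{x'}{N}{N_2}}$ with $|K'| < |K|$. The term $K'[\mathit{wk}] \capp (\tmlet{x'}{N}{N_2[\ldots]})$ is syntactically $K[\mathit{wk}] \capp N$, so $\mathit{max}$ is unchanged while $|K|$ drops. Reducibility of $K' \circ (p)(\tmlet{x'}{N}{\ldots})$ is the second item of \cref{prop:sn-seq:interrupt-handlers-auxiliary-results}.

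(e) The top frame is $\copin{op'}{V}$ with $\op' \neq \op$, so \eqref{r10} fires. This case is symmetric to (d), using the fourth item.

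(f) The top frame is $\copin{op}{V}$, so \eqref{r9} fires. The result is $K' \capp \tmlet{p}{M_\op[V/x]}{\tmopin{op}{V}{N}} = (K' \circ \copin{op}{V} \circ (p)N) \capp M_\op[V/x]$, where the continuation is exactly $K \circ (p)N$. Reducibility of $V$ at the ground type $A_\op$ is trivial. Hence $M_\op[V/x] \in \CRed^\Gamma$, taking $\rho$ to be the identity. Since $K \circ (p)N \in \KRed$, the definition of $\CRed$ gives $\SN$ directly.

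The main obstacle is cases (b) and (c). In those cases the measure decrease for $K[\mathit{wk}] \capp N$ must be justified by simulating the reduct's step inside $K[\mathit{wk}] \capp N$. Case (c) additionally needs the signal-commutation through all of $K$. Where the simulation is awkward, we would replace the measure by the $\SN$-accessibility of $K[\mathit{wk}] \capp N$ itself under reduction plus the $K$-commuting steps, paired lexicographically with $|K|$.
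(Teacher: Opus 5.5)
Your case split and your use of the auxiliary items and of \cref{prop:sn-seq:signals-sn} match the paper's proof, but your induction measure fails in case (c). You claim that $K[\mathit{wk}] \capp N$ reaches $\tmopout{op'}{V}{K[\mathit{wk}] \capp N_1}$ in at least one step. It actually takes exactly $|K|$ steps (one \eqref{r3} or \eqref{r8} per frame), so zero steps when $K = \Id$. In that case $\Id \capp N = \tmopout{op'}{V}{N_1}$. Since a top-level signal is part of no redex, $\mathit{max}(\tmopout{op'}{V}{N_1}) = \mathit{max}(N_1)$. So $|K| + \mathit{max}(K[\mathit{wk}] \capp N)$ does not decrease when you recurse on $N_1$, and the induction hypothesis is not available. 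Your fallback does not help either: for $K = \Id$ there are no $K$-commuting steps, and the \eqref{r6} step of the handler has no counterpart in $K \capp N$. The missing ingredient is the paper's second lexicographic component, the structure of $N$. With it, case (c) only needs $\mathit{max}(K \capp N_1) \le \mathit{max}(K \capp N)$, which your own reduction sequence gives (cf.\ \cref{prop:max-steps-infix-signal}), together with $N_1$ being a proper subterm of $N$. In cases (a), (d), (e) the first component decreases strictly, so it does not matter that $N$ may grow there.

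Case (f) also contains a mistake. The reduct $K' \capp (\tmlet{p}{M_\op[V/x]}{\tmopin{op}{V}{N}})$ equals $(K' \circ (p)(\tmopin{op}{V}{N})) \capp M_\op[V/x]$, not $(K \circ (p)N) \capp M_\op[V/x]$. The latter unfolds to $K' \capp \tmopin{op}{V}{(\tmlet{p}{M_\op[V/x]}{N})}$, with the interrupt around the let rather than inside its body. So you must use the fourth item of \cref{prop:sn-seq:interrupt-handlers-auxiliary-results} to get $K' \circ (p)(\tmopin{op}{V}{N}) \in \KRed_{\typromise{X}}$ from $K \circ (p)N = K' \circ \copin{op}{V} \circ (p)N \in \KRed_{\typromise{X}}$; then your argument goes through as in the paper. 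Finally, case (b) is vacuous and needs no extra lemma. Let-bodies and interrupt payloads are not evaluation positions, and no two adjacent frames of $K$ form a redex. Hence every step either happens inside the handler term or involves the innermost frame of $K$.
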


\begin{proof}
  The proof is by induction on the lexicographic order of $| K | + \mathit{max}(K[\mathit{wk}]
  \capp N)$ and the structure of $N$, and we show that all reducts of $K \capp
  \tmwith{op}{x}{M_\op}{p}{N}$ are strongly normalising. For details, we refer
  the reader to \cref{appendix:prop:sn-seq:interrupt-handlers-sn}, or to our Agda
  code.  
\end{proof}

\begin{proposition}
  \label{prop:sn-seq:interrupt-handlers-reducible}
  Consider an operation $\op \in \Sigma$, and computations $\Gamma, x \of A_\op
  \types M_\op : \typromise{X}$ and $\Gamma, p \of \typromise{X} \types N : Y$.
  If we have $M_\op[\mathit{lift}(\rho)][V/x] \in \CRed_{\typromise{X}}^{\Gamma'}$, for
  all $\Gamma'$, $\rho \in \Ren{\Gamma}{\Gamma'}$, and $V \in
  \VRed_{A_\op}^{\Gamma'}$, and if we also have $N[\mathit{lift}(\rho)][W/p] \in
  \CRed_Y^{\Gamma'}$, for all $\Gamma'$, $\rho \in \Ren{\Gamma}{\Gamma'}$, and
  $W \in \VRed_{\typromise{X}}^{\Gamma'}$, then we have that $\tmwith{op}{x}{M_\op}{p}{N} \in
  \CRed_Y^\Gamma$.
\end{proposition}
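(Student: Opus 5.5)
We unfold the definition of $\CRed_Y^\Gamma$. Fix $\Gamma'$, $\rho \in \Ren{\Gamma}{\Gamma'}$, and $K \in \KRed_Y^{\Gamma'}$. Since $(\tmwith{op}{x}{M_\op}{p}{N})[\rho] = \tmwith{op}{x}{M_\op[\mathit{lift}(\rho)]}{p}{N[\mathit{lift}(\rho)]}$, and the hypotheses on $M_\op$ and $N$ are stable under precomposition with further renamings, we may rename everything into $\Gamma'$. It therefore suffices to prove $K \capp \tmwith{op}{x}{M_\op}{p}{N} \in \SN$ for $K \in \KRed_Y^{\Gamma}$, with the hypotheses stated for arbitrary extensions. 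We obtain this by applying \cref{prop:sn-seq:interrupt-handlers-sn}, which needs three premises.

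The third premise of \cref{prop:sn-seq:interrupt-handlers-sn}, reducibility of $M_\op[\mathit{lift}(\rho)][V/x]$, is exactly our first hypothesis.

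For the second premise, $K[\mathit{wk}] \capp N \in \SN_{\!Z}$, we instantiate the hypothesis on $N$ with the weakening $\mathit{wk} \in \Ren{\Gamma}{(\Gamma, p \of \typromise{X})}$ and with $W = p$. The variable $p$ is reducible by \cref{prop:sn-seq:variables-reducible}, and $N[\mathit{lift}(\mathit{wk})][p/p] = N$ up to the evident renaming identity. This gives $N \in \CRed_Y^{\Gamma, p\of\typromise{X}}$. By \cref{prop:sn-seq:reducibility-renaming}, $K[\mathit{wk}] \in \KRed_Y^{\Gamma,p\of\typromise X}$. Applying the definition of $\CRed$ with the identity renaming then yields $K[\mathit{wk}] \capp N \in \SN$.

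For the first premise, $K \circ (p)N \in \KRed_{\typromise{X}}^\Gamma$, we unfold $\KRed$. Given $\rho' \in \Ren{\Gamma}{\Gamma''}$ and $W \in \VRed_{\typromise X}^{\Gamma''}$, we must show $K[\rho'] \capp (\tmlet{p}{\tmreturn{W}}{N[\mathit{lift}(\rho')]}) \in \SN$. We argue this exactly as in Lindley and Stark's sequencing case. The hypothesis gives $N[\mathit{lift}(\rho')][W/p] \in \CRed_Y^{\Gamma''}$, so $K[\rho'] \capp N[\mathit{lift}(\rho')][W/p] \in \SN$. We then need a standard lemma, proved by induction on $\mathit{max}$ of that term together with $|K|$ as in \cref{prop:sn-seq:awaiting-sn}: if $K \capp N[W/p] \in \SN$, then $K \capp (\tmlet{p}{\tmreturn W}{N}) \in \SN$. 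Its proof examines the reducts. The redex (r2) gives the assumption directly. A reduction inside $K$ or inside $N$ is handled by the induction hypothesis, since it corresponds to a reduction of $K \capp N[W/p]$. Interrupts from $K$ cannot reach the let before it reduces, because $\tmopin{op}{V}{(\tmlet{}{}{})}$ is not a redex. Presumably this lemma is already available from the sequencing case, which the paper defers to the formalisation.

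The main obstacle is the bookkeeping of renamings: checking that $N[\mathit{lift}(\mathit{wk})][p/p]$ coincides with $N$, and that the hypotheses compose correctly with $\rho$. The real combinatorial difficulty is already absorbed by \cref{prop:sn-seq:interrupt-handlers-sn}.
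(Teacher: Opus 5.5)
Your proof is correct and follows the route the paper indicates: unfold $\CRed$, push the renaming into $M_\op$ and $N$, then discharge the three premises of \cref{prop:sn-seq:interrupt-handlers-sn}, using the hypothesis on $M_\op$, the reducibility of $p$ (\cref{prop:sn-seq:variables-reducible}) together with \cref{prop:sn-seq:reducibility-renaming}, and the standard sequencing argument for $K \circ (p)N$. Your auxiliary lemma needs no induction, because evaluation contexts never enter $\tmkw{let}$-bodies and no interrupt rule applies to a $\tmkw{let}$, so $K \capp (\tmlet{p}{\tmreturn{W}}{N})$ has $K \capp N[W/p]$ as its only reduct (this is the paper's \cref{prop:let-return}); your cases for reductions ``inside $K$ or inside $N$'' are therefore vacuous.
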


We can now prove the overall reducibility result for well-typed computations.
The proof is a simple induction on the typing derivation, using the propositions
we proved above.

\begin{theorem}
  \label{thm:sn-seq:fundamental-theorem}
  Consider a well-typed computation $\Gamma \types M : Y$, where $\Gamma = x_1
  \of X_1, \ldots, x_n \of X_n$, and well-typed values $\Gamma' \types V_i :
  X_i$, for all $x_i \of X_i \in \Gamma$. If $V_i \in \VRed_{X_i}^{\Gamma'}$,
  for all $x_i \of X_i \in \Gamma$, then $M[V_1/x_1, \ldots, V_n/x_n] \in
  \CRed_Y^{\Gamma'}$. An analogous result also holds for well-typed values.
\end{theorem}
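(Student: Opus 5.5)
We prove the statement by simultaneous induction on the typing derivations of $\Gamma \types M : Y$ and $\Gamma \types V : X$. Throughout, we write $\sigma$ for the substitution $[V_1/x_1, \ldots, V_n/x_n]$, which maps $\Gamma$ to $\Gamma'$, with each $V_i \in \VRed_{X_i}^{\Gamma'}$. The substitution-level fact we rely on repeatedly is that renamings act pointwise on $\sigma$. That is, for $\rho \in \Ren{\Gamma'}{\Gamma''}$ the renamed substitution $\sigma[\rho]$ is again reducible, by \cref{prop:sn-seq:reducibility-renaming}. It also extends along a bound variable: $\sigma[\rho]$ extended with $W/x$, for $W \in \VRed^{\Gamma''}_X$, is a reducible substitution into $\Gamma''$. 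Moreover, it satisfies $M\sigma[\mathit{lift}(\rho)][W/x] = M(\sigma[\rho], W/x)$ up to the usual substitution lemmas. This equation is exactly what turns the induction hypothesis for a premise with an extra bound variable into the hypotheses of \cref{prop:sn-seq:awaiting-reducible,prop:sn-seq:interrupt-handlers-reducible}.

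\textbf{Value cases.} A variable $x_i$ is sent to $V_i$, which is reducible by assumption. For $\tmfun{x:X}{M}$, we must show that $(\tmfun{x:X}{M\sigma})[\rho]\,W \in \CRed^{\Gamma''}$. By the induction hypothesis applied to the extended substitution, $M(\sigma[\rho],W/x) \in \CRed$. It then remains to show that $\beta$-expansion at the head preserves $\CRed$, namely that $K \capp (\tmfun{x}{M'})\,W \in \SN$ whenever $K \capp M'[W/x] \in \SN$. This is proved by the standard Lindley--Stark induction on $|K|$ plus $\mathit{max}$, exactly as in \cref{prop:sn-seq:awaiting-sn}. For a promise $\tmpromise{V}$, given $K^{\langle\rangle} \in \ARed_X$, its definition applied to the induction hypothesis $V\sigma[\rho] \in \VRed_X$ gives the result directly.

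\textbf{Computation cases.} Returning needs $K \capp \tmreturn{V\sigma} \in \SN$, which is the definition of $\KRed$ (after renaming). Application follows from the function case of $\VRed$ with $\rho = \mathit{id}$. For sequencing $\tmlet{x}{M}{N}$, we first show that $K \circ (x)N\sigma[\mathit{lift}] \in \KRed$ using the induction hypothesis for $N$ together with a $\top\top$ expansion lemma for \eqref{r2}, proved like \cref{prop:sn-seq:awaiting-sn}. Then $\CRed$ of $M\sigma$ applied to this continuation finishes the case. Signals, interrupts, awaiting and interrupt handlers are exactly \cref{prop:sn-seq:signals-reducible,prop:sn-seq:interrupts-reducible,prop:sn-seq:awaiting-reducible,prop:sn-seq:interrupt-handlers-reducible}. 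Here the payload values $V\sigma$ need only be well-typed, and the hypotheses quantified over $\rho$ and $W$ come from the induction hypothesis instantiated at $\sigma[\rho]$ extended with $W$. In the skeletal calculus there is no subeffecting rule, so no further cases arise.

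\textbf{Main obstacle.} We expect the main obstacle to be bookkeeping rather than conceptual. The case for $\tmwith{op}{x}{M_\op}{p}{N}$, and likewise the case for $\tmkw{await}$, requires commuting the substitution $\sigma$ past $\mathit{lift}(\rho)$ and the subsequent single substitution $[W/p]$. It therefore needs carefully stated composition lemmas for renamings and substitutions, which in the Agda formalisation is where most of the effort goes. We would first establish a general lemma that reducible substitutions are closed under renaming and extension, and then discharge each case uniformly.
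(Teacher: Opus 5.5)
Your proposal is correct and follows the paper's approach. Like the paper, you argue by induction on the (value and computation) typing derivations. The \lambdaAEff-specific cases are discharged by the preceding propositions, and the standard FGCBV cases by Lindley--Stark-style arguments. The generality of the statement in $\Gamma'$, together with \cref{prop:sn-seq:reducibility-renaming}, supplies the hypotheses that are quantified over renamings.

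One small remark: your $\beta$- and \eqref{r2}-expansion lemmas need no induction at all. Both $K \capp ((\tmfun{x:X}{M'})\,W)$ and $K \capp (\tmlet{x}{\tmreturn{V}}{N})$ have exactly one reduct (cf.\ \cref{prop:let-return}).
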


\begin{corollary}
  \label{cor:sn-seq:reducibility}
  As $x_i \in \VRed_{X_i}^{\Gamma}$ by \cref{prop:sn-seq:variables-reducible},
  we have that $M \in \CRed_Y^\Gamma$.
\end{corollary}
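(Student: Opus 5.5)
The corollary is immediate from \cref{thm:sn-seq:fundamental-theorem}. I would instantiate that theorem with $\Gamma' \defeq \Gamma$ and $V_i \defeq x_i$ for each $x_i \of X_i \in \Gamma$. The well-typedness side condition $\Gamma \types x_i : X_i$ holds by the variable typing rule. The reducibility side condition $x_i \in \VRed_{X_i}^{\Gamma}$ is exactly \cref{prop:sn-seq:variables-reducible}.

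The fundamental theorem then yields $M[x_1/x_1, \ldots, x_n/x_n] \in \CRed_Y^\Gamma$. It remains only to identify this term with $M$. That is, we need the identity substitution lemma: substituting each variable by itself acts as the identity on values and computations. This is proved by a routine mutual induction on the syntax of values and computations. Under binders, the lifted identity substitution is again the identity, and this holds both for the binders of functions, $\tmkw{let}$, $\tmkw{await}$, and for the two binders $x$ and $p$ of interrupt handlers. In a de Bruijn style formalisation such as the Agda one, this is the standard fact that the identity substitution, i.e. the substitution built from variables, acts trivially.

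The only point needing any care is this identity-substitution step, and it is purely syntactic; there is no genuine obstacle. Combining the corollary with \cref{thm:sn-seq:reducibility-sn} then gives $M \in \SN_{\!Y}$ for every well-typed computation. This is the strong normalisation result for the sequential fragment.
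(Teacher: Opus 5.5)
Your proposal is correct and follows the paper's argument: you instantiate \cref{thm:sn-seq:fundamental-theorem} with $\Gamma' = \Gamma$ and $V_i = x_i$, discharge the reducibility hypotheses by \cref{prop:sn-seq:variables-reducible}, and identify $M[x_1/x_1,\ldots,x_n/x_n]$ with $M$. The identity-substitution step, which you state explicitly, is left implicit in the paper and is indeed routine.
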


Finally, we can compose \cref{cor:sn-seq:reducibility} with
\cref{thm:sn-seq:reducibility-sn} to prove strong normalisation.

\begin{theorem}
  \label{thm:sn-seq:sn}
  Every well-typed computation term $\Gamma \types M : Y$ is strongly normalising. 
\end{theorem}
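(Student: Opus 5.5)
The proof composes the two main results already established. Given a well-typed computation $\Gamma \types M : Y$, we first invoke \cref{cor:sn-seq:reducibility} to obtain $M \in \CRed_Y^\Gamma$. We then apply \cref{thm:sn-seq:reducibility-sn} to this reducibility fact and conclude $M \in \SN_{\!Y}$. Nothing beyond these two citations is needed at this level; the statement is essentially their composition.

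The real content sits inside \cref{cor:sn-seq:reducibility}, that is, in the fundamental theorem \cref{thm:sn-seq:fundamental-theorem}. It is instantiated with the identity substitution $V_i = x_i$ and $\Gamma' = \Gamma$. This instantiation is legitimate by \cref{prop:sn-seq:variables-reducible}, and because the identity substitution leaves $M$ unchanged.

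The fundamental theorem itself is proved by mutual induction on typing derivations for values and computations, with one case per typing rule.
\begin{itemize}
  \item Signals are discharged by \cref{prop:sn-seq:signals-reducible}, and interrupts by \cref{prop:sn-seq:interrupts-reducible}.
  \item Awaiting is discharged by \cref{prop:sn-seq:awaiting-reducible}. Here the induction hypothesis for $N$ is applied to the extended substitution sending $x$ to $W$, after renaming along $\rho$.
  \item Interrupt handlers are discharged by \cref{prop:sn-seq:interrupt-handlers-reducible}. Both $M_\op$ and $N$ are handled as in the awaiting case, using \cref{prop:sn-seq:reducibility-renaming} to transport the substituted values $V_i$ along $\rho$.
  \item The standard FGCBV cases (return, let, application, function abstraction) follow Lindley and Stark, again using Kripke monotonicity.
  \item Sub-effecting is vacuous, since the skeletal types carry no effect annotations.
\end{itemize}

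The only step needing care is the bookkeeping of substitution commuting with $\mathit{lift}(\rho)$ in the binder cases. It is routine but is where the Kripke-style indexing earns its keep. The genuinely hard combinatorial content, namely \cref{prop:sn-seq:interrupt-handlers-sn}, is already assumed, so no further obstacle remains.
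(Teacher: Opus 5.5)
Your proposal is correct and is exactly the paper's argument: \cref{thm:sn-seq:sn} is obtained by composing \cref{cor:sn-seq:reducibility} (the fundamental theorem instantiated at the identity substitution via \cref{prop:sn-seq:variables-reducible}) with \cref{thm:sn-seq:reducibility-sn}. Your sketch of the fundamental theorem as an induction on typing derivations using the per-construct propositions also matches the paper; the one \lambdaAEff-specific case you did not list, the fulfilled-promise value $\tmpromise{V}$, is immediate from the definitions of $\VRed_{\typromise{X}}$ and $\ARed_X$ together with \cref{prop:sn-seq:reducibility-renaming}.
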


We can relate this result to the \lambdaAEff-calculus from \cref{sec:aeff} by
defining an erasure $| - |$ from the types and contexts of \cref{sec:aeff} to
the skeletal types and contexts of this section.

\begin{corollary}
  \label{cor:sn-seq:simulation}
  If $\Gamma \types M : \tycomp{X}{(\o,\i)}$, then $| \Gamma | \types M : | X |$, and 
  thus $M$ is strongly normalising.
\end{corollary}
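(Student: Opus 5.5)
The corollary has two parts. First, erasure preserves typing: $|\Gamma| \types M : |X|$. Second, strong normalisation then follows by applying \cref{thm:sn-seq:sn} to the erased judgement. The second part needs one observation: erasure changes only types and contexts, not terms. The reduction relation of \cref{fig:seq-reduction-rules} is defined on untyped terms and makes no reference to effect annotations. So an infinite reduction sequence from $M$ under the annotated calculus is literally an infinite reduction sequence from $M$ in the skeletal calculus, and \cref{thm:sn-seq:sn} excludes it.

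The erasure is defined by structural recursion on types:
\[
|\tybase| = \tybase, \qquad
|\tyfun{X}{\tycomp{Y}{(\o,\i)}}| = \tyfun{|X|}{|Y|}, \qquad
|\typromise{X}| = \typromise{|X|},
\]
together with $|\tycomp{X}{(\o,\i)}| = |X|$ on computation types, extended pointwise to contexts. Ground types $A_\op$ are fixed by erasure. Function abstractions carry type annotations $\tmfun{x : X}{M}$, so erasure must also be applied to those annotations inside terms. This is a purely syntactic map that commutes with substitution and with every reduction rule, because no rule inspects the annotation. With this small adjustment, ``$M$'' in the conclusion should be read as $M$ with erased annotations, and an infinite reduction sequence of $M$ maps step-for-step to one of $|M|$.

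The main step is a simultaneous induction on the typing derivations of $\Gamma \types V : X$ and $\Gamma \types M : \tycomp{X}{(\o,\i)}$, showing $|\Gamma| \types |V| : |X|$ and $|\Gamma| \types |M| : |X|$. Each rule maps to its skeletal counterpart once its side conditions are dropped. The signal rule drops $\op \in \o$. The interrupt rule's effect $\opincomp{op}{(\o,\i)}$ vanishes. The interrupt handler rule drops the premise $(\o',\i') = \i\,(\op)$, and its premise typing $M_\op$ at $\tycomp{\typromise X}{(\o',\i')}$ erases to type $\typromise{|X|}$. The sub-effecting rule becomes an identity, since both sides erase to $|X|$. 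The variable rule uses $|\Gamma|$ pointwise.

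I expect no real obstacle. The only place needing care is confirming that the skeletal typing rules, left implicit in the paper (``Typing rules are similar''), do match each annotated rule with annotations dropped, and that erasing the binder annotations commutes with substitution $M[V/x]$ so that reductions correspond exactly.
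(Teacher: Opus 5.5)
Your proposal is correct and matches the paper's implicit argument: define the erasure $|-|$, check by induction on typing derivations that each annotated rule maps to its skeletal counterpart, and conclude with \cref{thm:sn-seq:sn}. Your extra point is also handled correctly: the type annotations on $\tmkw{fun}$ binders must be erased too, and this erasure commutes with substitution and every reduction rule, so reduction sequences transfer step-for-step — a detail the paper glosses over by writing $M$ unchanged.
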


\subsection{Unit, Product, and Sum Types}
\label{sec:sn-seq:sums-products}

We note that unit, product, and sum types can be added to \lambdaAEff~and the
normalisation proof in a standard way~\cite{Lindley:TopTopLifting}. We make
use of the unit and sum types in the next section.

For the \emph{unit type} $\tyunit$, we add a value $\tmunit$, and define its
reducibility like for base types, to always hold.
For the \emph{sum type} $\tysum{X}{Y}$, we extend values with injections
$\tminl[Y]{V}$ and $\tminr[X]{V}$, and computations with pattern-matching
$\tmmatch[]{V}{\tminl[]{x} \mapsto M , \tminr[]{y} \mapsto N}$, with the evident
reduction rules. The \emph{reducibility for $X + Y$} is then defined analogously
to how we defined it for the promise type in \cref{sec:sn-seq:reducibility}. We
say that $V \in \VRed_{\tysum{X}{Y}}^{\Gamma}$ iff $\forall\, \Gamma' .\,
\forall\, \rho \in \Ren{\Gamma}{\Gamma'} .\, \forall\, K^{+} \in
\SRed_{X,Y}^{\Gamma'} .\, K^{+} \capp (V[\rho]) \in \SN$, where $K^{+} \!\bnfis\! K
\circ ((x)M,(y)N)$, and $(K \circ ((x)M,(y)N)) \capp V \defeq K \capp
(\tmmatch[]{V}{\tminl[]{x} \mapsto M , \tminr[]{y} \mapsto N})$. The
reducibility for such \emph{sum continuations} is defined as $K^{+}\in
\SRed_{X,Y}^{\Gamma}$ iff $\forall\, \Gamma' .\, \forall\, \rho \in
\Ren{\Gamma}{\Gamma'} .\, \forall\, V\in \VRed_{X}^{\Gamma'} .\, (K^{+}[\rho]) \capp
(\tminl[Y]{V}) \in \SN$ and $\forall\, \Gamma' .\, \forall\, \rho \in
\Ren{\Gamma}{\Gamma'} .\, \forall\, W\in \VRed_{Y}^{\Gamma'} .\, (K^{+}[\rho]) \capp
(\tminr[X]{W}) \in \SN$. 

%% file: normalisation-reinstallable.tex
\section{Strong Normalisation for Reinstallable Interrupt Handlers}
\label{sec:sn-reinstallable}

Although the \lambdaAEff-computations from \cref{sec:aeff:seq} are strongly
normalising, we cannot write very interesting asynchronous examples using them.
For instance, in a client-server scenario, there is no way for the server to
react to an unbounded number of requests from clients, nor can a thread be
stopped and started an unbounded number of times by the environment. Any such
iterative behaviour has a statically fixed bound given by installed interrupt
handlers.

To address this shortcoming, Ahman and Pretnar~\cite{Ahman:LMCS} extended
\lambdaAEff~with a capability to reinstall interrupt handlers. They conjectured
that the sequential part of the resulting calculus is strongly normalising. In
this paper, we first demonstrate that their conjecture does not hold
(\cref{sec:sn-reinstallable:non-sn}), and then propose a variant that is
strongly normalising (\cref{sec:sn-reinstallable:sn}).

\subsection{Failure of Strong Normalisation for Reinstallable Interrupt Handlers}
\label{sec:sn-reinstallable:non-sn}

To reintroduce a controlled amount of recursive behaviour into
\lambdaAEff-computations, Ahman and Pretnar~\cite{Ahman:LMCS} extended interrupt
handlers to a form $\tmwithre{op}{x}{r}{M_\op}{p}{N}$, where the handler code
$M_\op$ can contain an additional (function-typed) variable $r$, whose
application results in the same interrupt handler being
\emph{reinstalled}, as expressed by the reduction rule
\[
  \begin{array}{l@{~~} l}
    & \tmopin{op}{V}{\tmwithre{op}{x}{r}{M_\op}{p}{N}} 
    \\[1ex]
    \reduces & \tmlet{p}{M_\op\big[V/x,{(\tmfun{}{\tmwithre{op}{x}{r}{M_\op}{q}{\tmreturn q}})/r}\big]}{\tmopin{op}{V}{N}}.
  \end{array}
\]
For example, with such interrupt handlers one can write recursive server processes as
\[
  \tmwithre{request}{x}{r}{(\tmlet{y}{M_{\text{process}}}{\tmopout{response}{y}{\tmapp{r}{()}}})}{p}{\ldots}, 
\]
where an interrupt handler is used to wait for $\tm{request}$s from clients, which 
are processed by $M_{\text{process}}$; the result is sent back to the 
client in a $\mathsf{response}$ signal, and the application $\tmapp{r}{()}$ is 
used to reinstall the same interrupt handler, to wait for further 
$\tm{request}$s from clients.

As the recursive reinstalling of interrupt handlers is only triggered by
incoming interrupts, and there are always only finitely many of them surrounding
any given interrupt handler in a sequential computation, Ahman and Pretnar
conjectured that the sequential part of this extension of \lambdaAEff~ought to be
strongly normalising. Unfortunately, this is not the case. 

For one concrete counterexample, consider the computation 
\[
M_1 \defeq \tmopin{op}{V}{\tmwithre{op}{x}{r}{\tmopin{op}{x}{\tmapp{r}{()}}}{p}{\tmreturn{p}}}, 
\]
where in the handler code an interrupt is wrapped around the reinstalling
application $\tmapp{r}{()}$. One can show that $M_1$ is well-typed for an effect
annotation $(\emptyset,\iota)$, where $\iota$ is given (co)recursively as
$\iota = \{\tm{op}\mapsto(\emptyset,\{\tm{op}\mapsto(\emptyset,\dotsc)\})\}$, using the
coinductive definition of $I$. On the other hand, when we look at how this 
computation executes, we see that the subterm $\tmopin{op}{x}{\tmapp{r}{()}}$
leads to the same interrupt handler for $\op$ being triggered and reinstalled infinitely often.

For another counterexample, consider the computation
\[
  M_2 \defeq \tmopin{op}{V}{\tmwithre{op}{x}{r}{\tmreturn{\tmpromise{\tmfun{}{N}}}}{p}{(\tmawait{p}{y}{\tmapp{y}{()}})}}, 
\]
where $N \defeq \tmlet{p}{\tmapp{r}{()}}{(\tmawait{p}{y}{\tmapp{y}{()}})}$. This
computation is well-typed, but also exhibits infinite reduction behaviour.
This time the culprit is the leaking of the reinstall variable $r$ in the body 
of the function $\tmpromise{\tmfun{}{N}}$ into the continuation of the 
interrupt handler, where the reinstalled interrupt handler is triggered again
by the original interrupt.

\subsection{Strongly Normalising Variant of Reinstallable Interrupt Handlers}
\label{sec:sn-reinstallable:sn}

If we look at the counterexamples listed above, we see that the failure of strong
normalisation is caused by the freedom that the reinstallable interrupt handlers
proposed by Ahman and Pretnar give to the programmer for where and how the reinstall
variable $r$ can be used.
In particular, they allow $r$ to appear under interrupts in $M_\op$, and to be
leaked into $N$.
Based on these observations, we propose a restricted variant of \emph{reinstallable
interrupt handlers}:
\begin{mathpar}
  \coopinfer{}{
    ({\o'} , {\i'}) \order{O\times I} \i\, (\op)  \\
    \Gamma, x \of A_\op \types M_\op : \tycomp{\tysum{\typromise X}{\tyunit}}{(\o',\i')} \\
    \Gamma, p \of \typromise X \types N : \tycomp{Y}{(\o,\i)}
  }{
    \Gamma \types \tmwith{op}{x}{M_\op}{p}{N} : \tycomp{Y}{(\o,\i)}
  }
\end{mathpar}

Instead of modelling the reinstalling capability with a first-class variable, we
use sum types to encode it in the return type of the handler code $M_\op$.
Returning in the left injection of the sum type models the interrupt
handling finishing without reinstalling, and returning in the right injection
models the interrupt handler getting reinstalled. This behaviour is summarised by the following
reduction rule for triggering reinstallable interrupt handlers:
\[
  \begin{array}{l@{~~} l}
    & \tmopin{op}{V}{\tmwith{op}{x}{M_\op}{p}{N}} 
    \\[1ex]
    \reduces & \tmlet{p}{\big(\tmlet{y}{M_\op[V/x]}{\tmmatch{y}{\tminl{z}{}\mapsto \tmreturn{z},\tminr{w}{}\mapsto R}{}}\big)}{\tmopin{op}{V}{N}},
  \end{array}
\]
where $R = \tmwith{op}{x}{M_\op}{q}{\tmreturn{q}}$ \emph{reinstalls} the
interrupt handler when $M_\op$ returns $\tminr[]{\tmunit}$. Observe that this
way it is impossible both to wrap interrupts around reinstalled interrupt
handlers in $M_\op$, and to leak the reinstalling capability into $N$.

The upside of this definition is that we can still express all the interesting
recursive examples of Ahman and Pretnar~\cite{Ahman:LMCS}, while the sequential
part of the calculus remains strongly normalising. 
For instance, we can write the server process example in this extension as
\[
  \tmwith{request}{x}{(\tmlet{y}{M_{\text{process}}}{\tmopout{response}{y}{\tmkw{reinstall}}})}{p}{\ldots}
\]
and similarly for their other examples, in particular, including pre-emptive
multi-threading:
\[
  \begin{array}{l}
    \tmwithbig{stop}{x}{\\
    \quad \tmwith{go}{y}{\tmkw{finish}\; \tmpromise{\tmunit}}{p}{\\
    \quad \tmawait{p}{z}{\tmkw{reinstall}}}\\
    }{q}{\tmreturn{q}}
  \end{array}
\]
where we write $\tmkw{finish}\;V$ for $\tmreturn{(\tminl{V})}$ and
$\tmkw{reinstall}$ for $\tmreturn{(\tminr{\tmunit})}$ in the handler code.

At a high level, the structure of the normalisation proof remains the same as in
\cref{sec:sn-seq}. The key difference lies in the statement and proof of the
analogue of \cref{prop:sn-seq:interrupt-handlers-sn}. As before, the
reducibility predicates and the normalisation proof only use skeletal types.

\begin{proposition}
  \label{prop:sn-reinstallable:interrupt-handlers-sn}
  Consider a continuation $\Gamma \types K : Y \multimap Z$, an operation $\op
  \in \Sigma$, and computations $\Gamma, x \of A_\op \types M_\op :
  \tysum{\typromise{X}}{\tyunit}$ and $\Gamma, p \of \typromise{X} \types N :
  Y$. If $K \circ (p)N \in \KRed_{\typromise{X}}^\Gamma$, $K[\mathit{wk}] \capp N \in
  \SN_{\!Z}$, and $M_\op[\mathit{lift}(\rho)][V/x] \in
  \CRed_{\tysum{\typromise{X}}{\tyunit}}^{\Gamma'}$, for all $\Gamma'$, $\rho
  \in \Ren{\Gamma}{\Gamma'}$, and $V \in \VRed_{A_\op}^{\Gamma'}$, then we have that
  $K \capp \tmwith{op}{x}{M_\op}{p}{N} \in \SN_{\!Z}$.
\end{proposition}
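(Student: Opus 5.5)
We follow the structure of the proof of \cref{prop:sn-seq:interrupt-handlers-sn}. The argument is by well-founded induction on the lexicographic order of $|K| + \mathit{max}(K[\mathit{wk}] \capp N)$ and the structure of $N$. We show that every one-step reduct of $K \capp \tmwith{op}{x}{M_\op}{p}{N}$ is strongly normalising, casing on where the redex sits.

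\textbf{Reductions inside $N$ (by \eqref{r13}).} The reduct is $K \capp \tmwith{op}{x}{M_\op}{p}{N'}$. The measure $\mathit{max}(K[\mathit{wk}] \capp N')$ is strictly smaller, and the first bullet of \cref{prop:sn-seq:interrupt-handlers-auxiliary-results} transports $K \circ (p)N' \in \KRed$. The induction hypothesis therefore applies.

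\textbf{Reductions at the boundary with $K$.} If $K = K' \circ (y)L$, rule \eqref{r4} yields $K' \capp \tmwith{op}{x}{M_\op}{p}{(\tmlet{y}{N}{L})}$. Here $|K|$ drops by one while $K'[\mathit{wk}] \capp (\tmlet{y}{N}{L})$ is literally $K[\mathit{wk}] \capp N$, so the sum decreases. The reducibility hypothesis is supplied by the second bullet.

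If $N = \tmopout{op'}{V}{N'}$, rule \eqref{r6} pulls the signal out. We then apply \cref{prop:sn-seq:signals-sn} to $K \capp \tmopout{op'}{V}{(\ldots)}$ and use the induction hypothesis on $N'$ (structurally smaller, same or smaller $\mathit{max}$), with the third bullet for the continuation.

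If $K = K' \circ \copin{op'}{V}$ with $\op' \neq \op$, rule \eqref{r10} pushes the interrupt inside. Again $|K|$ drops and the $\mathit{max}$ term is unchanged, and the fourth bullet applies.

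\textbf{The triggering case (main obstacle).} This is $K = K' \circ \copin{op}{V}$, which produces
\[
K' \capp \tmlet{p}{(\tmlet{y}{M_\op[V/x]}{\tmmatch{y}{\tminl{z}\mapsto \tmreturn{z},\tminr{w}\mapsto R}})}{\tmopin{op}{V}{N}}.
\]
Since $M_\op[V/x] \in \CRed_{\tysum{\typromise{X}}{\tyunit}}$, by taking $\rho$ to be the identity and noting that $V$ is reducible at a base type, it suffices to show that the continuation $K' \circ (p)(\tmopin{op}{V}{N}) \circ (y)(\tmmatch{\ldots})$ is reducible, i.e.\ that it is strongly normalising when applied to $\tmreturn{U}$ for any reducible $U$. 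By the definition of sum reducibility it is enough to take $U$ of the form $\tminl{W}$ or $\tminr{\tmunit}$, provided we show these match continuations are in $\SRed$.

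For the $\tminl{W}$ branch, a few administrative steps lead to $K' \capp \tmlet{p}{\tmreturn{W}}{\tmopin{op}{V}{N}}$. This is $(K' \circ \copin{op}{V} \circ (p)N$-style$)$ applied to $\tmreturn{W}$, so it is SN by the fourth bullet together with $K \circ (p)N \in \KRed$.

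For the $\tminr{\tmunit}$ branch, we reach $K' \capp \tmlet{p}{R}{\tmopin{op}{V}{N}}$, and we must prove this is SN. The key point is that $R = \tmwith{op}{x}{M_\op}{q}{\tmreturn{q}}$ is exactly of the shape covered by the present proposition, but with the outer continuation $K'' = K' \circ (p)(\tmopin{op}{V}{N})$. We have $|K''| = |K'| + 1 < |K|$, but the $\mathit{max}$ component may grow, so the outer lexicographic induction does not directly apply.

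To handle this, the plan is to strengthen the statement: perform an outer induction on the number of pending interrupts $\copin{op}{-}$ in $K$ (equivalently, on the interrupt portion of $|K|$), with the existing induction nested inside. In $K''$ the triggering interrupt has been consumed at the outer level; it reappears only inside the $\tmkw{let}$ body, applied to $N$, where it no longer surrounds $R$. Hence $R$ faces strictly fewer enclosing interrupts. This is precisely what the restricted typing guarantees and what fails for $M_1$ and $M_2$.

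The remaining hypotheses for the inductive call are as follows. $K''[\mathit{wk}] \capp \tmreturn{q}$ is SN because $K''$ is reducible, which itself follows from $K \in \KRed$ via the fourth bullet. The condition $K'' \circ (q)\tmreturn{q} \in \KRed$ follows from the same fact by one extra \eqref{r2} step. The condition on $M_\op$ is unchanged.

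Finally, all reductions taking place inside $M_\op[V/x]$ or inside $K'$ are handled by the $\CRed$ and $\KRed$ hypotheses, as in \cref{prop:sn-seq:interrupt-handlers-sn}.
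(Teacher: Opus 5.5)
Your proposal is correct and, once you add the outer induction on the number of interrupts in $K$, it coincides with the paper's proof. That proof is a lexicographic induction on $|K|_\downarrow$, $|K| + \mathit{max}(K \capp N)$, and the structure of $N$, with the triggering case reduced via $\SRed$ to the $\tminl{W}$ branch (handled by the fourth bullet) and the $\tminr{W}$ branch (handled by the induction hypothesis at strictly smaller $|K|_\downarrow$). There are two harmless slips: with $K'' = K' \circ (p)(\tmopin{op}{V}{N})$ one has $|K''| = |K'| + 1 = |K|$ rather than $|K''| < |K|$, and the reducibility of $K''$ follows from $K \circ (p)N \in \KRed$, not from $K \in \KRed$.
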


\begin{proof}
  The proof proceeds by induction on the lexicographic order of three measures:
  $|K|_\downarrow$, $|K| + \mathit{max}(K\capp N)$, and the structure of $N$, where
  $|K|_\downarrow$ denotes the number of interrupts $\copin{op}{V}$ in $K$, and
  we show that all reducts of $K \capp \tmwith{op}{x}{M_\op}{p}{N}$ are strongly
  normalising. Compared to \cref{prop:sn-seq:interrupt-handlers-sn}, we now also
  need to count the interrupts in $K$, via $|K|_\downarrow$, to affirm that
  during execution interrupts move inwards, and no new interrupts appear in
  computations---new interrupts only appear in parallel processes via the
  broadcast rules. For more details, we refer the reader to
  \cref{appendix:prop:sn-reinstallable:interrupt-handlers-sn}, or to our Agda
  formalisation.  
\end{proof}

After also adapting other propositions concerning interrupt handlers to
reinstalling, we can prove the overall reducibility result, from which it
follows that the sequential part of this extension of \lambdaAEff~is strongly
normalising, in both the skeletal and effect-typed versions.

\begin{theorem}
  \label{thm:sn-reinstallable:simulation}
  If $\Gamma \types M : \tycomp{X}{(\o,\i)}$, then $| \Gamma | \types M : | X |$, and 
  thus $M$ is strongly normalising.
\end{theorem}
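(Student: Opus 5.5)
The plan is to mirror the proof structure of \cref{sec:sn-seq} for the extended calculus, and then conclude exactly as in \cref{cor:sn-seq:simulation}. The first step is the erasure $|\Gamma| \types M : |X|$, by a routine induction on the derivation of $\Gamma \types M : \tycomp{X}{(\o,\i)}$. Each effect-typed rule maps to its skeletal counterpart once annotations are forgotten. Sub-effecting becomes the identity. The reinstallable interrupt handler rule becomes the skeletal rule with $M_\op : \tysum{\typromise{X}}{\tyunit}$. The side condition $(\o',\i') \order{O\times I} \i\,(\op)$ is simply dropped, so it does not matter whether $I$ is taken inductively or coinductively. Since the reduction relation does not mention annotations, it then suffices to prove that every skeletal well-typed computation of the extended calculus is strongly normalising.

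For that, I would reuse the reducibility predicates of \cref{sec:sn-seq:reducibility}, extended with unit and sum types as in \cref{sec:sn-seq:sums-products}. \Cref{prop:sn-seq:reducibility-renaming}, \cref{thm:sn-seq:reducibility-sn} and the compatibility propositions for variables, signals, interrupts and awaiting go through unchanged. One caveat is that their proofs analyse all reducts of $K \capp (\ldots)$, so I must re-check each of them against the modified triggering rule \eqref{r9}. This is harmless: whenever a handler is triggered inside $K$, that step reduces within the part already assumed strongly normalising, and the induction measures used there are $|K|$ and $\mathit{max}$, which still decrease.

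The new reducibility case is the handler. I would restate \cref{prop:sn-seq:interrupt-handlers-reducible} with $M_\op[\mathit{lift}(\rho)][V/x] \in \CRed_{\tysum{\typromise{X}}{\tyunit}}^{\Gamma'}$ and derive it from \cref{prop:sn-reinstallable:interrupt-handlers-sn}. Given $K \in \KRed_Y^{\Gamma'}$, one first shows $K \circ (p)N \in \KRed_{\typromise{X}}^{\Gamma'}$ by unfolding the hypothesis on $N$. One also needs $K[\mathit{wk}] \capp N \in \SN$, which follows by instantiating $N$'s hypothesis with $W = p$, reducible by \cref{prop:sn-seq:variables-reducible}, and applying it to the weakened $K$. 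The analogues of \cref{prop:sn-seq:interrupt-handlers-auxiliary-results} must also be adapted, but they are syntactic and do not depend on how handlers are triggered.

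The \textbf{main obstacle} lies inside \cref{prop:sn-reinstallable:interrupt-handlers-sn}, whose proof I take as given but whose hypotheses I must be able to discharge. The difficult reduct is the triggering one, where the reinstalled handler $R$ appears in the $\tminr{}$ branch. In that branch, the sum continuation applied to $\tminr{\tmunit}$ yields $K' \capp R$. Here $K'$ is the outer continuation with the triggering interrupt already consumed, so $|K'|_\downarrow < |K|_\downarrow$, and the outer induction hypothesis gives strong normalisation. Once this is in place, the fundamental theorem (\cref{thm:sn-seq:fundamental-theorem}) follows by induction on typing, and composing it with \cref{thm:sn-seq:reducibility-sn} yields the theorem.
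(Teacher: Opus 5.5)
Your proposal is correct and takes essentially the paper's route: erase to skeletal types, reuse the $\top\top$-lifting reducibility predicates extended with unit and sum types, and obtain reducibility of the restricted handlers from \cref{prop:sn-reinstallable:interrupt-handlers-sn}, whose hypotheses you discharge correctly. One small slip in your commentary on that proposition is that the $\tminr{}$ branch yields $(K' \circ (p)(\tmopin{op}{V}{N})) \capp R$ rather than $K' \capp R$; the argument survives because this extra let-frame adds no continuation interrupt, so $|K' \circ (p)(\tmopin{op}{V}{N})|_\downarrow = |K'|_\downarrow < |K|_\downarrow$ and the induction hypothesis still applies, with its remaining premises following from $K' \circ (p)(\tmopin{op}{V}{N}) \in \KRed_{\typromise{X}}$.
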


%% file: normalisation-parallel.tex
\section{Strong Normalisation for Parallel Processes}
\label{sec:sn-parallel}

We conclude our normalisation results by showing that without adding
reinstallable interrupt handlers, the \emph{entire} \lambdaAEff-calculus presented in
\cref{sec:aeff} is in fact strongly normalising.

\subsection{Parallel Processes and Reinstallable Interrupt Handlers}

We first recall from the work of Ahman and Pretnar~\cite{Ahman:LMCS} that in the
extension of \lambdaAEff~with reinstallable interrupt handlers, the parallel
processes are not strongly normalising. The same is true for our variant of
reinstallable interrupt handlers. Consider the processes
\[
  \begin{array}{@{}l}
    ~~~\tmrun{\big(\tmopoutbig{ping}{\tmunit}{\tmwithbig{pong}{x}{\tmopout{ping}{\tmunit}{\tmkw{reinstall}}}{p}{\tmreturn{p}}}\big)}
    \\
    \tmkw{||}
    \\
    ~~~\tmrun{\big(\tmopoutbig{pong}{\tmunit}{\tmwithbig{ping}{x}{\tmopout{pong}{\tmunit}{\tmkw{reinstall}}}{p}{\tmreturn{p}}}\big)}.
  \end{array}
\]
These two parallel processes send each other an infinite number of ping-pong
signals. This means that by proving below that in \lambdaAEff~without
reinstallable interrupt handlers (and without general recursion) the parallel
processes are strongly normalising, we demonstrate that the addition of
reinstallable interrupt handlers adds real expressive and computational power.

\subsection{Parallel Processes Without Reinstallable Interrupt Handlers}
\label{sec:sn-parallel:tree-shaped}

As illustrated by the previous example, if we want to prove the strong
normalisation of parallel processes, it is not enough if all we know is that the
corresponding sequential computations are strongly normalising. While
computations including reinstallable interrupt handlers are strongly
normalising, the parallel processes containing them are not. So we need some
additional information about computations to prove normalisation for parallel
processes.

A convenient distinguishing feature between computations for which parallel
processes are strongly normalising, and for which they are not, turns out to be
effect-typing. Namely, the normalisation proof below relies on considering
computations effect-typed using \emph{finite} effect annotations $\i \in I$,
given by defining $I$ inductively, as the least fixed point of $\Phi (Z) \defeq
\sig \Rightarrow (O \times Z)_\bot$ from \cref{sec:aeff:seq}. This rules out
examples involving reinstallable interrupt handlers because to type such
examples we would need a coinductive definition of $I$ (see~\cite{Ahman:LMCS}).

Beyond also making use of effect-typing, this section differs from the
previous sections in that we do not need to employ a Girard-Tait-style
reducibility argument. This is largely because \lambdaAEff~does not contain
first-class processes. Instead, the normalisation proof can proceed by direct
induction on the lexicographic order of four measures that we define next.

\emph{Size of effect annotations:} 
For a finite $\i$, denote with $\sizei{\i}$ the \textit{size} of
$\i$: the number of internal nodes when $\i$ is considered as a tree. We then
define $\sizei{(\o,\i)}$ as $\sizei{\i}$. In fact, only the size of $\i$ counts below, and the $\o$-annotations could be omitted altogether.
We write $\sizei{P}$ for the sum of the sizes $\sizei{\i}$ over all
individual computations in the $\tmkw{run}$-leaves of a process $P$. 

\begin{proposition}
  \label{prop:sn-par:act-size-decr}
  If $\op\not\in\i$, then $\sizei{\opincomp{op}{(\o,\i)}} = \sizei{\i}$, and if $\op\in\i$, then $\sizei{\opincomp{op}{(\o,\i)}} < \sizei{\i}$.
\end{proposition}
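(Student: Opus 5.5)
The plan is a direct calculation from the definition of $\opincomp{op}{(\o,\i)}$, together with one auxiliary inequality about joins of finite annotations, proved by structural induction on the (finite, inductively defined) annotations.

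\textbf{Fixing the size function.} First I make $\sizei{-}$ precise as a recursive function on the least fixed point of $\Phi$. I read the tree of $\i$ as a root with one child per name $\op$ with $\i\,(\op) = (\o_\op,\i_\op)$, and that child is the subtree of $\i_\op$. Counting the nodes that carry a mapping gives
\[
\sizei{\i} \defeq \textstyle\sum_{\op \,:\, \i(\op) \neq \bot} \bigl(1 + \sizei{\i_\op}\bigr).
\]
This is well defined because $\i$ is finite and $\Sigma$ is finite. The argument is insensitive to this convention: if one also counts the root, i.e.\ $1 + \sum_\op \sizei{\i_\op}$, the same computation goes through with the constants shifted by one. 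I read ``$\op \in \i$'' as $\i\,(\op) \neq \bot$.

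\textbf{The case $\op \notin \i$.} Then $\i\,(\op) = \bot$, so by definition $\opincomp{op}{(\o,\i)} = (\o,\i)$. Since $\sizei{(\o,\i)}$ is defined as $\sizei{\i}$, the equality is immediate.

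\textbf{The case $\op \in \i$.} Let $\i\,(\op) = (\o',\i')$, so that $\opincomp{op}{(\o,\i)} = (\o \sqcup \o', \i[\op \mapsto \bot] \sqcup \i')$. I proceed in two steps.

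\begin{itemize}
\item Removing the entry for $\op$ deletes exactly its summand, so $\sizei{\i[\op\mapsto\bot]} = \sizei{\i} - 1 - \sizei{\i'}$.
\item The key auxiliary lemma is subadditivity of the join, $\sizei{\i_1 \sqcup \i_2} \le \sizei{\i_1} + \sizei{\i_2}$. I prove it by induction on $\i_1$ (or on $\sizei{\i_1}+\sizei{\i_2}$), using that $\sqcup$ is computed pointwise.
  \begin{itemize}
  \item For names defined in only one of $\i_1,\i_2$, the summand is copied unchanged.
  \item For names defined in both, the join has the single summand $1 + \sizei{\i_{1,\op} \sqcup \i_{2,\op}}$. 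By the induction hypothesis this is at most $1 + \sizei{\i_{1,\op}} + \sizei{\i_{2,\op}}$, which is strictly less than the two summands $(1+\sizei{\i_{1,\op}}) + (1+\sizei{\i_{2,\op}})$ together.
  \end{itemize}
\end{itemize}

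Combining the two steps,
\[
\sizei{\i[\op\mapsto\bot]\sqcup\i'} \le \sizei{\i} - 1 - \sizei{\i'} + \sizei{\i'} = \sizei{\i} - 1 < \sizei{\i}.
\]

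\textbf{Main obstacle.} The only real work is the subadditivity lemma. Its difficulty lies in stating the join and the size function correctly on the inductive $I$, and in making the induction go through; in the Agda formalisation that is where the effort would be. The strictness of the final inequality comes solely from deleting the entry for $\op$, which the join cannot add back beyond $\sizei{\i'}$.
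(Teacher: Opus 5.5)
Your proof is correct, but it takes a different route from the paper's.

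The paper represents a finite $\i$ by the set $paths(\i)$ of root-to-internal-node paths. It then asserts that $\op$ acts path-by-path, stripping a leading $\op$, and gets strictness by counting: the two distinct paths $()$ and $(\op)$ are both sent to $()$, so the image of this finite set is smaller. You instead define the size explicitly by recursion, as the number of defined entries at all depths. You then get strictness arithmetically, from $\sizei{\i[\op\mapsto\bot]} = \sizei{\i}-1-\sizei{\i'}$ combined with subadditivity of $\sqcup$.

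The two arguments rest on the same underlying fact. Your subadditivity lemma says, in effect, that the paths of a join are the union of the paths of its arguments. The paper's identity $paths(\opincomp{op}{(\o,\i)}) = \{\opincomp{op}{p} \mid p \in paths(\i)\}$ uses this fact without proof. The paper's version gives a more conceptual picture of the action as path truncation. Yours is more elementary, makes the induction on the inductive $I$ explicit, and is closer to what one would formalise.

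Your explicit definition also pins down the counting convention, and that genuinely matters here. The paper's argument needs both $()$ and $(\op)$ to count as paths to internal nodes. Suppose ``internal node'' is read literally as an annotation with at least one defined entry, so that empty sub-annotations are leaves. Then the strict inequality fails. For $\i = \{\op\mapsto(\o',\emptyset), \tm{op'}\mapsto(\o'',\emptyset)\}$ with $\op\neq\tm{op'}$, both $\i$ and $\i[\op\mapsto\bot]\sqcup\emptyset$ have exactly one such node. So your remark that the argument is insensitive to the convention should be limited to the two conventions you actually name. Counting all non-root nodes (yours) and counting all nodes including the root differ uniformly by one, so both work.
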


\emph{Maximum number of outgoing signals:} By \cref{thm:sn-seq:sn}, every
\lambdaAEff-computation is strongly normalising. Since a \lambdaAEff-computation has only finitely many
one-step reducts, it has finitely many normal forms.
We note that when construed as a tree, a normal form of a
\lambdaAEff-computation has a shape in which all outgoing signals are towards
the root, yet-to-be-triggered interrupt handlers are in the intermediate nodes,
and the leaves contain $\tmkw{return}$s and blocked $\tmkw{await}$s
(see~\cite{Ahman:LMCS} for more details). 
Below we write $\mathit{max}\tmout(M)$ for the maximum number of top-level signals in a
normal form of $M$, i.e., $\mathit{max}\tmout(M)$ is the maximum number of signals $M$
can emit while it reduces.
Further, we write $\mathit{max}\tmout(P)$ for the sum of $\mathit{max}\tmout(M)$ over all
individual computations $M$ in the $\tmkw{run}$-leaves of a process $P$.

Note that if a computation does not have a handler for an interrupt, then acting with the
corresponding interrupt cannot reveal any new signals in this computation:
\begin{proposition}
  \label{prop:sn-par:max-signals-eq} 
  If $\Gamma \types M : \tycomp{X}{(\o,\i)}$ and $\op \not\in \i$, then
  $\mathit{max}\tmout(\tmopin{op}{V}{M}) = \mathit{max}\tmout(M)$.
\end{proposition}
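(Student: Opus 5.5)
We prove the two inequalities $\mathit{max}\tmout(\tmopin{op}{V}{M}) \geq \mathit{max}\tmout(M)$ and $\mathit{max}\tmout(\tmopin{op}{V}{M}) \leq \mathit{max}\tmout(M)$ separately. Both go through a simulation between reduction sequences of $M$ and of $\tmopin{op}{V}{M}$, by well-founded induction on $\mathit{max}(\tmopin{op}{V}{M})$, which exists by \cref{thm:sn-seq:sn}. The key invariant is that $\op\not\in\i$ is preserved by reduction, by type preservation together with sub-effecting. Hence no interrupt handler for $\op$ ever becomes exposed in an evaluation position of any reduct $M'$ of $M$. To make this precise, we first prove by induction on typing derivations that if $\Gamma \types M' : \tycomp{X}{(\o,\i)}$ with $\op\not\in\i$, then $M'$ has no handler for $\op$ at the top of a chain of evaluation contexts. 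Handlers in evaluation position contribute to $\i$ through the handler typing rule, and sub-effecting only enlarges $\i$.

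For the inequality $\geq$, we take a reduction $M \reduces^* N$ to a normal form $N$ with $k$ top-level signals. Using \eqref{r13}, we lift this to $\tmopin{op}{V}{M} \reduces^* \tmopin{op}{V}{N}$. Then we push the interrupt through $N$. It passes the top-level signals by \eqref{r8}, and the intermediate interrupt handlers by \eqref{r10}, since none of them is for $\op$ by the invariant. It goes into blocked $\tmkw{await}$s by \eqref{r11}, and it is discarded at $\tmkw{return}$s by \eqref{r7}. The result is a normal form with the same $k$ top-level signals. We should check that pushing the interrupt under handlers does not unblock anything new; it does not, because no rule fires on the $\tmkw{await}$ leaves.

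For the inequality $\leq$, we show that every normal form of $\tmopin{op}{V}{M}$ arises as above. We define an erasure $\lfloor - \rfloor$ that removes every $\tmopin{op}{V}{-}$ occurrence originating from the outer interrupt; such occurrences may be duplicated by \eqref{r10}, \eqref{r8} and \eqref{r11}. We then show that each step of a reduct $L$ of $\tmopin{op}{V}{M}$ maps to either zero or one step of $\lfloor L\rfloor$. The interrupt-propagation steps \eqref{r7}, \eqref{r8}, \eqref{r10} and \eqref{r11} for the tracked $\op$ map to zero steps. Rule \eqref{r9} for the tracked interrupt cannot occur, by the invariant. Every other step commutes with erasure. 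Consequently a normal form $L$ of $\tmopin{op}{V}{M}$ yields a reduct $\lfloor L\rfloor$ of $M$ with the same top-level signals. A normal form of $\lfloor L\rfloor$ has at least that many top-level signals, since signals at the root are never removed by reduction.

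The main obstacle is making the erasure and invariant rigorous. Copies of the interrupt migrate under binders such as handler continuations and $\tmkw{await}$ bodies, and these sit at different effect annotations. For the copies, we need a typing-based statement that $\op\not\in\i'$ holds at every annotation reached. This requires care because effect annotations of handler continuations equal the outer $(\o,\i)$, while the handler codes $M_\op$ are typed at $\i(\op')$, and copies of the interrupt never enter handler code. The cleanest route is to use a marked-interrupt syntax and prove a simulation lemma by induction on the reduction rules.
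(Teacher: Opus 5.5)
The paper gives no written proof of this proposition beyond the remark that an interrupt without a matching handler cannot reveal new signals; the details are in the Agda development. Your erasure simulation is a reasonable way to make that remark precise. However, the lemma in your first paragraph, on which the exclusion of \eqref{r9} rests, is false. For $\Gamma \types W : A_\op$ take
\[
  M \defeq \tmopin{op}{W}{\tmwith{op}{x}{\tmreturn{\tmpromise{x}}}{p}{\tmreturn{p}}}.
\]
The handler is typed at $(\emptyset,\i_1)$ with $\i_1 = \{\op\mapsto(\emptyset,\emptyset)\}$, so $M$ is typed at $\opincomp{op}{(\emptyset,\i_1)} = (\emptyset,\emptyset)$, and $\op\not\in\emptyset$. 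Yet $M$ has an $\op$-handler under the evaluation context $\tmopin{op}{W}{[\;]}$. Your induction on typing derivations fails exactly at the interrupt rule, because $\opincomp{op}{(\o',\i')}$ can lack $\op$ even though $\i'$ contains it.

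The same phenomenon defeats the way you obtain the invariant for the tracked copy. Subject reduction only tells you that the whole reduct stays at $(\o,\i)$. Since the tracked copy is itself a $\tmopin{op}{V}{-}$ frame, $\opincomp{op}{(\o_L,\i_L)} \order{O\times I} (\o,\i)$ with $\op\not\in\i$ does not give $\op\not\in\i_L$ for the subterm $L$ beneath it. That is precisely what excluding \eqref{r9} for the tracked interrupt requires.

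The repair is to make the invariant local to the tracked copy, as your last paragraph starts to suggest. There are two ways to do this.

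One way is to type the marked interrupt by a rule that requires its body to be typed at some $(\o,\i)$ with $\op\not\in\i$ and leaves the annotation unchanged, and then prove subject reduction for marked terms. \eqref{r8}, \eqref{r10} and \eqref{r11} preserve this, because signal bodies, handler continuations and await bodies are typed below the enclosing annotation. A marked \eqref{r9} is impossible, since an $\op$-handler typed at $(\o,\i)$ forces $\i(\op)\neq\bot$.

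The other way keeps your architecture: restrict the lemma to evaluation contexts built only from signals and handler continuations, where it is true, and apply it to $\lfloor L\rfloor$, which is a reduct of $M$. This works because, contrary to your remark, \eqref{r8}, \eqref{r10} and \eqref{r11} move the interrupt rather than duplicate it. The tracked copy starts outermost and interrupts only move inwards, so no interrupt, and hence no \eqref{r9}-generated $\tmkw{let}$, ever gets above it. There is therefore at most one tracked copy, sitting beneath a spine of signals, handler continuations and await bodies. \eqref{r9} can fire on it only when that spine is an interrupt-free evaluation context.

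With either repair, the rest of your ``$\leq$'' argument goes through: tracked \eqref{r7}, \eqref{r8}, \eqref{r10} and \eqref{r11} steps erase to zero steps, other steps commute with erasure, and root signals are never removed.

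The ``$\geq$'' direction needs no typing at all. After lifting with \eqref{r13} and $k$ steps of \eqref{r8}, there are $k$ signals at the root, and no further reduction removes them. For open $\Gamma$ the interrupt may get stuck at a neutral leaf such as $f\,U$ instead of being discarded, which does not affect the count.
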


\emph{Parallel shapes:} We define a notion of \emph{parallel shapes} $S, T, \ldots$ to model the parallel part of a process that is independent of the individual computations in the $\tmkw{run}$-leaves:
\[
  S, T \bnfis \tmkw{run} \bnfor \typar{S}{T} \bnfor \tmin S \bnfor \tmout S
\]
We can define a reduction relation $S \reduces T$ for parallel shapes by
restricting \cref{fig:par-reduction-rules} to the grammar of parallel shapes,
e.g., $\tmin{\tmkw{run}} \reduces \tmkw{run}$ and $\tmpar{(\tmout S)}{T}
\reduces \tmout{(\tmpar{S}{(\tmin{T})})}$.
It is easily shown that this reduction relation is strongly normalising. For a
shape $S$, we write $\mathit{max}_{sh}(S)$ for the length of the longest reduction sequence
starting from $S$.
We also note that every parallel process $P$ has a straightforwardly determined
parallel shape, which we write as $\mathit{shape}(P)$.

\emph{Maximum number of $\tmkw{run}$-reduction steps:} For the last induction measure, we define $\mathit{max}_{\tmkw{run}}(P)$ as the sum of $\mathit{max}(M)$ for all individual computations $M$ in the $\tmkw{run}$-leaves of $P$.

We can now prove that the parallel part of \lambdaAEff~from \cref{sec:aeff} is
also strongly normalising.

\begin{theorem}\label{thm:sn-par:sn}
  Every well-typed parallel process $\Gamma \types P : C$ is strongly normalising.
\end{theorem}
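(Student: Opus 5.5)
We prove the theorem by well-founded induction on the lexicographic order of the four measures just introduced:
\[
  \big(\sizei{P},\ \mathit{max}\tmout(P),\ \mathit{max}_{sh}(\mathit{shape}(P)),\ \mathit{max}_{\tmkw{run}}(P)\big).
\]
Using the inductive characterisation of $\SN$, it suffices to show that for every reduction step $P \reduces Q$ the tuple for $Q$ is strictly smaller than the tuple for $P$. Two facts make the measures well defined along reductions. First, type preservation for processes~\cite{Ahman:LMCS} keeps $Q$ well-typed with finite annotations. Second, \cref{thm:sn-seq:sn} (via \cref{cor:sn-seq:simulation}) makes $\mathit{max}(M)$ and $\mathit{max}\tmout(M)$ finite for every $\tmkw{run}$-leaf. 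The congruence rule \eqref{r21} is handled by noting that each measure is a sum over leaves, or a function of the shape that is monotone in subshapes, so a decrease in a subprocess yields a decrease of the whole. We therefore do a case analysis on the base rules \eqref{r14}--\eqref{r20}, and the key is to assign each rule to the component that strictly decreases while all earlier components stay fixed or decrease.

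\textbf{Rule \eqref{r14}} ($\tmrun M \reduces \tmrun N$ with $M \reduces N$). Sequential reduction never increases $\sizei{-}$, since type preservation gives $N$ the same or a smaller annotation. It also does not increase $\mathit{max}\tmout$, because the normal forms of $N$ are among those of $M$. The shape is unchanged, and $\mathit{max}(N) < \mathit{max}(M)$, so the fourth component strictly decreases.

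\textbf{Rule \eqref{r15}} (signal hoisting). The size is unchanged. The signal has left the leaf, so $\mathit{max}\tmout(M) \le \mathit{max}\tmout(\tmopout{op}{V}{M}) - 1$ and the second component drops.

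\textbf{Rules \eqref{r16}, \eqref{r17}, \eqref{r19}, \eqref{r20}} (broadcasting and interrupt propagation). These leave every leaf computation untouched. They change only the shape, by a step of the shape reduction, so the third component drops and the first two are unchanged.

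\textbf{Rule \eqref{r18}} ($\tmopin{op}{V}{\tmrun M} \reduces \tmrun{(\tmopin{op}{V}{M})}$). This is the delicate case, and we split on whether $\op\in\i$ for the annotation $(\o,\i)$ of $M$.
\begin{itemize}
  \item If $\op\in\i$, the leaf's new annotation is $\opincomp{op}{(\o,\i)}$. By \cref{prop:sn-par:act-size-decr} the size strictly decreases, and the lexicographic order allows the remaining measures to grow arbitrarily.
  \item If $\op\notin\i$, the size is unchanged by \cref{prop:sn-par:act-size-decr}, and $\mathit{max}\tmout$ is unchanged by \cref{prop:sn-par:max-signals-eq}. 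The shape step $\tmin{\tmkw{run}} \reduces \tmkw{run}$ then decreases the third component, even though $\mathit{max}(\tmopin{op}{V}{M})$ may exceed $\mathit{max}(M)$.
\end{itemize}

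\textbf{Main obstacle.} The broadcast rules create new interrupts, so the shape measure alone cannot work. The reason the argument goes through is that the new interrupts only reach leaves via \eqref{r18}, and there they either shrink $\sizei{-}$ or are inert. Making this precise needs a subsidiary lemma that typing is used consistently. The annotation attached to a leaf must be its typing annotation, which means fixing $\sizei{P}$ with respect to a chosen typing derivation and checking that subsumption cannot enlarge it unboundedly.

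The cleanest way to do this is to define $\sizei{-}$ using the \emph{process type} $C$ of $P$, leaf by leaf. Interrupts in the process type act on the leaf types through $\opincomp{op}{-}$, as in the preservation theorem for processes. We then check the following properties of $\opincomp{op}{-}$:
\begin{itemize}
  \item signal steps and \eqref{r14} do not increase the type-based size;
  \item \eqref{r18} is compatible with it;
  \item it is monotone with respect to $\order{O\times I}$.
\end{itemize}
We would also verify that $\mathit{max}\tmout$ is non-increasing under \eqref{r14} and the rules that leave leaves unchanged, since the remaining cases depend on it.
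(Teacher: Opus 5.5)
Your case analysis is the paper's proof. It uses the same four measures in the same lexicographic order. It charges \eqref{r16}, \eqref{r17}, \eqref{r19}, \eqref{r20} to $\mathit{max}_{sh}$, and \eqref{r14} and \eqref{r15} to $\mathit{max}_{\tmkw{run}}$ and $\mathit{max}\tmout$. It splits \eqref{r18} on $\op\in\i$ via \cref{prop:sn-par:act-size-decr,prop:sn-par:max-signals-eq}. Your explicit treatment of \eqref{r21}, and your check that $\mathit{max}\tmout$ does not increase under \eqref{r14}, fill in details the paper leaves implicit.

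The step to retract is the repair in your last paragraph. Defining $\sizei{-}$ from the process type $C$ would break the \eqref{r18} case rather than make it precise. In $C$ the pending interrupts have already acted on the leaf types via $\opincomp{op}{-}$. So $\tmopin{op}{V}{\tmrun M}$ and $\tmrun{(\tmopin{op}{V}{M})}$ both have type $X \att \opincomp{op}{(\o,\i)}$, and \eqref{r18} never changes the size. When $\op\in\i$, the second component can then grow while the first stays fixed. For example, with $M = \tmwith{op}{x}{\tmopout{op}{x}{\tmreturn{\tmpromise{x}}}}{p}{\tmreturn{p}}$ one has $\mathit{max}\tmout(M)=0$ but $\mathit{max}\tmout(\tmopin{op}{V}{M})=1$. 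Under that measure the size drop happens earlier, at \eqref{r16}/\eqref{r17}, and $\mathit{max}\tmout$ would have to be redesigned (e.g.\ to measure each leaf with the interrupts pending above it pushed in), which is a different proof.

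The measure your case analysis actually uses, and the one the paper means, is different. For each $\tmkw{run}$-leaf it takes the annotation $(\o,\i)$ of the computation's own judgement $\Gamma \types M : \tycomp{X}{(\o,\i)}$, ignoring interrupts above the leaf. In the Agda development this is fixed by intrinsic typing. With this choice:
\begin{itemize}
\item type preservation leaves the leaf annotation unchanged under \eqref{r14} and \eqref{r15};
\item the computation-level interrupt typing rule makes it $\opincomp{op}{(\o,\i)}$ after \eqref{r18};
\item because the derivation is carried along the reduction, no bound on subsumption is needed.
\end{itemize}
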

\begin{proof}
By induction on the lexicographic order of $\sizei{P}$,
$\mathit{max}\tmout(P)$, $\mathit{max}_{sh}(\mathit{shape}(P))$, and $\mathit{max}_{\tmkw{run}}(P)$, we show
that all reducts of the process $P$ are strongly normalising.

The rules \eqref{r16}, \eqref{r17}, \eqref{r19}, and \eqref{r20} reduce only the parallel shape of $P$, so they decrease $\mathit{max}_{sh}(\mathit{shape}(P))$. The rules \eqref{r14}--\eqref{r15} decrease respectively $\mathit{max}_{\tmkw{run}}(P)$ and $\mathit{max}\tmout(P)$.

It remains to consider the rule \eqref{r18}, i.e., the propagation of interrupts
into one of the individual computations as $\tmopin{op}{V}{\tmrun M} \reduces
\tmrun {(\tmopin{op}{V}{M})}$. Suppose the resulting computation is typed as
$\Gamma \types \tmopin{op}{V}{M} : \tycomp{X}{\opincomp{op}{(\o,\i)}}$, where
${\Gamma \types M : \tycomp{X}{(\o,\i)}}$. If $\op\in\i$, then by
\cref{prop:sn-par:act-size-decr} we have $\sizei{\opincomp{op}{(\o,\i)}} <
\sizei{\i}$, so this decreases the first induction measure. If
$\op\not\in\i$, then by \cref{prop:sn-par:max-signals-eq} we have
$\mathit{max}\tmout(\tmopin{op}{V}{M}) = \mathit{max}\tmout(M)$, and by
\cref{prop:sn-par:act-size-decr}, $\sizei{\opincomp{op}{(\o,\i)}} =
\sizei{\i}$, so the first two induction measures remain unchanged. In that
case, $\mathit{max}_{sh}(\mathit{shape}(P))$ is decreased by the parallel shape reduction rule
$\tmin{\tmkw{run}} \reduces \tmkw{run}$.
\end{proof}


\subsection{Strong Normalisation for a Flat Model of Parallel Processes}
\label{sec:sn-par:flat-processes}

\newcommand{\flatpar}{\mathrel{\tmkw{||}}}

We conclude this section by discussing an alternative presentation of parallel
processes, as \emph{flat non-empty lists of individual computations}, written
$P,Q \bnfis M_1 \flatpar M_2 \flatpar \ldots \flatpar M_n$, and also establish
strong normalisation for such processes. This presentation is of interest
because instead of the algebraically more natural tree-shaped presentation of
processes discussed so far, the flat list presentation is what one might
consider more natural for implementation.

We write $\Gamma \types P$ for the (skeletal) typing of such processes, 
with the typing rules requiring that each
of the individual computations $M_i$ in $P$ is well-typed at some computation type.

We define the small-step operational semantics of such processes using two relations:
\begin{mathpar}
  \coopinfer{}{
    M_i \reduces M_i'
  }{
    M_1 \flatpar \ldots \flatpar M_i \flatpar \ldots \flatpar M_n \reduces M_1 \flatpar \ldots \flatpar M_i' \flatpar \ldots \flatpar M_n
  }
  \and
  \coopinfer{}{
    P \overset{\copout{op}{V}}{\reduces} Q
  }{
    P \reduces Q
  }
\end{mathpar}
and
\begin{mathpar}
  \coopinfer{}{
  }{
    (\tmopout{op}{V}{M_1}) \flatpar M_2 \flatpar \ldots \flatpar M_n 
     \overset{\copout{op}{V}}{\reduces} 
     M_1 \flatpar \tmopin{op}{V}{M_2} \flatpar \ldots \flatpar \tmopin{op}{V}{M_n}
  }
  \and
  \coopinfer{}{
    M_2 \flatpar \ldots \flatpar M_n
    \overset{\copout{op}{V}}{\reduces}
    M'_2 \flatpar \ldots \flatpar M'_n
  }{
    M_1 \flatpar M_2 \flatpar \ldots \flatpar M_n 
    \overset{\copout{op}{V}}{\reduces} 
    \tmopin{op}{V}{M_1} \flatpar M'_2 \flatpar \ldots \flatpar M'_n
  }
\end{mathpar}

The first reduction relation $P \reduces Q$ expresses how processes reduce at
the top level: either one of the individual computations makes a step, or one of
the individual computations issues a signal $\op$ with payload $V$ that gets
propagated to other processes as an interrupt.

The second reduction relation $P \overset{\copout{op}{V}}{\reduces} Q$ then
formalises how these signals are propagated to other processes as interrupts: if
the signal is issued in the leftmost process, then the first rule propagates it
immediately to all the processes to the right of it, while nested repeated
applications of the second rule also propagate the signal to the processes to the left
of it. The second rule is due to the formal inductive definition of non-empty lists of processes. In a more informal presentation, we could replace the two rules with a single rule of the form:
\begin{mathpar}
  \coopinfer{}{
  }{
    M_1 \flatpar \ldots \flatpar \tmopout{op}{V}{M_i} \flatpar \ldots \flatpar M_n 
     \overset{\copout{op}{V}}{\reduces} 
     \tmopin{op}{V}{M_1} \flatpar \ldots \flatpar M_i \flatpar \ldots \flatpar \tmopin{op}{V}{M_n}
  }
\end{mathpar}

Similarly to the normalisation of the tree-shaped model of parallel processes in
\cref{sec:sn-parallel:tree-shaped}, in the absence of general recursion and
without adding reinstallable interrupt handlers, and when the individual
computations $M_i$ are effect-typed using finite effect annotations $\i$ (as in
\cref{sec:sn-parallel:tree-shaped}), we can prove that this variant of parallel
processes is also strongly normalising. 

\begin{theorem}
  Every (skeletally) well-typed parallel process $\Gamma \types P$ is strongly
  normalising.
\end{theorem}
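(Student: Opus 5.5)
We prove the statement by mirroring the proof of \cref{thm:sn-par:sn}: by well-founded induction on a lexicographic measure we show that every reduct of a flat process $P = M_1 \flatpar \ldots \flatpar M_n$ is strongly normalising. As stated before the theorem, we assume each $M_i$ is also effect-typed with a finite annotation $\Gamma \types M_i : \tycomp{X_i}{(\o_i,\i_i)}$. Flat processes have no parallel shape, so the shape measure disappears. Instead we use the triple
\[
  \Big(\textstyle\sum_i \sizei{\i_i},\ \sum_i \mathit{max}\tmout(M_i),\ \sum_i \mathit{max}(M_i)\Big)
\]
under the lexicographic order. The last two components are well defined by \cref{thm:sn-seq:sn}, or by \cref{cor:sn-seq:simulation} for the effect-typed computations. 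We first establish subject reduction for flat processes, so that the measure is defined at every reduct. A computation step preserves the effect type by Ahman and Pretnar's preservation theorem. A broadcast retypes each receiving $M_j$ at $\opincomp{op}{(\o_j,\i_j)}$, and keeps the annotation of the emitting computation, since the premise of the signal rule already has $\op \in \o$.

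\textbf{Local steps.} If $M_i \reduces M_i'$, the effect annotation can stay fixed by preservation, so the first component is unchanged. We must also check that the signal count does not grow. Every normal form of $M_i'$ is a normal form of $M_i$, so $\mathit{max}\tmout(M_i') \le \mathit{max}\tmout(M_i)$. Finally $\mathit{max}(M_i') < \mathit{max}(M_i)$, so the last component strictly decreases.

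\textbf{Broadcast steps.} We first prove, by induction on the derivation of $\overset{\copout{op}{V}}{\reduces}$, that such a step is exactly the informal one-rule version. One computation $M_i = \tmopout{op}{V}{M_i'}$ becomes $M_i'$, and every other $M_j$ becomes $\tmopin{op}{V}{M_j}$.
\begin{itemize}
  \item The emitter keeps its annotation, and $\mathit{max}\tmout(M_i') = \mathit{max}\tmout(M_i) - 1$, because normal forms of $M_i$ are exactly $\tmkw{\uparrow}\,\op$ wrapped around normal forms of $M_i'$.
  \item If some receiver $j$ has $\op \in \i_j$, then \cref{prop:sn-par:act-size-decr} gives a strict decrease of $\sizei{\i_j}$. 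No other annotation grows, so the first component strictly decreases, and the remaining components may change arbitrarily.
  \item Otherwise every receiver has $\op \notin \i_j$. Then the sizes are unchanged by \cref{prop:sn-par:act-size-decr}, and by \cref{prop:sn-par:max-signals-eq} each $\mathit{max}\tmout(\tmopin{op}{V}{M_j}) = \mathit{max}\tmout(M_j)$. The second component therefore drops by one.
\end{itemize}

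\textbf{Main obstacle.} The fiddly part is the normal-form bookkeeping. This covers the monotonicity of $\mathit{max}\tmout$ under local steps and the exact peeling of one signal at the emitter. For the emitter, the normal-form characterisation from~\cite{Ahman:LMCS} is needed: $\tmopout{op}{V}{M}$ only reduces under the signal. Compared with the tree case, the flat presentation removes the need for a shape measure, because interrupts are delivered into all receivers in one step rather than propagating through a tree.
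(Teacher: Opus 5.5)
Your proposal is correct and is essentially the paper's proof. You use the same lexicographic triple $(\sum_i \sizei{\i_i}, \sum_i \mathit{max}\tmout(M_i), \sum_i \mathit{max}(M_i))$ and the same case split as the tree-shaped \cref{thm:sn-par:sn}, minus the shape measure: local steps decrease the last component, and broadcasts decrease the first or second component via \cref{prop:sn-par:act-size-decr} and \cref{prop:sn-par:max-signals-eq}, including the exact removal of one signal at the emitter and the monotonicity of $\mathit{max}\tmout$ under local steps, which is detail the paper leaves to its Agda formalisation.
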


\begin{proof}
  The proof is similar to the proof of \cref{thm:sn-par:sn}. Here we proceed by
  induction on the lexicographic order of three measures: the sum $\sizei{P}$ of
  the sizes of the effect annotations of the individual computations in $P$, the
  sum $\mathit{max}\tmout(P)$ of the maximum number of outgoing signals in the individual
  computations in $P$, and the sum $\mathit{max}(P)$ of the maximum number of reduction
  steps of the individual computations in $P$, and we show that all the reducts
  of $P$ are strongly normalising. Further details can be found in our Agda
  formalisation.
\end{proof}

We leave formally relating this flat-lists-based model of parallel processes to
the original tree-shaped model of parallel processes discussed in
\cref{sec:aeff:par,sec:sn-parallel:tree-shaped} for future work.

%% file: conclusion.tex
\section{Conclusion}
\label{sec:conclusion}

We have demonstrated that the Girard-Tait-style reducibility approach to proving
strong normalisation, and its $\top\top$-lifting extension to effectful
languages by Lindley and Stark, are well-suited for proving normalisation
properties of \lambdaAEff, a core calculus for asynchronous programming with
algebraic effects~\cite{Ahman:POPL,Ahman:LMCS}. On the one hand, studying the
normalisation properties of this calculus is desirable because it allows one to
model many compelling and natural examples, such as pre-emptive multi-threading,
(cancellable) remote function calls, multi-party applications, and a parallel
variant of runners of algebraic effects. On the other hand, studying the
normalisation properties of \lambdaAEff~is also interesting because it
contains a number of advanced and challenging features. Our results are also
formalised in Agda.

In future work, we plan to extend our results to the other extensions that Ahman
and Pretnar proposed for \lambdaAEff~\cite{Ahman:LMCS}: modal types that allow
signals and interrupts to carry higher-order data, a stateful variant of
reinstallable interrupt handlers, and the dynamic spawning of new processes.
Moreover, observe that as the use of the sum type in the reinstallable interrupt
handlers of \cref{sec:sn-reinstallable:sn} somewhat resembles Elgot
iteration~\cite{bloom-ezik:iteration-theories}, it would be interesting to study
a formal connection between the two. In a related direction, we also want to
investigate how to present our strongly normalising reinstallable interrupt
handlers without making explicit use of sum types, closer to the original
abstract style of Ahman and Pretnar.

%% file: appendix.tex

\section{Proof of \cref{prop:sn-seq:interrupt-handlers-sn}}
\label[appendix]{appendix:prop:sn-seq:interrupt-handlers-sn}

First, we state a proposition saying that removing a signal from inside a
computation does not increase the maximum number of reduction steps starting
from it. For better readability, below we omit explicit actions of (weakening)
renamings, and some type and context indices.

\begin{proposition}
  \label{prop:max-steps-infix-signal} 
  Suppose $K \capp \big(\tmopout{op}{V}{N}\big) \in \SN$. Then the following
  statements hold:
  \begin{itemize}
    \item $K \capp N \in \SN$ and
    \item $\mathit{max}(K \capp N) \leq \mathit{max}\big(K \capp (\tmopout{op}{V}{N})\big).$
  \end{itemize} 
\end{proposition}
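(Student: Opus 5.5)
The plan is to prove both items together by a \emph{simulation argument}. I would show that every reduction step of a computation without the extra signal can be matched by at least one reduction step of the corresponding computation with the signal. The invariant between the two computations is that they differ only by one signal sitting in evaluation position.

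\textbf{The relation.} Define $L \preceq L'$ to hold when there is an evaluation context $\E$ (from \cref{fig:seq-reduction-rules}), an operation $\op$, a value $V$ and a computation $M$ with $L = \E[M]$ and $L' = \E[\tmopout{op}{V}{M}]$. Since $K \capp -$ is a particular evaluation context, we have $K \capp N \preceq K \capp (\tmopout{op}{V}{N})$. Using the relation rather than the fixed shape $K \capp -$ matters because reductions change $K$: for example, \eqref{r9} turns an interrupt-handler frame into a $\tmkw{let}$ frame. The evaluation-context grammar is closed under all such changes.

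\textbf{Key lemma.} If $L \preceq L'$ and $L \reduces L_2$, then there is $L_2'$ with $L' \reduces^{+} L_2'$ and $L_2 \preceq L_2'$. I would prove it by case analysis on where the redex of $L \reduces L_2$ sits relative to the hole of $\E$.
\begin{itemize}
  \item If the redex lies strictly inside $M$, the same step fires under $\E[\tmopout{op}{V}{-}]$.
  \item If the redex lies inside $\E$ and does not touch the hole, the same step fires in $L'$. This covers, for instance, a step inside a signal payload frame of $\E$. The resulting context is again an evaluation context, possibly with one frame rewritten by \eqref{r6} or \eqref{r8}.
  \item If the redex straddles the hole, i.e.\ it is formed by the innermost frame of $\E$ and the head constructor of $M$, then in $L'$ I first commute the signal one frame outward. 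I use \eqref{r3} for a $\tmkw{let}$ frame, \eqref{r8} for an interrupt frame, and \eqref{r6} for an interrupt-handler frame. Then I fire the same redex underneath the signal. For example, \eqref{r2} becomes \eqref{r3} followed by \eqref{r2}, and \eqref{r7}, \eqref{r9} or \eqref{r10} become \eqref{r8} followed by the same rule. In each case the signal ends up in evaluation position directly above the reduct, so $L_2 \preceq L_2'$ holds with a shortened context.
\end{itemize}
Variables free in $V$, such as a $p$ bound by a handler frame of $\E$, remain correctly bound because such frames only become $\tmkw{let}\;p$ frames, as in \eqref{r9}. No substitution ever acts on the hole's surroundings in a way that would capture them.

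\textbf{Conclusion.} I would then argue by well-founded induction on $\mathit{max}(L')$, which exists because $L' \in \SN$. The claim is that $L \preceq L'$ implies $L \in \SN$ and $\mathit{max}(L) \le \mathit{max}(L')$. For each reduct $L_2$ of $L$, the lemma gives $L_2'$ with $\mathit{max}(L_2') \le \mathit{max}(L') - 1$. The induction hypothesis then gives $L_2 \in \SN$ and $\mathit{max}(L_2) \le \mathit{max}(L') - 1$, so $\mathit{max}(L) \le \mathit{max}(L')$. Instantiating this claim at $\E = K$ yields both items of the proposition.

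The main obstacle is the boundary case analysis in the key lemma. One has to verify that every rule interacting with the hole can be preceded by a commuting step for the signal, and that the resulting context stays an evaluation context. The \eqref{r9} case, where a handler frame turns into a $\tmkw{let}$ frame whose body contains the hole under an interrupt, needs the most care.
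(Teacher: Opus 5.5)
\textbf{There is a genuine gap.} Your key lemma is false for the relation $\preceq$ as you define it. The failure is exactly the \eqref{r9} case you flag: a $\tmkw{let}$ frame whose \emph{body} contains the hole is not an evaluation-context frame, since the grammar only has $\tmlet{x}{\E}{N}$. So evaluation contexts are not closed under the rewrites you allow inside $\E$.

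Concretely, take $\E = \tmopin{op}{V}{\tmwith{op}{x}{M_\op}{p}{\tmopout{op''}{U}{[\,]}}}$ and $M = \tmreturn{p}$. Let $L' = \E[\tmopout{op'}{W}{M}]$ with $\op' \neq \op''$. Then $L = \E[M] \reduces L_2 = \tmlet{p}{M_\op[V/x]}{\tmopin{op}{V}{\tmopout{op''}{U}{\tmreturn{p}}}}$ by \eqref{r9}, and this redex lies entirely inside $\E$.
\begin{itemize}
\item Any $L_2'$ with $L_2 \preceq L_2'$ puts the extra signal either above the $\tmkw{let}$ or inside $M_\op[V/x]$. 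Either way it keeps this $\tmkw{let}$ body verbatim.
\item No rule lets one signal overtake another. So whenever the handler fires in a reduct of $L'$, the resulting $\tmkw{let}$ body is $\tmopin{op}{V}{R}$ with $R$ one of $\tmopout{op''}{U}{\tmopout{op'}{W}{\tmreturn{p}}}$, $\tmopout{op'}{W}{\tmreturn{p}}$, or $\tmreturn{p}$. It is never the required one.
\end{itemize}
Hence no matching $L_2'$ exists, and your induction on $\mathit{max}(L')$ over general $\E$ is unsupported. Even without the intermediate signal, your recipe ``the same step fires in $L'$'' leaves the extra signal inside a $\tmkw{let}$ body; you would have to hoist it first.

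\textbf{The generalisation is also unnecessary.} Your stated reason for it (that reductions change $K$) is mistaken. Restrict $\preceq$ to contexts that are continuations $K$, and the invariant is preserved:
\begin{itemize}
\item $K$ has only $\tmkw{let}$ and interrupt frames. No rule fires on a $\tmkw{let}$ or interrupt whose immediate subterm is again a $\tmkw{let}$ or interrupt.
\item So every redex of $K \capp M$ either lies inside $M$ or is formed by the innermost frame of $K$ together with the head of $M$.
\item In the latter case, your commute-then-fire step (\eqref{r3} or \eqref{r8} first) leaves the signal directly above the reduct, under a strictly shorter continuation.
\item Handler frames therefore never enter the context, and the \eqref{r6} and variable-capture concerns disappear.
\end{itemize}
With that restriction your simulation goes through.

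\textbf{The paper's route is much shorter.} It avoids step-by-step simulation entirely. It first hoists the signal out of all of $K$, giving $K \capp (\tmopout{op}{V}{N}) \reduces^{|K|} \tmopout{op}{V}{K \capp N}$ by \eqref{r3} and \eqref{r8}. It then replays any reduction sequence of $K \capp N$ under the signal via \eqref{r13}. This yields $K \capp N \in \SN$ and even $\mathit{max}(K \capp N) + |K| \le \mathit{max}(K \capp (\tmopout{op}{V}{N}))$ in two lines.
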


\begin{proof}
  The strongly normalising computation $K \capp \big(\tmopout{op}{V}{N}\big)$ reduces, in $|K|$ steps, to $\tmopout{op}{V}{K \capp N}$, which is thus also strongly normalising. It is then clear that $K \capp N \in \SN$.

  For any reduction sequence $r$ starting from $K \capp N$, there is a reduction
  sequence, not shorter than $r$, starting from $K \capp
  \big(\tmopout{op}{V}{N}\big)$: first reduce $K \capp
  \big(\tmopout{op}{V}{N}\big)$ to $\tmopout{op}{V}{K \capp N}$ and then perform
  the sequence $r$ under the signal. This proves the desired inequality.
\end{proof}

\paragraph*{Proof of \cref{prop:sn-seq:interrupt-handlers-sn}:}

\newcommand{\refreduces}[1]{\overset{\eqref{#1}}{\reduces}}

The proof proceeds by induction on the lexicographic order of $|K| + \mathit{max}(K \capp
N)$ and the structure of $N$. We show that all reducts of $K \capp
\tmwith{op}{x}{M_{\op}}{p}{N}$ are strongly normalising. The rules
\eqref{r4} and \eqref{r10} are easy: they decrease $|K|$ and leave $K\capp N$
unchanged, so we omit these cases. The remaining applicable rules are \eqref{r9},
\eqref{r6}, and \eqref{r13}.

We first consider the reduction rule \eqref{r9}:
\[
  \begin{array}{l@{~~} l}
    & K \capp \tmwith{op}{x}{M_\op}{p}{N}
    \\[0.5ex]
    \refreduces{r9} & K' \capp \big(\tmlet{p}{M_\op[V/x]}{\tmopin{op}{V}{N}}\big), 
  \end{array}
\]
where $K = K'\circ\copin{op}{V}$ and $p\not\in fv(V)$.
Observe that
\[
  K' \capp \big(\tmlet{p}{M_\op[V/x]}{\tmopin{op}{V}{N}}\big) = \left(K'\circ(p)(\tmopin{op}{V}{N})\right) \capp M_\op[V/x].
\]
We note that the term on the right is strongly normalising, because by assumption we have $M_\op[V/x] \in \CRed_{\typromise X}$, and by \cref{prop:sn-seq:interrupt-handlers-auxiliary-results} it holds that $K'\circ(p)(\tmopin{op}{V}{N}) \in \KRed_{\typromise X}$.

Next, we consider the reduction rule \eqref{r6}:
\[
  \begin{array}{l@{~~} l}
    & K \capp \tmwith{op}{x}{M_\op}{p}{N}
    \\[0.5ex]
    \refreduces{r6} & K \capp \left(\tmopout{op'}{V}{\tmwith{op}{x}{M_\op}{p}{N'}}\right), 
  \end{array}
\]
where $N = \tmopout{op'}{V}{N'}$ and $p\not\in fv(V)$.
By \cref{prop:sn-seq:signals-sn}, it suffices to show that
\[
  K \capp \tmwith{op}{x}{M_\op}{p}{N'} \in \SN.
\]
Here we want to apply the induction hypothesis, because $N'$ is structurally smaller than $N = \tmopout{op'}{V}{N'}$, but we need to be careful not to increase the first induction measure. But $|K|$ stays the same, and by \cref{prop:max-steps-infix-signal} we have that $\mathit{max}(K \capp N') \leq \mathit{max}(K \capp N)$.

Finally, we consider the reduction rule \eqref{r13}, i.e., the evaluation context rule:
\[
  \begin{array}{l@{~~} l}
    & K \capp \tmwith{op}{x}{M_\op}{p}{N}
    \\[0.5ex]
    \refreduces{r13} & K \capp \tmwith{op}{x}{M_\op}{p}{N'}, 
  \end{array}
\]
where it is assumed that $N \reduces N'$.
Then we have also that $K \capp N \reduces K \capp N'$, so the first induction
measure is decreased, and by
\cref{prop:sn-seq:interrupt-handlers-auxiliary-results} we have that $K \circ
(p)N' \in \KRed_{\typromise X}$. Therefore, all the conditions for applying the
induction hypothesis are satisfied.
\qed


\section{Proof of \cref{prop:sn-reinstallable:interrupt-handlers-sn}}
\label[appendix]{appendix:prop:sn-reinstallable:interrupt-handlers-sn}

First, we state two additional auxiliary results about the interaction
of strong normalisation with the application of continuations $K$ to 
computations directly in redex form.

\begin{proposition}\label{prop:let-return}
  If $K\capp N[V/x]\in\SN$, then $K\capp(\tmlet{x}{\tmreturn{V}}{N})\in\SN$.
\end{proposition}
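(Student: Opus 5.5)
We prove the statement by induction on $|K|$, showing that every one-step reduct of $K \capp (\tmlet{x}{\tmreturn{V}}{N})$ is strongly normalising. This mirrors the proof of \cref{prop:sn-seq:awaiting-sn}. To make the induction go through, we generalise slightly: the hypothesis $K \capp N[V/x] \in \SN$ is kept, and we also induct on $\mathit{max}(K \capp N[V/x])$. Concretely, the measure is the lexicographic order on $\mathit{max}(K\capp N[V/x])$ and then $|K|$.

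The redex $\tmlet{x}{\tmreturn{V}}{N}$ sits in the hole of $K$, and $K$ is built only from sequencing frames and interrupt frames. So a reduct of $K \capp (\tmlet{x}{\tmreturn{V}}{N})$ arises in one of three ways.

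\textbf{(i) The redex fires.} By \eqref{r2} we obtain $K \capp N[V/x]$, which is strongly normalising by assumption.

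\textbf{(ii) A reduction happens inside $V$ or $N$.} This cannot occur. Values do not reduce, and $N$ sits in the body of a $\tmkw{let}$ after a $\tmkw{return}$, which is not an evaluation-context position. The only way to reach $N$ is through rule \eqref{r2} itself.

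\textbf{(iii) A reduction happens in $K$, possibly interacting with the hole.} Since $\tmlet{x}{\tmreturn{V}}{N}$ is neither a $\tmkw{return}$, a signal, an interrupt handler nor an $\tmkw{await}$, the innermost frame of $K$ cannot interact with it. Rules \eqref{r2}--\eqref{r11} all need the inner computation to have one of those shapes. Rule \eqref{r3} and its relatives would need a signal, handler or await at the head, but our term is a $\tmkw{let}$. Hence only steps entirely within the values and computations stored in the frames of $K$ remain. For a frame $(y)N'$ these are steps of the form $\tmlet{y}{\cdot}{N'}$ with $N'$ reducing; since those bodies are not in evaluation position, such steps are also impossible unless the inner computation is a $\tmkw{return}$. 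So case (iii) yields no reducts.

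With (ii) and (iii) ruled out, the only reduct is the one in (i), and the proposition follows immediately; the induction on $|K|$ is barely needed.

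The main obstacle is making the "no other redex" argument of (ii) and (iii) rigorous in the formalisation. We would state an inversion lemma: if $K \capp M \reduces L$ and $M$ is not of the form $\tmreturn{W}$, a signal, a handler or an await, then $L = K \capp M'$ with $M \reduces M'$. We prove it by induction on $K$, using that the continuation grammar only produces $\tmkw{let}$ and interrupt frames, whose interaction rules inspect exactly those head shapes. Applying the lemma with $M = \tmlet{x}{\tmreturn{V}}{N}$, whose only reduct is $N[V/x]$, gives the result.
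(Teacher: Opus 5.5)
Your argument is correct and is essentially the paper's proof: the only reduct of $K \capp (\tmlet{x}{\tmreturn{V}}{N})$ is $K \capp N[V/x]$, which is strongly normalising by hypothesis. Your inversion lemma (by induction on $K$, using that $\tmkw{let}$- and interrupt-frames only fire on $\tmkw{return}$s, signals, interrupt handlers or $\tmkw{await}$s) is a sound way to justify the ``only'', but the lexicographic induction on $\mathit{max}(K\capp N[V/x])$ and $|K|$ is never used and can be dropped, since the inductive characterisation of $\SN$ applies directly.
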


\begin{proof}
  We observe that the computation $K\capp(\tmlet{x}{\tmreturn{V}}{N})$ reduces
  to only $K\capp N[V/x]$, which is strongly normalising by assumption.
\end{proof}

\begin{proposition}\label{prop:match-inject}
  \mbox{}
  \begin{itemize}
    \item If $K\capp M[V/x]\in\SN$, then $K\capp(\tmmatch{(\tminl{V})}{\tminl{x}\mapsto M,\tminr{y}\mapsto N}{})\in\SN$.
    \item If $K\capp N[W/y]\in\SN$, then $K\capp(\tmmatch{(\tminr{W})}{\tminl{x}\mapsto M,\tminr{y}\mapsto N}{})\in\SN$.
  \end{itemize}
\end{proposition}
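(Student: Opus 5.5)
This mirrors \cref{prop:let-return}. For the first item, I would show that $K\capp(\tmmatch{(\tminl{V})}{\tminl{x}\mapsto M,\tminr{y}\mapsto N}{})$ has exactly one one-step reduct, namely $K\capp M[V/x]$. Since that reduct is in $\SN$ by assumption, the inductive characterisation of $\SN$ then gives membership immediately. The second item is symmetric, with the reduct $K\capp N[W/y]$.

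The only real work is justifying that unique-reduct claim. I would proceed by induction on $|K|$, or equivalently on the structure of $K$ read inside-out, and classify the reductions of $K\capp L$ where $L$ is the $\tmkw{match}$ redex. By the definition of continuation application, $K\capp L$ is $L$ wrapped in nested $\tmkw{let}$-frames and interrupt frames $\tmopin{op}{V'}{-}$.

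Every reduction rule of \cref{fig:seq-reduction-rules} that acts at a frame boundary requires the inner computation to have a specific head form. Rules \eqref{r2}--\eqref{r5} need $\tmkw{return}$, a signal, an interrupt handler, or $\tmkw{await}$ under a $\tmkw{let}$. Rules \eqref{r7}--\eqref{r11} need one of these forms under an interrupt. A $\tmkw{match}$ has none of these head forms. Hence no frame in $K$ interacts with $L$ at the boundary, and the only redex inside the hole is $L$ itself. The continuation $N$ in a $\tmkw{let}$-frame and the payload values in $K$ are not in evaluation position, because values never reduce and $\tmkw{let}$-bodies are not evaluation contexts. Finally, $L$ has exactly one redex, the match-reduction on the injection, since its branches $M$ and $N$ sit under binders outside evaluation contexts.

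I expect no real obstacle; the only delicate step is this case analysis confirming that the $\tmkw{match}$ head blocks every frame-interaction rule. The same argument also supplies the unique-reduct step in \cref{prop:let-return}, which can be cited as the template. In the Agda setting this becomes an inversion lemma on the reduction relation for $K\capp L$ with $L$ a non-$\tmkw{return}$, non-algebraic redex.
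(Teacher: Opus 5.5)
Your proposal is correct and is essentially the paper's proof. The paper likewise observes that $K\capp(\tmkw{match}\;(\tminl{V})\;\ldots)$ has $K\capp M[V/x]$ as its only one-step reduct (and symmetrically $K\capp N[W/y]$ for $\tminr{W}$), which is in $\SN$ by assumption, and concludes by the inductive characterisation of $\SN$; your case analysis only spells out what the paper leaves as an observation, namely that neither the $\tmkw{match}$ head nor the $\tmkw{let}$/interrupt frames of $K$ have the head forms required by rules (r2)--(r5) and (r7)--(r11), and that branches, $\tmkw{let}$-bodies and values are not in evaluation position.
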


\begin{proof}
  We observe that $K\capp(\tmmatch{(\tminl{V})}{\tminl{x}\mapsto
  M,\tminr{y}{}\mapsto N}{})$ reduces to only $K\capp M[V/x]$, which is strongly
  normalising by assumption, and analogously for the other statement about
  pattern-matching on a right injection.
\end{proof}

\paragraph*{Proof of \cref{prop:sn-reinstallable:interrupt-handlers-sn}:}

The proof proceeds by induction on the lexicographic order of three measures:
$|K|_\downarrow$, $|K| + \mathit{max}(K\capp N)$, and the structure of $N$, where
$|K|_\downarrow$ denotes the number of interrupts $\copin{op}{V}$ in $K$, and we
show that all reducts of the computation $K \capp
\tmwith{op}{x}{M_\op}{p}{N}$ are strongly normalising. 
For better readability, we omit explicit actions of (weakening) renamings and
context indices from the reducibility predicates.

Below we discuss the case for the most interesting reduction rule \eqref{r9} in detail:
\[
  \begin{array}{l@{~~} l}
    & K \capp \tmwith{op}{x}{M_\op}{p}{N}
    \\[0.5ex]
    \refreduces{r9} & K' \capp \big(\tmlet{p}{(\tmlet{y}{M_\op[V/x]}{\tmmatch{y}{\tminl{z}{}\mapsto \tmreturn{z},\tminr{w}{}\mapsto R}{}})\\[0.5ex]&\hspace{0.9cm}}{\tmopin{op}{V}{N}}\big)
  \end{array}
\]
where $K = K'\circ\copin{op}{V}$ and $p\not\in fv(V)$, and $R = \tmwith{op}{x}{M_\op}{q}{\tmreturn{q}}$.

We first notice that 
\[
  \begin{array}{l@{~~} l}
    & K' \capp \big(\tmlet{p}{(\tmlet{y}{M_\op[V/x]}{\tmmatch{y}{\tminl{z}{}\mapsto \tmreturn{z},\tminr{w}{}\mapsto R}{}})\\[0.5ex]&\hspace{0.9cm}}{\tmopin{op}{V}{N}}\big)
    \\[1ex]
    = & \big(K'\circ(p)(\tmopin{op}{V}{N}) \circ (y) (\tmmatch{y}{\tminl{z}\mapsto \tmreturn{z},\tminr{w}{}\mapsto R})\big) \capp M_\op[V/x]
  \end{array}
\]
Next, by assumption, we have that $M_\op[V/x]\in\CRed_{\tysum{\typromise{X}}{\tyunit}}$, so it suffices to show
\[
  K' \circ (p)(\tmopin{op}{V}{N}) \circ (y)(\tmmatch{y}{\tminl{z} \mapsto
  \tmreturn{z},\tminr{w}\mapsto R}{}) \in \KRed_{\tysum{\typromise{X}}{\tyunit}}.
\]
Now, by the definition of $\KRed$ and \cref{prop:let-return}, it suffices that for all $V'\in\VRed_{\tysum{\typromise{X}}{\tyunit}}$,
\[
  \big(K' \circ (p)(\tmopin{op}{V}{N})\big) \capp \tmmatch{V'}{\tminl{z} \mapsto \tmreturn{z},\tminr{w} \mapsto R}{} \in \SN, 
\]
which by the definition of $\SRed$ is equivalent to having to prove that
\[
  K' \circ (p)(\tmopin{op}{V}{N}) \circ ((z)(\tmreturn{z}), (w)R) \in \SRed_{\typromise{X},\tyunit}.
\]
Next, we observe that by the definition of $\SRed$ and \cref{prop:match-inject},
this is equivalent to having to prove the following two statements:
\begin{itemize}
  \smallskip
  \item $(K'\circ (p)\tmopin{op}{V}{N})\capp(\tmreturn{W})\in\SN, \text{for all } W\in\VRed_{\typromise{X}}$, and 
  \smallskip
  \item $(K'\circ (p)\tmopin{op}{V}{N})\capp (\tmwith{op}{x}{M_\op}{q}{\tmreturn{q}})\in\SN$.
\end{itemize}

Note that the first of these statements is exactly the definition of $K' \circ (p)(\tmopin{op}{V}{N}) \in \KRed_{\typromise{X}}$, which holds by \cref{prop:sn-seq:interrupt-handlers-auxiliary-results} and the hypothesis that $K \circ (p)N \in \KRed_{\typromise{X}}$.

For proving the second of the above statements, we apply the induction hypothesis.
For this, we begin by observing that the first induction measure has decreased:
\[
  |K' \circ (p)(\tmopin{op}{V}{N}) |_\downarrow = |K'|_\downarrow < |K'\circ\copin{op}{V}|_\downarrow = |K|_\downarrow.
\]
Secondly, we observe that $K' \circ (p)(\tmopin{op}{V}{N}) \circ (q)(\tmreturn{q})\in\KRed_{\typromise{X}}$, because it is equivalent to the statement $K' \circ (p)(\tmopin{op}{V}{N}) \in \KRed_{\typromise{X}}$, which holds by the above.
Finally, we note that $\big(K' \circ (p)(\tmopin{op}{V}{N})\big) \capp (\tmreturn{q})\in\SN$, again because $K'\circ (p)(\tmopin{op}{V}{N}) \in \KRed_{\typromise{X}}$.
Therefore, all the conditions to be able to apply the induction hypothesis are
satisfied.
This concludes the proof for the reduction rule \eqref{r9}.
\qed


\section{Proof of \cref{prop:sn-par:act-size-decr}}
\label[appendix]{appendix:prop:sn-par:act-size-decr}

We can view a finite effect annotation $\i \in I$ as a finite set
$paths(\i)$ of paths from the root to the internal nodes of $\i$: a \emph{path} is a tuple $(\tm{op}_1, \dotsc, \tm{op}_n)$ of operations.
We write $paths(\o,\i)$ for $paths(\i)$.
Next, we note that $\lvert paths(\i) \rvert = \sizei{\i}$, and that
$\op$ acts on an $\i$ path-by-path, as
\[
  paths(\opincomp{op}{(\o,\i)}) = \{ \opincomp{op}{p} \mid p \in paths(\i) \},
\]
where the action $\opincomp{op}{p}$ of an interrupt $\op$ on a path $p$ is
defined as follows:
\[
\opincomp{op}{p} =
  \begin{cases}
    p' & \mbox{if } p = (\tm{op}, p') \\
    p & \mbox{otherwise, i.e., if $p$ is empty or its first operation is not $\tm{op}$}
  \end{cases}
\]
Now, to prove the proposition, we consider the following two cases based on
whether $\op \in \i$.

If $\op \not\in \i$, then there is no path in $paths(\i)$ which starts
with $\op$, so in that case $\opincomp{op}{(\o,\i)} = (\o,\i)$, and therefore
$\sizei{\opincomp{op}{(\o,\i)}} = \sizei{\i}$.

If $\op \in \i$, then it holds that the length-one path $(\op) \in
paths(\i)$ and that the empty path $() \in paths(\i)$ (the latter holds if
and only if $\i \neq \emptyset$). Acting with $\op$ sends both $(\op)$ and $()$
to the same element $()$. From this, the desired strict inequality follows. More
formally,
\[
  \begin{array}{l@{~~} l}
    & \sizei{\opincomp{op}{(\o,\i)}} \\[0.5ex]
    = & \lvert paths(\opincomp{op}{(\o,\i)})\rvert \\[0.5ex]
    = & \lvert \{ \opincomp{op}{p} \mid p \in paths(\i) \} \rvert \\[0.5ex]
    = & \lvert \{ \opincomp{op}{p} \mid p \in paths(\i)\setminus\{() \} \} \rvert \\[0.5ex]
    \leq & \lvert paths(\i) \setminus \{()\} \rvert \\[0.5ex]
    < & \lvert paths(\i) \rvert \\[0.5ex]
    = & \sizei{\i}.
  \end{array}
\]
\qed